\numberwithin{equation}{section}
\newtheorem{theorem}{Theorem}[section]
\newtheorem{lemma}[theorem]{Lemma}
\newtheorem{coro}[theorem]{Corollary}
\newtheorem{prop}[theorem]{Property}
\theoremstyle{definition}
\newtheorem{define}[theorem]{Definition}
\newtheorem{remark}[theorem]{Remark}
\begin{document}

\title[Novikov peakons and B-Toda lattices]{An Application of Pfaffians to multipeakons of the Novikov equation and the finite Toda lattice of BKP type}
\author{Xiang-Ke Chang}
\address{LSEC, ICMSEC, Academy of Mathematics and Systems Science, Chinese Academy of Sciences, P.O.Box 2719, Beijing 100190, PR China; and School of Mathematical Sciences, University of Chinese Academy of Sciences, Beijing 100049, PR China}
\email{changxk@lsec.cc.ac.cn}
\thanks{Corresponding author: Shi-Hao Li}


\author{Xing-Biao Hu}
\address{LSEC, ICMSEC, Academy of Mathematics and Systems Science, Chinese Academy of Sciences, P.O.Box 2719, Beijing 100190, PR China; and School of Mathematical Sciences, University of Chinese Academy of Sciences, Beijing 100049, PR China}
\email{hxb@lsec.cc.ac.cn}


\author{Shi-Hao Li}
\address{LSEC, ICMSEC, Academy of Mathematics and Systems Science, Chinese Academy of Sciences, P.O.Box 2719, Beijing 100190, PR China; and School of Mathematical Sciences, University of Chinese Academy of Sciences, Beijing 100049, PR China}
\email{lishihao@lsec.cc.ac.cn}

\author{Jun-Xiao Zhao}
\address{School of Mathematical Sciences, University of Chinese Academy of Sciences, Beijing 100049, PR China; and Glasgow International College, 56 Dumbarton Road, Glasgow, G11 6NU, UK}
\email{jxzhao@ucas.ac.cn; junxiao.zhao@kaplan.com}
\subjclass[2010]{37K10,  35Q51, 15A15}
\date{}

\dedicatory{}

\keywords{Multipeakons, Pfaffian, Novikov equation, Toda lattice of BKP type}

\begin{abstract}
The Novikov equation is an integrable analogue of the Camassa-Holm equation with a cubic (rather than quadratic) nonlinear term. Both these equations support a special family of weak solutions called multipeakon solutions.  In this paper, an approach involving Pfaffians  is applied to study multipeakons of the Novikov equation. First, we show that the Novikov peakon ODEs describe an isospectral flow on the manifold cut out by certain Pfaffian identities. Then, a link between the Novikov peakons and the finite Toda lattice of BKP type (B-Toda lattice) is established based on the use of Pfaffians.  Finally, certain generalizations of the Novikov equation and the finite B-Toda lattice are proposed together with special solutions. To our knowledge, it is the first time that the peakon problem is interpreted 
in terms of Pfaffians.

\end{abstract}

\maketitle
\tableofcontents
\section{Introduction}

It is essential to search for simple mathematical models for effectively describing nonlinear phenomena in nature, for example, the breakdown of regularity. Whitham, in his book \cite{whitham1974linear},  emphasized the need for a water wave model exhibiting a soliton interaction, the existence of peaked waves, and, at the same time, allowing for breaking waves. In 1993, Camassa and Holm \cite{camassa1993integrable} derived such a shallow water wave model
\begin{equation}\label{eq:ch}
m_t+(um)_x+mu_x=0, \qquad m=u-u_{xx},
\end{equation}
nowadays known as the Camassa-Holm (CH) equation.   The CH equation admits, in particular, peakon solutions (simply called peakons) as its solitary wave solutions with peaks. In the case of $n$ peaks the solution takes the form
\[
u=\sum_{j=1}^nm_j(t)e^{-|x-x_j(t)|}.
\]
Clearly, the peakons are smooth solutions except at the peak positions $x = x_j (t )$ where the $x$ derivative of $u$ is discontinuous, forcing us to interpret peakons  in a suitable weak sense. 
The mathematics of peakons has attracted a great deal of attention since they were discovered in the CH equation.  It would be nearly impossible 
to mention all relevant contributions so we select only those which are 
directly relevant to the present paper.  
\begin{enumerate}
\item The dynamics of the CH peakons can be described by an ODE system, which was explicitly solved by the inverse spectral method in \cite{beals1999multipeakons,beals2000multipeakons} by Beals, Sattinger and Szmigielski. The closed form can be expressed in terms of Hankel determinants and orthogonal polynomials.
\item The CH peakons are orbitally stable, in the sense that a solution that initially is close to a peakon solution is also close to some peakon solution at a later time. This was proved by Constantin and Strauss \cite{constantin2000stability,constantin2002stability}.
\item There exists a bijection between a certain Krein type string problem, 
which in turn is equivalent to the CH peakon problem,  and the Jacobi spectral problem, the isospectral flow of which gives the celebrated Toda lattice \cite{beals2001peakons}. In a well defined sense,  the CH peakon ODEs could be regarded as a negative flow of the finite Toda lattice.
\end{enumerate}

Following the discovery of the CH equation, other similar equations 
were found.  The list includes: 
the Degasperis-Procesi (DP) equation \cite{degasperis1999asymptotic}, the Novikov equation \cite{hone2008integrable,novikov2009generalisations} (discovered by Vladimir Novikov), Geng-Xue (GX) equation \cite{geng2009extension} and other generalizations of CH equation \cite{chen2006two,geng2011three,li2014four,song2011new} etc., all of which fall into a category of Camassa-Holm type equations. This family of equations shares one characteristic feature: all these equations support peakon solutions. 

 Among these equations, the DP and Novikov equations have been studied the most. In particular, the DP and Novikov peakons can also be explicitly constructed by employing the inverse spectral method \cite{hone2009explicit,lundmark2005degasperis} involving 
interesting generalizations of the classical inhomogeneous string, namely a  cubic string and a dual cubic string respectively. Eventually, the solutions are expressed in terms of some bimoment determinants associated with Cauchy biorthogonal polynomials. It is worth noting that the DP and Novikov peakons are orbitally stable \cite{liu2,liu2014stability,liu1}. Therefore, as for DP and Novikov peakons, the first two aspects listed above have been well studied but the question of some connection to the Toda lattice has not been addressed so far. 

Recall that Hankel determinants appear in the expressions of multipeakons of the CH equation \cite{beals1999multipeakons}. If one substitutes a multipeakon solution into the CH peakon ODEs and computes all necessary  derivatives of the corresponding determinants, the CH peakon ODEs are nothing but certain determinant identities; this result was proven in  \cite{chang2014generalized}. In other words,  the CH peakon ODEs can be regarded as an isospectral flow on the manifold cut out by determinant identities. Moreover, the CH equation has a reciprocal link  to the first negative flow in the Korteweg-de Vries (KdV) hierarchy \cite{fuchssteiner1996some} which in turn belongs to a larger class of flows, called the AKP type \cite{hirota2004direct,jimbo1983solitons}, whose closed form solutions in terms of  the $\tau$-function are all determinantal solutions. It therefore  makes sense to expect similar determinantal description for the CH peakons. 

By contrast, the DP equation is connected with a negative flow in the Kaup-Kupershmidt (KK) hierarchy \cite{degasperis2002new}, while the Novikov equation is related to a negative flow in the Sawada-Kotera(SK) hierarchy via reciprocal transformations \cite{hone2008integrable}. The SK hierarchy belongs to BKP type, while KK hierarchy belongs to CKP type. (Note that the letters "A,B,C" refer to different types of infinite dimensional Lie groups and the corresponding Lie algebras which are associated with respective hierarchies \cite{date1981kp,jimbo1983solitons}. The transformation groups of AKP, BKP, CKP type equations are $GL(\infty)$, $O(\infty)$ and $Sp(\infty)$, respectively.)

 Due to much more complicated structures than the AKP type equations, the solutions of the BKP type equations are expressed not as determinants but as Pfaffians \cite{hirota2004direct}  (See Appendix \ref{app_pf} for more details on the Pfaffian.). Therefore,  it is natural to guess that Pfaffians might be 
 appropriate objects to describe multipeakons of the Novikov equation. After trial and error, we succeeded in rewriting its peakons solutions in terms of Pfaffians and thanks to the use of Pfaffians we were able to obtain some results on the third point above (the Toda lattice interpretation). 
  
 In summary, this paper concerns  the Novikov equation and its multipeakons interpreted in terms of Pfaffians. The novel aspects include:
\begin{enumerate}
\item \textbf{The Pfaffian perspective on the peakon problem 
is formulated for the first time.}
\item \textbf{The Novikov peakon ODEs are shown to be connected with the finite Toda lattice of BKP type (B-Toda lattice). More precisely, the Novikov peakon ODEs can be viewed as a negative flow of the finite B-Toda lattice, generalizing earlier results of \cite{beals2001peakons}. }
\item \textbf{Some generalizations of the Novikov equation and the finite B-Toda lattice are proposed together with special solutions.}
\end{enumerate}

Our paper is organized as follows: In Section \ref{sec:NV_pf}, we 
provide basic information on the Novikov equation and its peakon solutions, and then cast the peakon solutions in the language of Pfaffians, from which the Novikov peakon ODEs are interpreted as an isospectral flow on a  manifold restricted by Pfaffian identities. A correspondence between the Novikov peakon ODEs and finite B-Toda lattice together with their solutions is given in Section \ref{sec:NV_btoda}.  In Section \ref{sec:gen}, generalizations of the Novikov equation and finite B-Toda lattices are proposed together with their solutions. Section \ref{sec:con} is devoted to conclusion and discussions.

 
 \section{Novikov peakons}\label{sec:NV_pf}
In this section, we give some background on the Novikov equation and its multipeakons. Then we describe how to use Pfaffians in the peakon problem.

The Novikov equation
 \begin{equation}\label{eq:NV}
m_t+m_xu^2+3muu_x=0,\qquad m=u-u_{xx}
\end{equation}
is an integrable system with cubic nonlinearity, which was derived by V. Novikov \cite{novikov2009generalisations} in a symmetry classification of nonlocal partial differential equations and first published in the paper by Hone and Wang \cite{hone2008integrable}. It possesses a matrix Lax pair \cite{hone2008integrable} and admits a bi-Hamiltonian structure. There also exist infinitely many conservation quantities. 

The Lax pair of the Novikov equation \cite{hone2008integrable} reads:

 \begin{equation}
D_x \Phi=U
\Phi, \qquad D_t \Phi=V\Phi\label{NV_Lax}
\end{equation}
with 
\begin{equation*}
U=\begin{pmatrix}
0&\lambda m &1\\
0&0&\lambda m\\
1&0&0
\end{pmatrix},\quad
V=\begin{pmatrix}
-uu_x&{\lambda }^{-1}{u_x}-\lambda u^2m& u_x^2\\
{\lambda }^{-1}{u}&-\lambda^{-2} &-{\lambda }^{-1}{u_x}-\lambda u^2m\\
-u^2&{\lambda }^{-1}{u}& uu_x
\end{pmatrix}.
\end{equation*}
 In other words, the compatibility condition 
$$(D_xD_t- D_tD_x )\Phi= 0$$ 
implies the zero curvature condition
$$ U_t-V_x+[U,V]=0,$$
which is exactly the Novikov equation.

The Novikov equation  \eqref{eq:NV} 
admits the multipeakon solution of the form
\begin{equation*}
u=\sum_{k=1}^n m_k(t)e^{-|x-x_k(t)|}
\end{equation*}
in some weak sense if the positions and momenta satisfy the following ODE system:
\begin{align}
\dot x_{k}&=u(x_k)^2, \qquad \dot m_{k}=-m_ku(x_k)\langle u_x\rangle (x_k), \qquad 1\leq k \leq n,\label{NV_eq:peakon} 
\end{align}
where $\langle f \rangle(a)$ denotes the average of left and right limits at the point $a$.

 For the positive momenta $m_k(0)$, the ODE system \eqref{NV_eq:peakon} was solved by Hone, Lundmark and Szmigielski in \cite{hone2009explicit} who used the inverse spectral method to explicitly construct the peakon solutions.

  \subsection{Review of the work by Hone et. al.} 
Let's sketch in this subsection the basic ideas used to construct  Novikov peakons.
  
  For brevity, we will follow the notations in \cite{hone2009explicit} throughout the paper. Given an index set $I$ and a positive integer $K$, the following abbreviations are used:
  \begin{equation*}
    b_I=\prod_{j\in I}b_j,
    \qquad
    \zeta_I^l=\prod_{j\in I}\zeta_j^l,
    \qquad
    \Delta_I=\prod_{i,j\in I, \, i<j}(\zeta_j-\zeta_i),\qquad
    \Gamma_I=\prod_{i,j\in I, \, i<j}(\zeta_j+\zeta_i).
  \end{equation*}
Besides, $\binom{[1,K]}{k}$ denotes the set of $k$-element subsets
  $J=\{j_1<\dots<j_k\}$ of the integer interval
  $[1,K]=\{1,\dots,K\}$, i.e.,
  \begin{equation*}
    \binom{[1,K]}{k}=\{J=\{j_1,\dots,j_k\} : 1\leq j_1<\dots<j_k\leq K\}.
  \end{equation*}

In the peakon sector, $m$ can be seen as a discrete measure
$$
mdx=2\sum_{k=1}^nm_k\delta(x-x_k)dx
$$
with positive $m_k$.
Due to the integrability of the Novikov equation, it was shown that the peakon ODE system \eqref{NV_eq:peakon} can be thought to be an isospectral evolution system (see Appendix B in \cite{hone2009explicit} for details) so that the peakon ODE system \eqref{NV_eq:peakon} can be solved explicitly by use of inverse spectral method. 

More precisely, they considered the spectral problem on a finite interval $[-1,1]$ (a dual cubic string) obtained from the original problem by  a Liouville transformation. In the peakon sector, $m$ is transformed into 
$$g(y)dy=2\sum_{k=1}^ng_k\delta(y-y_k)dy,$$
where
\begin{equation}
g_k=m_k\cosh x_k, \ \ \ y_k=\tanh x_k.
\end{equation}
By introducing two Weyl functions, which encode the spectral data $\{\zeta_k,b_k\}_{k=1}^n$, it was proved that  the string data $\{g_k,y_k\}_{k=1}^n$ 
restricted by
$$g_k>0,\qquad -1=y_0<y_1<y_2<\cdots<y_n<y_{n+1}=1$$
and $\{\zeta_k,b_k\}_{k=1}^n$ 
satisfying
$$b_k>0,\qquad 0<\zeta_1<\zeta_2<\cdots<\zeta_n$$
is in a bijective correspondence.
The inverse mapping is given by 
\begin{equation*}
y_{k'}=\frac{Z_k-W_{k-1}}{Z_k+W_{k-1}},\qquad g_{k'}=\frac{1}{2}\frac{Z_k+W_{k-1}}{U_kU_{k-1}}
\end{equation*}
leading to
\begin{equation*}
l_{k'-1}=y_{k'}-y_{k'-1}=\frac{2(U_k)^4}{(Z_k+W_{k-1})(Z_{k+1}+W_{k})},
\end{equation*}
where
\begin{align*}
& W_k=V_kU_k-V_{k-1}U_{k+1},\qquad \qquad Z_k=T_kU_k-T_{k+1}U_{k-1},\\
&T_k=\sum_{I\in{\left(\substack{[1,n]\\k}\right)}}\frac{\Delta_I^2}{\zeta_I\Gamma_I}b_I, \quad U_k=\sum_{I\in{\left(\substack{[1,n]\\k}\right)}}\frac{\Delta_I^2}{\Gamma_I}b_I, \quad  V_k=\sum_{I\in{\left(\substack{[1,n]\\k}\right)}}\frac{\Delta_I^2}{\Gamma_I}\zeta_Ib_I.
\end{align*}
Note that the index $k'=n+1-k$ is used for simplicity, and $T_0=U_0=V_0=1$, $T_k=U_k=V_k=0$ for $k>n$.  

According to the $t$ part of the Lax pair,   $\{\zeta_k,b_k\}_{k=1}^n$ evolve as
$$\dot\zeta_k=0,\qquad \dot b_k=\frac{b_k}{\zeta_k},$$
so that $\zeta_k$ remain positive and $b_k(t)=b_k(0)e^{\frac{t}{\zeta_k}}>0$.

In summary, the conclusion can be stated as follows:
\begin{theorem}[Hone, Lundmark \& Szmigielski \cite{hone2009explicit}]\label{th:NV_solution}
The Novikov equation admits the n-peakon solution of the form
\begin{equation*}
u=\sum_{k=1}^n m_k(t)e^{-|x-x_k(t)|},
\end{equation*}
where
\begin{equation}\label{sol:NVxt_form}
x_{k'}=\frac{1}{2}\log\frac{Z_k}{W_{k-1}},\qquad m_{k'}=\frac{\sqrt{Z_kW_{k-1}}}{U_kU_{k-1}},
\end{equation}
for the index $k'=n+1-k,\ k=1,2,\ldots, n$. Here
\begin{align*}
& W_k=V_kU_k-V_{k-1}U_{k+1},\qquad \qquad Z_k=T_kU_k-T_{k+1}U_{k-1},\\
&T_k=\sum_{I\in{\left(\substack{[1,n]\\k}\right)}}\frac{\Delta_I^2}{\zeta_I\Gamma_I}b_I, \quad U_k=\sum_{I\in{\left(\substack{[1,n]\\k}\right)}}\frac{\Delta_I^2}{\Gamma_I}b_I, \quad  V_k=\sum_{I\in{\left(\substack{[1,n]\\k}\right)}}\frac{\Delta_I^2}{\Gamma_I}\zeta_Ib_I
\end{align*}
with the constants $\zeta_j$ and $b_j(t)$ satisfying
\begin{equation}\label{NV_evl:b}
0<\zeta_1<\zeta_2<\cdots\zeta_n,\qquad 
b_j(t)=b_j(0)e^{\frac{t}{\zeta_j}}>0.
\end{equation}
\end{theorem}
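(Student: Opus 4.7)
The plan is to follow the inverse spectral strategy laid out earlier in this subsection and fill in each step. First I would verify that substituting the peakon ansatz $u=\sum_k m_k(t)e^{-|x-x_k(t)|}$ into the Novikov equation \eqref{eq:NV}, interpreted distributionally with the convention $u_x(x_k)=\langle u_x\rangle(x_k)$ at the peak positions, reduces the PDE to the ODE system \eqref{NV_eq:peakon}. The theorem then amounts to an explicit integration of \eqref{NV_eq:peakon} under the stated positivity hypotheses on the spectral data.

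The central idea is to linearize the flow via the spectral data of the dual cubic string. Under the Liouville change of variable $y=\tanh x$ and $g(y)=m(x)\cosh x$, the isospectral problem hidden in the $x$-part of the Lax pair \eqref{NV_Lax} becomes a spectral problem on $[-1,1]$ with discrete measure $g\,dy=2\sum_k g_k\,\delta(y-y_k)\,dy$, where $g_k=m_k\cosh x_k$ and $y_k=\tanh x_k$. A standard Wronskian/residue computation based on the $t$-part of \eqref{NV_Lax} then gives the trivial evolution $\dot\zeta_k=0$, $\dot b_k=b_k/\zeta_k$ of the spectral data, which immediately yields \eqref{NV_evl:b} and guarantees that positivity is preserved for all $t$.

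The crux is the inverse spectral map. Here I would follow Hone--Lundmark--Szmigielski: expand the two Weyl functions of the dual cubic string in partial fractions supported at $\{\zeta_k\}$ with residues built from $\{b_k\}$, and then invoke Cauchy biorthogonal polynomial theory (whose bimoment determinants evaluate via Cauchy's identity to subset-sum form). This produces the explicit expressions for $T_k$, $U_k$, $V_k$, the auxiliary Christoffel--Darboux type combinations $W_k=V_kU_k-V_{k-1}U_{k+1}$ and $Z_k=T_kU_k-T_{k+1}U_{k-1}$, and identifies the string coordinates as $y_{k'}=(Z_k-W_{k-1})/(Z_k+W_{k-1})$ and $g_{k'}=\tfrac12(Z_k+W_{k-1})/(U_kU_{k-1})$.

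Finally I would undo the Liouville transformation: writing $\tanh^{-1} y=\tfrac12\log\tfrac{1+y}{1-y}$ and using $\cosh(\tanh^{-1} y)=1/\sqrt{1-y^2}$, the identities $1+y_{k'}=2Z_k/(Z_k+W_{k-1})$ and $1-y_{k'}=2W_{k-1}/(Z_k+W_{k-1})$ give $x_{k'}=\tfrac12\log(Z_k/W_{k-1})$, while $\cosh x_{k'}=(Z_k+W_{k-1})/(2\sqrt{Z_kW_{k-1}})$ yields $m_{k'}=g_{k'}/\cosh x_{k'}=\sqrt{Z_kW_{k-1}}/(U_kU_{k-1})$. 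The main obstacle lies in the inverse problem of the third paragraph — verifying that the Cauchy bimoment determinants collapse to the stated subset sums and that the combinations $W_k,Z_k$ really reconstruct the string coordinates as claimed; once that combinatorial/determinantal identity is in hand, all remaining steps are essentially bookkeeping.
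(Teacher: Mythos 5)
Your proposal is a faithful outline of the original Hone--Lundmark--Szmigielski derivation, but it is a genuinely different route from the one the paper takes. You work \emph{forward}: set up the dual cubic string via the Liouville transformation, linearize the flow on the spectral data, solve the inverse problem through the Weyl functions and Cauchy biorthogonal polynomials, and then undo the change of variables to read off $x_{k'}$ and $m_{k'}$ — this is constructive and explains where the formulas \eqref{sol:NVxt_form} come from, but the crux you correctly flag (collapsing the bimoment determinants to the subset sums $T_k,U_k,V_k$ and verifying that $W_k,Z_k$ reconstruct the string data) is exactly the hard part, and your proposal defers it entirely to the cited work. The paper instead gives a direct \emph{verification}: it first recasts $T_k,U_k,V_k$ as Pfaffians (Lemma \ref{lem:debr}, Corollary \ref{coro:TUV_pf}), establishes the telescoping sums of Lemma \ref{lem:sum_wz} and the derivative formulae of Lemma \ref{lemma_der_pf}, and then substitutes \eqref{sol:NVxt_form} into the peakon ODEs \eqref{NV_eq:peakon}, reducing both equations to the bilinear Pfaffian identities \eqref{pf1}--\eqref{pf2} via Corollary \ref{coro_NV_der_uvt}. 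What your approach buys is conceptual transparency and the spectral-theoretic meaning of the solution; what the paper's approach buys is a self-contained check (modulo the Pfaffian appendix) that requires no inverse spectral theory and, more importantly for the rest of the paper, exhibits the Novikov peakon flow as an isospectral flow on a manifold cut out by Pfaffian identities — the structural observation that drives the connection to the B-Toda lattice in Section \ref{sec:NV_btoda}. As written, your argument is acceptable only if one grants the full content of the cited inverse spectral analysis; to make it a complete proof you would need to supply the bimoment-determinant evaluations and the bijectivity of the string inverse problem, neither of which is bookkeeping.
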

\begin{remark}
Let 
$$
\mathcal{P}=\{x_1<x_2<\cdots<x_n,\ \ \  m_j>0,\ \ \  j=1,2,\cdots,n\}.
$$
 It was shown that, if the initial data are in $\mathcal{P}$, then $x_j(t),m_j(t)$ will exist for all the time $t\in \bf R$ under the peakon flow \eqref{NV_eq:peakon}, and stay in $\mathcal{P}$. In other words, the solution in Theorem \ref{th:NV_solution}  is global.
\end{remark}
\begin{remark}
For the positive momenta  $m_j$, one will get the pure peakon solution. One can also formulate the result for the negative momenta by
simply changing the sign of $m_j$, leading to what is called  the pure antipeakon case.
\end{remark}

 \subsection{Expressions in terms of Pfaffians}
 It is shown in \cite{hone2008integrable} that the Novikov equation \eqref{eq:NV} is related by a reciprocal transformation to a negative flow in the Sawada-Kotera hierarchy, which belongs to BKP type equations. Usually, the solutions of BKP type equations are expressed as Pfaffians (see the introduction for Pfaffians in Appendix \ref{app_pf}), so that the Pfaffian identities may play a role in the verification of the solutions. Thus
 it is natural to expect that the peakon solutions can be rewritten in terms of Pfaffians. Thanks to the de Bruijn's formula \eqref{id_deBr1}-\eqref{id_deBr2} and the Schur's identities \eqref{id_schur}-\eqref{id_schur_odd}, we succeeded in expressing $T_k,U_k,V_k$ in the formulae of the Novikov peakon solution in terms of Pfaffians. 
 
 
 First of all, we observe the following result, which involves Pfaffians. 
 
 \begin{lemma}\label{lem:debr}
 Let 
 $$f_k(t)=\sum_{j=1}^n(\zeta_j)^kb_j(t),\qquad\qquad  0<\zeta_1<\cdots<\zeta_n, \quad b_j(t)=b_j(0)e^{\frac{t}{\zeta_j}}>0.$$ 
 Introduce
the Pfaffian entries 
$$
(d_0,i)=f_i(t),\qquad (i,j)
={\iint\limits_{-\infty<t_1<t_2<t}}\left[ f_{i-1}(t_2)f_{j-1}(t_1)-f_{i-1}(t_1)f_{j-1}(t_2)\right] dt_1 dt_2,$$ 
so that Pfaffians can be defined according to Definition \ref{def:pfa}. \footnote{Note that there are many different notations for the Pfaffians (See Definition \ref{def:pfa}, Remarks \ref{rem:pf1}, \ref{rem:pf2}). It is convenient to present the Pfaffian identities in the notation used in this paper. This notation will be mostly used throughout the whole paper except a small number of places, where other notations will be employed for convenience. Hopefully no confusion will arise from these changes of notations.}

Then we have, 
\begin{enumerate}
\item  for  $k\in \mathbb{N}$, $l\in \mathbb Z$,
\begin{align*}
&(-1)^k(l+2,l+3,\cdots, l+2k+1)=\idotsint \limits_{-\infty<t_1<\cdots<t_{2k}<t} \det [f_{l+i}(t_j)]_{i,j=1,\cdots,2k}dt_1\cdots dt_{2k},\\
&(-1)^k(d_0,l+2,l+3,\cdots, l+2k)=\idotsint\limits_{-\infty<t_1<\cdots<t_{2k-1}<t} \det [f_{l+i}(t_j)]_{i,j=1,\cdots,2k-1}dt_1\cdots dt_{2k-1},
\end{align*}
\item for  $1\leq k\leq n$, $l\in \mathbb Z$,
\begin{align*}
&\idotsint \limits_{-\infty<t_1<\cdots<t_{k}<t} \det [f_{l+i}(t_j)]_{i,j=1,\cdots,k}dt_1\cdots dt_{k}=(-1)^\frac{k(k-1)}{2}\sum_{J\in{\left(\substack{[1,n]\\k}\right)}}\zeta_J^{l+2}b_J\frac{\Delta_J^2}{\Gamma_J},
\end{align*}
\item  for $k>n,$ $l\in \mathbb Z$,
$$\det (f_{l+i}(t_j))_{i,j=1,\cdots,k}=0.$$
\end{enumerate}
 \end{lemma}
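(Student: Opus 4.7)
The plan is to unpack the three assertions via three classical tools: de Bruijn's integration formulas for Part (1), the Cauchy--Binet identity combined with the Schur-type permutation-sum evaluation \eqref{id_schur}-\eqref{id_schur_odd} for Part (2), and a rank argument for Part (3).

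For Part (1), the antisymmetric entry $(i,j)$ is defined precisely to match de Bruijn's formulas \eqref{id_deBr1}-\eqref{id_deBr2}: up to a sign, it is the kernel that converts an iterated integral of a determinant into a Pfaffian. Viewing the Pfaffian $(l+2,\ldots,l+2k+1)$ (even case) and $(d_0,l+2,\ldots,l+2k)$ (odd case) as Pfaffians built from $\phi_i = f_{l+i}$, and noting that the paper's convention for $(i,j)$ differs by an overall sign from the standard de Bruijn kernel, the homogeneity $\mathrm{Pf}(-M) = (-1)^k \mathrm{Pf}(M)$ on a $2k\times 2k$ skew matrix produces exactly the $(-1)^k$ factor on the left-hand side.

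For Part (2), I would factor $f_{l+i}(t_j) = \sum_{s=1}^n \zeta_s^{l+i} b_s(t_j)$ and apply Cauchy--Binet to the $k\times k$ determinant. Each resulting term is indexed by $J \in \binom{[1,n]}{k}$ and is a product of a Vandermonde-type determinant in the $\zeta$'s (which simplifies to $\zeta_J^{l+1}\Delta_J$) and an exponential determinant $b_J(0)\det[e^{t_\beta/\zeta_{j_\alpha}}]$. The simplex integral of the exponential determinant reduces, by iterated single-variable integration (valid since all $\zeta_j>0$), to $e^{t\sum_\alpha 1/\zeta_{j_\alpha}}$ multiplied by a Schur-type permutation sum in $\lambda_\alpha = 1/\zeta_{j_\alpha}$; the identities \eqref{id_schur}-\eqref{id_schur_odd} evaluate this sum in closed form. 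Converting from $\lambda$'s back to $\zeta$'s and collecting factors, using $b_J(0)\,e^{t\sum_\alpha 1/\zeta_{j_\alpha}} = b_J(t)$, yields the asserted expression $(-1)^{k(k-1)/2}\sum_J \zeta_J^{l+2}b_J \Delta_J^2/\Gamma_J$.

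Part (3) is immediate from the same factorization: $[f_{l+i}(t_j)]$ is the product of a $k\times n$ matrix with entries $\zeta_s^{l+i}$ and an $n\times k$ matrix with entries $b_s(t_j)$, so when $k>n$ the rank is at most $n<k$ and the determinant vanishes (equivalently, the Cauchy--Binet sum is empty). The main obstacle is the sign bookkeeping in Part (2): the Vandermonde sign $(-1)^{k(k-1)/2}$ interacts with the sign produced when rewriting $\lambda_\beta - \lambda_\alpha = -(\zeta_{j_\beta}-\zeta_{j_\alpha})/(\zeta_{j_\alpha}\zeta_{j_\beta})$ inside the Schur sum, and these two contributions must conspire to leave exactly the single factor $(-1)^{k(k-1)/2}$ in the final formula; everything else is routine once the de Bruijn and Schur identities are in hand.
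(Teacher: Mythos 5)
Your Parts (1) and (3) coincide with the paper's proof: de Bruijn's formulas \eqref{id_deBr1}--\eqref{id_deBr2} give (1) (and your account of the $(-1)^k$ via $\mathrm{Pf}(-M)=(-1)^k\mathrm{Pf}(M)$ makes explicit a sign the paper glosses over), and the rank-$\le n$ factorization gives (3). For Part (2) both you and the paper begin with Cauchy--Binet, reducing everything to the simplex integral $\mathcal B_J=\idotsint_{t_1<\cdots<t_k<t}\det[b_{j_\alpha}(t_\beta)]\,dt_1\cdots dt_k$ for each $J$. The divergence, and the one loose joint in your write-up, is how $\mathcal B_J$ is evaluated. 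You integrate the exponential determinant iteratively to get $b_J(0)e^{t\sum_\alpha\lambda_\alpha}$ times the permutation sum $\sum_\sigma\mathrm{sgn}(\sigma)\prod_{m}(\lambda_{\sigma(1)}+\cdots+\lambda_{\sigma(m)})^{-1}$ with $\lambda_\alpha=1/\zeta_{j_\alpha}$, and then assert that \eqref{id_schur}--\eqref{id_schur_odd} evaluate this sum. But the Schur identities evaluate \emph{Pfaffians} with entries $(s_i-s_j)/(s_i+s_j)$; they do not directly apply to a permutation sum of reciprocals of partial sums. To close this step you must either (a) first recognize the permutation sum as a Pfaffian, which amounts to applying de Bruijn's formula a second time --- now to the functions $b_{j_\alpha}$ --- computing the entries $(j_m,j_l)=\frac{\zeta_{j_m}-\zeta_{j_l}}{\zeta_{j_m}+\zeta_{j_l}}\zeta_{j_m}\zeta_{j_l}b_{j_m}b_{j_l}$ explicitly, and only then invoking Schur (this is exactly the paper's route), or (b) prove the closed form of that permutation sum separately (it is classical and can be done by induction or partial fractions, but it is not literally \eqref{id_schur}). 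Your sign bookkeeping does come out correctly: the entire $(-1)^{k(k-1)/2}$ arises from converting $\frac{\lambda_\beta-\lambda_\alpha}{\lambda_\alpha+\lambda_\beta}=-\frac{\zeta_{j_\beta}-\zeta_{j_\alpha}}{\zeta_{j_\alpha}+\zeta_{j_\beta}}$ over the $\binom{k}{2}$ pairs, while $\prod_\alpha\lambda_\alpha^{-1}=\zeta_J$ combines with the Vandermonde factor $\zeta_J^{l+1}\Delta_J$ to produce $\zeta_J^{l+2}\Delta_J^2/\Gamma_J$.
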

 \begin{proof}
 Without loss of generality, it suffices to give the proof for $l=0$. 
 
The first conclusion immediately follows by employing the de Bruijn' formulae \eqref{id_deBr1}-\eqref{id_deBr2}.
 
Observe that
 \[
  (f_{i}(t_j))_{i,j=1,\cdots,k}=
 \begin{pmatrix}
\zeta_1&\zeta_2&\cdots&\zeta_{n}\\
\zeta_1^2&\zeta_2^2&\cdots&\zeta_n^2\\
\vdots&\vdots&\ddots&\vdots\\
\zeta_1^{k}&\zeta_2^{k}&\cdots&\zeta_n^{k}
\end{pmatrix}\cdot
\begin{pmatrix}
b_1(t_1)&b_1(t_2)&\cdots&b_1(t_k)\\
b_2(t_1)&b_2(t_2)&\cdots&b_2(t_k)\\
\vdots&\vdots&\ddots&\vdots\\
b_{n}(t_1)&b_{n}(t_2)&\cdots&b_{n}(t_k)
\end{pmatrix},
 \]
 which implies that the rank of the matrix $(f_{i}(t_j))_{i,j=1,\cdots,k}$ is not greater than $n$. Thus the third conclusion is confirmed.
  
 Finally, let's prove that the second conclusion is valid. We shall give a detailed proof for the even case. The proof of the odd case is similar and we omit it.
 
We notice that
\begin{align}
&\idotsint \limits_{-\infty<t_1<\cdots<t_{2k}<t} \det [f_{i}(t_j)]_{i,j=1,\cdots,2k}dt_1\cdots dt_{2k} \nonumber\\
=\ &\idotsint \limits_{-\infty<t_1<\cdots<t_{2k}<t} \det\left(
\begin{pmatrix}
\zeta_1&\zeta_2&\cdots&\zeta_{n}\\
\zeta_1^2&\zeta_2^2&\cdots&\zeta_n^2\\
\vdots&\vdots&\ddots&\vdots\\
\zeta_1^{2k}&\zeta_2^{2k}&\cdots&\zeta_n^{2k}
\end{pmatrix}
\begin{pmatrix}
b_1(t_1)&b_1(t_2)&\cdots&b_1(t_{2k})\\
b_2(t_1)&b_2(t_2)&\cdots&b_2(t_{2k})\\
\vdots&\vdots&\ddots&\vdots\\
b_{n}(t_1)&b_{n}(t_2)&\cdots&b_{n}(t_{2k})
\end{pmatrix}
\right)dt_1\cdots dt_{2k}\nonumber\\
=\ &\sum_{J\in\left(\substack{[1,n]\\2k}\right)}\zeta_J\cdot\Delta_J\cdot\mathcal B_J, \label{BJ1}
\end{align}
where
\begin{align*}
\mathcal B_J=
\idotsint \limits_{-\infty<t_1<\cdots<t_{2k}<t}\left|
\begin{array}{cccc}
b_{j_1}(t_1)&b_{j_1}(t_2)&\cdots&b_{j_1}(t_{2k})\\
b_{j_2}(t_1)&b_{j_2}(t_2)&\cdots&b_{j_2}(t_{2k})\\
\vdots&\vdots&\ddots&\vdots\\
b_{j_{2k}}(t_1)&b_{j_{2k}}(t_2)&\cdots&b_{j_{2k}}(t_{2k})
\end{array}
\right|dt_1\cdots dt_{2k},
\end{align*}
for some fixed $J=\{j_1,\dots,j_{2k}\}, 1\leq j_1<\dots<j_{2k}\leq n$.
Here the last step follows from the Cauchy-Binet formula and the property of Vandermonde matrix.

Applying the de Bruijn's formula \eqref{id_deBr1} for the even case to $\mathcal B_J$, we obtain
\begin{equation*}
\mathcal B_J=(j_1,\cdots,j_{2k}), \quad \text{where}\quad (j_m,j_l)=\iint\limits_{-\infty<t_1<t_2<t}b_{j_m}(t_1)b_{j_l}(t_2)-b_{j_l}(t_1)b_{j_m}(t_2)dt_1dt_2.\footnote{Here we refer  the reader to Definition \ref{def:pfa}, Remarks \ref{rem:pf1}, \ref{rem:pf2} in order to understand the meaning of this Pfaffian.
}
\end{equation*}
In fact, this Pfaffian entry can be explicitly computed by using
$$
b_i(t)=b_i(0)e^{\frac{t}{\zeta_i}},
$$
leading to
$$(j_m,j_l)=\frac{\zeta_{j_m}-\zeta_{j_l}}{\zeta_{j_m}+\zeta_{j_l}}\zeta_{j_m}\zeta_{j_l}b_{j_m}b_{j_l}.$$
 
Consequently,
\[
\mathcal B_J=\zeta_Jb_J\cdot (j_1^*,\cdots,j_{2k}^*), \quad \text{where}\quad 
(j_m^*,j_l^*)=\frac{\zeta_{j_m}-\zeta_{j_l}}{\zeta_{j_m}+\zeta_{j_l}}.\footnote{Again, see Definition \ref{def:pfa}, Remarks \ref{rem:pf1}, \ref{rem:pf2}.}
\]
By use of the Schur's Pfaffian identity \eqref{id_schur} for the even case, we get
\begin{equation}
\mathcal B_J=(-1)^{k(2k-1)}\zeta_Jb_J\frac{\Delta_J}{\Gamma_J}. \label{BJ_pf}
\end{equation}
Combining \eqref{BJ1} and \eqref{BJ_pf}, we eventually arrive at the desired formula. Thus the proof is finished.

\end{proof}

Comparing the expressions of $T_k,U_k,V_k$ in Theorem \ref{th:NV_solution} with the above lemma and noting the conventions, it is obvious that the following corollary follows. 
\begin{coro} \label{coro:TUV_pf}
For $k\in \mathbb Z$, the following relations hold,
\begin{align*}
&T_{2k}=(-1,0,\cdots,2k-2),  &&T_{2k-1}=(d_0,-1,0,\cdots,2k-3),\\
&U_{2k}=(0,1,\cdots,2k-1), &&U_{2k-1}=(d_0,0,1,\cdots,2k-2),\\
&V_{2k}=(1,2,\cdots,2k),&&V_{2k-1}=(d_0,1,2,\cdots,2k-1),
\end{align*}
where the Pfaffian entries are those in Lemma \ref{lem:debr}. We use the convention that the Pfaffian of order $0$ is set to be $1$ and negative order to be 0.


\end{coro}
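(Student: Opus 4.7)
The plan is to read the corollary off Lemma \ref{lem:debr} directly, by using its two parts as two different expressions for the same iterated multiple integral and then matching the result to the defining sums of $T_k, U_k, V_k$ in Theorem \ref{th:NV_solution}. No fresh calculation beyond the lemma is required; everything reduces to bookkeeping.

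First I would observe that the three quantities $T_k, U_k, V_k$ differ only in the exponent of $\zeta_I$ in the summand, namely $-1$, $0$, $+1$ respectively, and since the sum in Part (2) of the lemma carries the factor $\zeta_J^{l+2}$, these three cases correspond to the shift parameters $l=-3$, $l=-2$, $l=-1$. Next I would treat the even and odd cases separately. In the even case ($k=2r$), combining Parts (1) and (2) produces an equality of the form $(-1)^r\,(l+2,l+3,\ldots,l+2r+1)=(-1)^{r(2r-1)}\sum_{|J|=2r}\zeta_J^{l+2}b_J\Delta_J^2/\Gamma_J$; the parity identity $r(2r-1)\equiv r \pmod 2$ makes the two signs cancel, so specializing $l=-3,-2,-1$ yields the three even-index formulas. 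The odd case ($k=2r-1$) is parallel, except that Part (1) now inserts the virtual index $d_0$ and Part (2) contributes the sign $(-1)^{(2r-1)(r-1)}$; after matching parities, the same three specializations yield the $d_0$-Pfaffian expressions for $T_{2r-1}, U_{2r-1}, V_{2r-1}$.

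Finally I would verify the stated boundary conventions: the order-zero Pfaffian equals $1$ (matching $T_0=U_0=V_0=1$, which come from the empty-subset contribution in the definitions of Theorem \ref{th:NV_solution}) and Pfaffians of negative order vanish, which is consistent with Lemma \ref{lem:debr}(3) forcing $T_k=U_k=V_k=0$ for $k>n$. The main obstacle is really just the sign bookkeeping in the odd case, where the $d_0$ convention from Part (1) has to be reconciled carefully with the parity factor $(-1)^{(2r-1)(r-1)}$ from Part (2); beyond this, the claim is a transcription.
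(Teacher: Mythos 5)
Your strategy is exactly the one the paper intends (the paper offers no written proof beyond ``comparing the expressions \ldots with the above lemma''), and your even-case bookkeeping is right: with Pfaffian order $2r$, Part (1) carries $(-1)^r$ and Part (2) carries $(-1)^{2r(2r-1)/2}=(-1)^{r(2r-1)}$, and since $r(2r-1)\equiv r\pmod 2$ the signs cancel, so the specializations $l=-3,-2,-1$ give $T_{2r},U_{2r},V_{2r}$ as claimed.

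The odd case, however, does not close the way you assert. For a Pfaffian of order $2r-1$, Part (1) as printed gives $(-1)^{r}(d_0,l+2,\dots,l+2r)=\int\det$, while Part (2) gives $\int\det=(-1)^{(2r-1)(2r-2)/2}\sum=(-1)^{(2r-1)(r-1)}\sum$ with $(2r-1)(r-1)\equiv r-1\pmod 2$. The two parities therefore do \emph{not} match: combining them literally yields $(d_0,l+2,\dots,l+2r)=-\sum_{|J|=2r-1}\zeta_J^{l+2}b_J\Delta_J^2/\Gamma_J$, i.e.\ $U_{2r-1}=-(d_0,0,\dots,2r-2)$, contradicting the corollary. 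The resolution is that the sign in the odd half of Part (1) of Lemma \ref{lem:debr} is misprinted and should be $(-1)^{k-1}$; the base case $k=1$ shows this directly, since $\int_{-\infty}^{t}f_{l+1}(t_1)\,dt_1=\sum_j\zeta_j^{l+2}b_j(t)=f_{l+2}(t)=(d_0,l+2)$ with sign $+1$, consistent with $U_1=\sum_j b_j=f_0=(d_0,0)$. So your conclusion is correct, but the step you describe as ``matching parities'' in the odd case is false as stated and needs either this correction to the lemma or an independent low-order sign check to anchor it. (A minor additional slip: the vanishing of $T_k,U_k,V_k$ for $k>n$ corresponds to Part (3) of the lemma, i.e.\ Pfaffians of order exceeding $n$, not to the negative-order convention.)
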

So far, we have reformulated  the multipeakon solution of the Novikov equations in the language of Pfaffians. Actually, we can employ the formulae on Pfaffians in Appendix \ref{app_pf} to prove the validity of Theorem \ref{th:NV_solution}, which is the object of next subsection.

\subsection{An alternative proof of Theorem \ref{th:NV_solution}} 
To begin with, we reveal some underlying properties among  $T_k,U_k,V_k,W_k,Z_k$. 

\begin{lemma}\label{lem:sum_wz}
Under the definitions in Lemma \ref{lem:debr}, for $0\leq k\leq n$, the following nonlinear identities hold,
\begin{align}
\sum_{j=1}^{k}\frac{W_{j-1}}{U_jU_{j-1}}=\frac{V_{k-1}}{U_{k}},\qquad \sum_{j=k+1}^{n}\frac{Z_{j}}{U_jU_{j-1}}=\frac{T_{k+1}}{U_{k}}\label{nonl_id}.
\end{align}
\end{lemma}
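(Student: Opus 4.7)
The plan is to recognize both identities as telescoping sums once one substitutes the definitions of $W_{j-1}$ and $Z_j$ into the summands. Specifically, I would start from
\[
\frac{W_{j-1}}{U_j U_{j-1}} = \frac{V_{j-1} U_{j-1} - V_{j-2} U_j}{U_j U_{j-1}} = \frac{V_{j-1}}{U_j} - \frac{V_{j-2}}{U_{j-1}},
\]
which is an immediate consequence of the definition $W_k = V_k U_k - V_{k-1} U_{k+1}$ applied at index $k=j-1$. Summing from $j=1$ to $k$ gives a telescoping sum equal to $\dfrac{V_{k-1}}{U_k} - \dfrac{V_{-1}}{U_0}$, and the boundary convention $V_{-1}=0$, $U_0=1$ (already fixed in Corollary \ref{coro:TUV_pf}) produces exactly $V_{k-1}/U_k$ as required.

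The second identity is handled in exactly the same spirit. Using $Z_k = T_k U_k - T_{k+1} U_{k-1}$, I would rewrite
\[
\frac{Z_j}{U_j U_{j-1}} = \frac{T_j}{U_{j-1}} - \frac{T_{j+1}}{U_j},
\]
and then sum from $j=k+1$ to $n$, obtaining the telescoping value $\dfrac{T_{k+1}}{U_k} - \dfrac{T_{n+1}}{U_n}$. The third part of Lemma \ref{lem:debr} (together with the expression of $T_k$ as a Pfaffian in Corollary \ref{coro:TUV_pf}) gives $T_{n+1}=0$, which closes the computation.

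There is essentially no hard step: the proof is an algebraic manipulation plus two boundary-value checks. The only place to be careful is the accounting of the conventions at the endpoints $k=0$, $k=n$, making sure that $V_{-1}=0$, $T_{n+1}=0$, and $U_0=1$ are all consistent with the Pfaffian definitions stipulated in Corollary \ref{coro:TUV_pf}; once that is verified, writing up the two one-line telescoping arguments completes the proof.
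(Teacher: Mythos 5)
Your proof is correct and is essentially the same telescoping argument the paper uses: substitute the defining expressions for $W_{j-1}$ and $Z_j$, telescope, and invoke the boundary values $V_{-1}=0$, $U_0=1$, $T_{n+1}=0$. No substantive difference from the paper's own proof.
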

\begin{proof}
The conclusion is obvious for $k=0$ in the first formula.
Substituting the expression 
$W_j=V_jU_j-V_{j-1}U_{j+1}$, we have 
\[
\sum_{j=1}^{k}\frac{W_{j-1}}{U_jU_{j-1}}=\sum_{j=1}^{k}\frac{V_{j-1}U_{j-1}-V_{j-2}U_{j}}{U_jU_{j-1}}=\sum_{j=1}^{k}\left(\frac{V_{j-1}}{U_{j}}-\frac{V_{j-2}}{U_{j-1}}\right)=
\frac{V_{k-1}}{U_{k}},
\]
where $V_{-1}=0,U_0=1$ are used in the last step.

Similarly, employing $Z_k=T_kU_k-T_{k+1}U_{k-1}$, the second identity can be proved  by noticing $T_{n+1}=0$.
\end{proof}

Besides, we have some derivative formulae.
\begin{lemma}\label{lemma_der_pf}
Under the definitions in Lemma \ref{lem:debr}, for $k\in \mathbb N, l\in \mathbb Z$, the following relations hold,
\begin{align*}
\frac{d}{dt}(l+2,l+3,\cdots,l+2k+1)&=(l+1,l+3,l+4,\cdots,l+2k+1)\\
&=(d_0,d_{-1},l+2,l+3,\cdots,l+2k+1),\\
\frac{d}{dt}(d_0,l+2,l+3,\cdots,l+2k)&=(d_0,l+1,l+3,l+4,\cdots,l+2k)\\
&=(d_{-1},l+2,l+3,\cdots,l+2k),
\end{align*}
where $(d_{-1},i)=(d_0,i-1)=f_{i-1},\ (d_0,d_{-1})=0$.
 Note that there exist two different forms for every derivative. We will refer to the former one as Wronski-type derivative and the latter one as Gram-type derivative (see Remark \eqref{rem:wg}).
 \end{lemma}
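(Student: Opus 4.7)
The plan is to prove the Gram-type derivative identity first via the Leibniz rule for Pfaffians, and then to derive the Wronski-type form by differentiating the integral representation from Lemma \ref{lem:debr}.

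For the atomic Pfaffian entries, I would use $b_j(t) = b_j(0)e^{t/\zeta_j}$ to obtain $\dot f_k = f_{k-1}$ and $\int_{-\infty}^t f_k(s)\,ds = f_{k+1}(t)$. Applied to $(d_0,i) = f_i$ this yields $\frac{d}{dt}(d_0,i) = (d_{-1},i)$. Differentiating the iterated integral defining $(i,j)$ with respect to its upper limit gives
\[
\frac{d}{dt}(i,j) = f_{i-1}(t)f_j(t) - f_{j-1}(t)f_i(t),
\]
which by the four-element Pfaffian expansion $(d_0,d_{-1},i,j) = (d_0,d_{-1})(i,j) - (d_0,i)(d_{-1},j) + (d_0,j)(d_{-1},i)$ and the convention $(d_0,d_{-1})=0$ equals $(d_0,d_{-1},i,j)$. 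Applying the Leibniz rule for Pfaffians, $\frac{d}{dt}(a_1,\ldots,a_{2k}) = \sum_{p<q} \frac{d}{dt}(a_p,a_q)\,C_{pq}$ with cofactors $C_{pq}$, and substituting the single-entry formula, the resulting sum is exactly the two-row bordering expansion of $(d_0,d_{-1},l+2,\ldots,l+2k+1)$ along its first two rows; this establishes the Gram-type form. The odd-order case is handled in the same spirit: the extra row $(d_0,i)$ contributes $(d_{-1},i)$ upon differentiation, and a single-row bordering expansion recovers the claimed $(d_{-1},l+2,\ldots,l+2k)$.

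For the Wronski-type form I would differentiate the de Bruijn integral representation in Lemma \ref{lem:debr}(1) directly in its upper limit $t$, producing a $(2k-1)$-fold integral of $\det[f_{l+i}(t_j)]$ with $t_{2k}$ frozen at $t$; integration by parts, together with $f_l = \dot f_{l+1}$ and a reapplication of the de Bruijn formula (extended to the non-consecutive row-index list $l, l+2, l+3, \ldots, l+2k$), reorganizes this expression into the integral representation of $(l+1,l+3,l+4,\ldots,l+2k+1)$. The main obstacle I anticipate is precisely this Wronski-type identification: matching a $(2k-1)$-fold boundary integral with the $2k$-fold de Bruijn integral representing the target Pfaffian requires careful sign and dimension bookkeeping. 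The Gram-type identity, by contrast, is essentially automatic once $\frac{d}{dt}(i,j) = (d_0,d_{-1},i,j)$ is in hand.
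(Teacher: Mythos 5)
Your Gram-type half is correct and is essentially the paper's argument: you compute $\frac{d}{dt}(d_0,i)=(d_{-1},i)$ and $\frac{d}{dt}(i,j)=f_{i-1}f_j-f_if_{j-1}=(d_0,d_{-1},i,j)$ from the integral representation, and then the Leibniz rule for Pfaffians combined with the two-row bordering expansion is exactly how the general Gram-type derivative formulae \eqref{der2_1_gen}--\eqref{der2_2_gen} are established; the paper simply cites those formulae instead of re-deriving them.

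The Wronski-type half has a genuine gap. You propose to differentiate the $2k$-fold de Bruijn integral in its upper limit, obtaining a $(2k-1)$-fold integral with $t_{2k}$ frozen at $t$, and then to match it with the $2k$-fold de Bruijn representation of $(l+1,l+3,\dots,l+2k+1)$ by integration by parts. That matching is the entire content of the claim, and you explicitly leave it as an ``anticipated obstacle'' rather than carrying it out; as written, the key step is asserted, not proven. The paper avoids this altogether by observing a second, entry-level identity: the same quantity $\frac{d}{dt}(i,j)=f_{i-1}f_j-f_if_{j-1}$ also equals $(i-1,j)+(i,j-1)$ (a short divergence/boundary computation on the simplex $\{t_1<t_2<t\}$, using $\dot f_m=f_{m-1}$ and the vanishing of the antisymmetric integrand on the diagonal $t_1=t_2$). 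Once you have $\frac{d}{dt}(i,j)=(i-1,j)+(i,j-1)$, the general formula \eqref{der1_2_gen} gives $\frac{d}{dt}(l+2,\dots,l+2k+1)$ as the sum over all single-index decrements, and every term except the decrement of the first index produces a repeated index and hence vanishes, leaving precisely $(l+1,l+3,\dots,l+2k+1)$. I recommend you replace your integral-manipulation plan for the Wronski form with this entry-level identity; it closes the gap with one line.
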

\begin{proof}
Recall that 
\[
f_i(t)=\sum_{j=1}^n(\zeta_j)^ib_j(t), \qquad \text{with}\qquad \dot b_j=\frac{b_j}{\zeta_j},
\]
which immediately gives 
\[
\dot f_i=f_{i-1}.
\]
In other words, we have
\begin{equation*}
\frac{d}{dt}(d_0,i)=(d_{-1},i).
\end{equation*}
Furthermore, it is not hard to prove that 
\begin{equation*}
\frac{d}{dt}(i,j)=(i-1,j)+(i,j-1)=(d_0,d_{-1},i,j).
\end{equation*}
Once the above derivatives for the Pfaffian entries hold, the result can be easily confirmed with the help of the derivative formulae \eqref{der1}-\eqref{der2_2}. 
\end{proof}

The above lemma results in the following corollary. 
\begin{coro}\label{coro_NV_der_uvt}
For $0\leq k\leq n$, the following relations hold,
\begin{align*}
\dot U_{k+1}U_{k}-U_{k+1}\dot U_{k}&=T_{k+1}V_{k}-V_{k-1}T_{k+2},\\
\dot U_k V_k-  U_k \dot V_k&=\dot U_{k+1}V_{k-1}-U_{k+1}\dot V_{k-1},\\
\dot T_kU_k- T_k\dot U_k&=\dot T_{k+1}U_{k-1}-T_{k+1}\dot U_{k-1},
\end{align*}
and
\[
\dot W_k=2(\dot U_k V_k-\dot U_{k+1} V_{k-1}),\qquad 
\dot Z_k=2(T_k\dot U_k-T_{k+1}\dot U_{k-1}).\]
\end{coro}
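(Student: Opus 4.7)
The plan is to handle the three quadratic identities involving $U_{k+1}U_k$, $U_k V_k$, and $T_k U_k$ first, since these are genuine Pfaffian identities, and then obtain the formulas for $\dot W_k$ and $\dot Z_k$ as immediate algebraic consequences.

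For each of the three quadratic identities, I would begin by expressing every derivative by means of Lemma \ref{lemma_der_pf} in either its Wronski-type or Gram-type form (I would favour the Gram-type form for cleanness), and then invoke Corollary \ref{coro:TUV_pf} to rewrite $T_k$, $U_k$, $V_k$ themselves as Pfaffians of consecutive indices. After substitution, each identity becomes a quadratic Pfaffian relation of the form $P_1 P_2 - P_3 P_4 = P_5 P_6 - P_7 P_8$ in which all eight Pfaffians share a common "core" index set $B$ consisting of a run of consecutive integers, while a handful of distinguished indices (drawn from $\{d_{-1}, d_0, -1, 0, k, k+1, k+2\}$ depending on which identity and which parity of $k$) are distributed variously among the eight factors.

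The crucial step is then to recognize each such relation as a specialization of the standard four-term Pfaffian Plücker identity
\begin{equation*}
P(a_1, a_2, a_3, B)\, P(a_4, B) - P(a_1, a_2, a_4, B)\, P(a_3, B) + P(a_1, a_3, a_4, B)\, P(a_2, B) - P(a_2, a_3, a_4, B)\, P(a_1, B) = 0,
\end{equation*}
valid whenever the core $B$ has odd cardinality (so that each factor has even order). For the second identity, for instance, with $k$ even one takes $B = (1,2,\dots,k-1)$ and the four distinguished indices $\{d_0, d_{-1}, 0, k\}$, and the four products of the Plücker identity match exactly $\dot U_k V_k$, $U_k \dot V_k$, $\dot U_{k+1}V_{k-1}$, $U_{k+1}\dot V_{k-1}$ after reordering. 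The first and third identities follow by analogous choices, with the role of $V$ replaced by the Pfaffians representing $T$. The cases $k$ odd are handled the same way, with the location of the marker $d_0$ swapped between the even/odd members of each sequence.

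The two displayed formulas for $\dot W_k$ and $\dot Z_k$ are then immediate algebraic corollaries. Differentiating $W_k = V_k U_k - V_{k-1}U_{k+1}$ by the Leibniz rule yields
\begin{equation*}
\dot W_k = \dot U_k V_k + U_k \dot V_k - \dot U_{k+1}V_{k-1} - U_{k+1}\dot V_{k-1},
\end{equation*}
and substituting the already-established second identity $U_k \dot V_k - U_{k+1}\dot V_{k-1} = \dot U_k V_k - \dot U_{k+1}V_{k-1}$ collapses this to $2(\dot U_k V_k - \dot U_{k+1}V_{k-1})$. The formula for $\dot Z_k$ follows the same pattern from $Z_k = T_k U_k - T_{k+1}U_{k-1}$ using the third identity. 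The main obstacle throughout is combinatorial bookkeeping: identifying the right Plücker specialization for each of the three identities, tracking the sign picked up when reordering each Pfaffian into canonical index order, and keeping the two parity cases of $k$ aligned; none of this is conceptually deep, but the indices must be matched carefully in each case.
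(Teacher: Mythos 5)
Your proposal is correct and follows essentially the same route as the paper: substitute the Wronski-/Gram-type derivative expressions from Lemma \ref{lemma_der_pf} and Corollary \ref{coro:TUV_pf}, recognize each quadratic relation as one of the bilinear Pfaffian identities \eqref{pf1}--\eqref{pf2} (your four-term Pl\"ucker identity is exactly \eqref{pf2} rearranged, with its even-core companion \eqref{pf1} needed for the opposite parity of $k$), and then get $\dot W_k$, $\dot Z_k$ by Leibniz plus the second and third identities. The only caution is that for the first identity the Gram-type form you say you favour does not produce the index $-1$ appearing in $T_{k+1}$ and $T_{k+2}$; there you must use the Wronski-type derivative, as the paper does.
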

\begin{proof}
We will proceed the proof for $k=2m$, and the odd case is similar.

By using the Wronski-type derivative in Lemma \ref{lemma_der_pf},
the first formula is nothing but a Pfaffian identity in \eqref{pf1}-\eqref{pf2}.

As for the second formula, we will see it is no other than a Pfaffian identity in \eqref{pf1}-\eqref{pf2} if we employ the Gram-type derivative in Lemma \ref{lemma_der_pf}.

The third identity immediately follows since its structure is an analogue of the second one. The only difference arises from the index shift. 

The last two are just matters of combining the expressions of $W_k, Z_k$, and the second and third formulae, respectively.
\end{proof}

Now we are ready to confirm the validity of Theorem \ref{th:NV_solution} by using the Pfaffian technique. 
\begin{proof}[A proof of Th. \ref{th:NV_solution} in terms of Pfaffians]
It is clear that what we need to prove is
\begin{align}
\dot x_{k'}&=u(x_{k'})^2\nonumber\\
&=\left(\sum_{j=1}^{k-1}m_{j'}e^{x_{k'}-x_{j'}}+\sum_{j=k}^nm_{j'}e^{x_{j'}-x_{k'}}\right)^2 \label{NV_x_ode},\\
\dot m_{k'}&=-m_{k'}u(x_{k'})\langle u_x(x_{k'})\rangle\nonumber\\
&=m_{k'}\left(\sum_{j=1}^{k-1}m_{j'}e^{x_{k'}-x_{j'}}+\sum_{j=k}^nm_{j'}e^{x_{j'}-x_{k'}}\right)\left(\sum_{j=k+1}^nm_{j'}e^{x_{j'}-x_{k'}}-\sum_{j=1}^{k-1}m_{j'}e^{x_{k'}-x_{j'}}\right).\label{NV_m_ode}
\end{align}
Again note the notation $k'=n+1-k$.

Substituting the expressions of $x_k$ and $m_k$ in terms of $W_k, V_k, Z_k$, after simplification, we arrive at
\begin{align}
&\frac{1}{2}\left(\dot Z_kW_{k-1}-\dot W_{k-1}Z_k\right)=\left(\sum_{j=1}^{k-1}\frac{W_{j-1}}{U_jU_{j-1}}Z_k+\sum_{j=k}^{n}\frac{Z_{j}}{U_jU_{j-1}}W_{k-1}\right)^2, \label{NV_zw_ode1}\\
&\frac{1}{2}\frac{\dot Z_k W_{k-1}+Z_k\dot W_{k-1}}{Z_kW_{k-1}}-\frac{\dot U_k U_{k-1}+U_k\dot U_{k-1}}{U_kU_{k-1}}\nonumber\\
&=\frac{1}{Z_kW_{k-1}}\left(\sum_{j=1}^{k-1}\frac{W_{j-1}}{U_jU_{j-1}}Z_k+\sum_{j=k}^{n}\frac{Z_{j}}{U_jU_{j-1}}W_{k-1}\right)\left(\sum_{j=k+1}^{n}\frac{Z_{j}}{U_jU_{j-1}}W_{k-1}-\sum_{j=1}^{k-1}\frac{W_{j-1}}{U_jU_{j-1}}Z_k\right). \label{NV_zw_ode2}
\end{align}

Let's consider \eqref{NV_zw_ode1} first.  By using the expressions of $Z_k, W_k$ in terms of $U_k, T_k,V_k$ and Corollary \ref{coro_NV_der_uvt},  we see that the left hand of the equality reads:
\begin{align*}
&\frac{1}{2}\left(\dot Z_kW_{k-1}-\dot W_{k-1}Z_k\right)\\
=&(T_k\dot U_k-T_{k+1}\dot U_{k-1})W_{k-1}-Z_k(\dot U_{k-1}V_{k-1}-\dot U_k V_{k-2})\\
=&\dot U_k(W_{k-1}T_k+Z_kV_{k-2})-\dot U_{k-1}(W_{k-1}T_{k+1}+Z_kV_{k-1})\\
=&(\dot U_k U_{k-1}-U_k \dot U_{k-1})(T_k V_{k-1}-T_{k+1}V_{k-2})\\
=&(T_k V_{k-1}-T_{k+1}V_{k-2})^2,
\end{align*}
which equals to the right hand by using the expansions of $W_k$ and $Z_k$, and Lemma \ref{lem:sum_wz}. Thus \eqref{NV_zw_ode1} is valid.

Now we turn to the proof of \eqref{NV_zw_ode2}.
\begin{align*}
L&=\frac{Z_k\dot W_{k-1}+(T_kV_{k-1}-V_{k-2}T_{k+1})^2}{Z_kW_{k-1}}-\frac{2U_k\dot U_{k-1}-(V_{k-2}T_{k+1}-V_{k-1}T_k)}{U_kU_{k-1}},\\
R&=\frac{1}{Z_kW_{k-1}}\left(\frac{V_{k-2}Z_k+T_kW_{k-1}}{U_{k-1}}\right)\left(\frac{T_{k+1}W_{k-1}}{U_k}-\frac{V_{k-2}Z_{k}}{U_{k-1}}\right)\\
&=\frac{1}{Z_kW_{k-1}}(T_kV_{k-1}-T_{k+1}V_{k-2})\left[(T_kV_{k-1}-T_{k+1}V_{k-2})-Z_k\left(\frac{V_{k-1}}{U_k}+\frac{V_{k-2}}{U_{k-1}}\right)\right]\\
&=\frac{1}{Z_kW_{k-1}}(T_kV_{k-1}-T_{k+1}V_{k-2})\left[(T_kV_{k-1}-T_{k+1}V_{k-2})-Z_k\left(\frac{W_{k-1}+2U_{k}V_{k-2}}{U_kU_{k-1}}\right)\right],
\end{align*}
which gives
\begin{align*}
&L-R\\
=\ &\frac{\dot W_{k-1}}{W_{k-1}}-2\frac{\dot U_{k-1}}{U_{k-1}}+2\frac{(T_kV_{k-1}-T_{k+1}V_{k-2})V_{k-2}}{W_{k-1}U_{k-1}}\\
=\ &\frac{2}{W_{k-1}U_{k-1}}\left[(\dot U_{k-1}V_{k-1}-\dot U_k V_{k-2})U_{k-1}-(U_{k-1}V_{k-1}-U_kV_{k-2})\dot U_{k-1}-V_{k-2}(V_{k-2}T_{k+1}-V_{k-1}T_k)\right]\\
=\ &\frac{2V_{k-2}}{W_{k-1}U_{k-1}}\left[\dot U_{k-1} U_k-\dot U_k U_{k-1}-V_{k-2}T_{k+1}+V_{k-1}T_k\right].
\end{align*}
The above expression vanishes once the first identity in Corollary \ref{coro_NV_der_uvt} is employed. Therefore the proof is completed.
\end{proof}

So far, the Pfaffian has been successfully applied to reformulating the multipeakon solutions of the Novikov equation. As a consequence, the proof above implies that the Novikov peakon ODEs  can be viewed as an isospectral flow on the manifold cut out by Pfaffian identities.

\section{Novikov peakons and finite B-Toda lattices}\label{sec:NV_btoda}

Recall that, the CH equation is related to the first negative flow in the hierarchy of the Korteweg-de Vries(KdV) equation, and the Novikov equation to a negative flow in the Sawada-Kotera(SK) hierarchy \cite{hone2008integrable}, via some reciprocal transformations. The KdV equation belongs to AKP type and the SK equation is of BKP type.

On the other hand, it is shown in \cite{beals2001peakons} that, there is a close connection between the CH peakon ODEs and the finite Toda lattice, which is an AKP type equation. 
 Thus we suspect there are some relations between the Novikov peakon ODEs \eqref{NV_eq:peakon} and some BKP type lattice, which is the subject of this section.

As is indicated in Section \ref{sec:NV_pf}, by use of inverse spectral method, the Novikov peakon ODE system \eqref{NV_eq:peakon} is linearised to
\[
\dot \zeta_j(t)=0 \qquad \dot b_j(t)=\frac{b_j(t)}{\zeta_j}.
\]
Consequently, 
\[
f_i(t)=\sum_{j=1}^n(\zeta_j)^ib_j(t)
\]
undergoes the evolution
\begin{equation*}
\dot f_i=f_{i-1}.
\end{equation*}
In some sense, this can be viewed as a negative flow.

We plan to start from a linearised flow in positive direction by introducing
\begin{equation}
f_i(t)=\sum_{j=1}^n(\zeta_j)^ib_j(t) \label{exp:g}
\end{equation}
with 
\begin{align*}
&\dot \zeta_j(t)=0, && 0<\zeta_1<\cdots<\zeta_n,\\
&\dot b_j(t)=\zeta_jb_j(t), &&b_j(0)>0,
\end{align*}
yielding
\begin{equation}
\dot f_i=f_{i+1}.\label{evo:g}
\end{equation}
Then we would like to seek a nonlinear ODE system after "dressing clothes". 
Eventually, we find that the finite B-Toda equation \cite{gilson2003two,hirota2001soliton,miwa1982hirota} as follows can be obtained:
\begin{equation}\label{eq:btoda}
\dot p_k=p_k(q_{k+1}-2q_{k}+q_{k-1}), \qquad \dot q_k=(q_{k+1}-q_{k-1})p_k,\qquad k=1,2,\ldots,n,
\end{equation}
with the boundary $p_0=p_n=q_0=p_nq_{n+1}=0.$
\begin{remark}
Here the nonlinear form \eqref{eq:btoda} is recognized as B-Toda lattice since it is similar to the well-known Toda lattice \cite{beals2001peakons}. Besides, as we'll see soon, it has a bilinear form, which is nothing but the bilinear B-Toda lattice studied in \cite{gilson2003two,hirota2001soliton,miwa1982hirota}.
\end{remark}
\subsection{Finite B-Toda lattice} In this subsection, we will present how to derive the finite B-Toda lattice after "dressing clothes" step by step.

First of all, let's introduce the tau functions $\{\tau_k\}_{k \in \mathbb Z}$.
\begin{define} \label{def:tauk}
For any integer $k$, define
\begin{equation}
\tau_k=\sum_{J\in{\left(\substack{[1,n]\\k}\right)}}b_J\frac{\Delta_J^2}{\Gamma_J},
\end{equation}
with
  \begin{equation*}
   b_J=\prod_{j\in J}b_j,
    \qquad
    \zeta_J^l=\prod_{j\in J}\zeta_j^l,
    \qquad
    \Delta_J=\prod_{i,j\in J, \, i<j}(\zeta_j-\zeta_i),\qquad
   \Gamma_J=\prod_{i,j\in J, \, i<j}(\zeta_j+\zeta_i),
  \end{equation*}
  for some index set $J$, and 
  \begin{align*}
&\dot \zeta_j(t)=0, && 0<\zeta_1<\cdots<\zeta_n,\\
&\dot b_j(t)=\zeta_jb_j(t), &&b_j(0)>0.
\end{align*} Obviously, we have
  \[
 \tau_0=1,\qquad \tau_k=0, \qquad k<0 \ \  \text{or}\ \  k>n.
  \]

  \end{define}
    
We will see that $\{\tau_k\}_{k\in \mathbb Z}$ satisfy an ODE system, as is indicated in Theorem \ref{th:ode:tau}, before which, it is necessary to give the following statements. 
\begin{lemma}\label{lem:deb_btoda}
If the Pfaffian entries are defined as 
\begin{equation}
( d_0, i)=f_i(t),\qquad ( i, j)={\iint\limits_{-\infty<t_1<t_2<t}}\left[ f_{i+1}(t_1)f_{j+1}(t_2)-f_{i+1}(t_2)f_{j+1}(t_1)\right] dt_1 dt_2,\label{exp:tau_pf}
\end{equation}
with \eqref{exp:g},
 then
 \begin{enumerate}
\item  for  $k\in \mathbb{N}$, $l\in \mathbb Z$,
\begin{align*}
&( l,{l+1},\cdots, {l+2k-1})=\idotsint \limits_{-\infty<t_1<\cdots<t_{2k}<t} \det [f_{l+i}(t_j)]_{i,j=1,\cdots,2k}dt_1\cdots dt_{2k},\\
&(d_0,l,l+1,\cdots, l+2k-2)=\idotsint\limits_{-\infty<t_1<\cdots<t_{2k-1}<t} \det [f_{l+i}(t_j)]_{i,j=1,\cdots,2k-1}dt_1\cdots dt_{2k-1},
\end{align*}
\item for  $1\leq k\leq n$, $l\in \mathbb Z$,
\begin{align*}
&\idotsint \limits_{-\infty<t_1<\cdots<t_{k}<t} \det [f_{l+i}(t_j)]_{i,j=1,\cdots,k}dt_1\cdots dt_{k}=\sum_{J\in{\left(\substack{[1,n]\\k}\right)}}\zeta_J^lb_J\frac{\Delta_J^2}{\Gamma_J},
\end{align*}
\item  for $k>n,$ $l\in \mathbb Z$,
$$\det (f_{l+i}(t_j))_{i,j=1,\cdots,k}=0.$$
\end{enumerate}
\end{lemma}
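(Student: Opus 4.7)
The structure of this lemma is strictly parallel to Lemma \ref{lem:debr}, so the plan is to adapt that proof while tracking two changes: the positive time flow $\dot f_i=f_{i+1}$ in place of $\dot f_i=f_{i-1}$, and the different sign convention chosen for the Pfaffian entries in \eqref{exp:tau_pf}. As in the earlier proof I reduce to $l=0$; the general case follows by an index shift.

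Part (1) is an immediate application of the de Bruijn formulas \eqref{id_deBr1}--\eqref{id_deBr2}: the multiple integrals in the statement are exactly the objects de Bruijn converts into Pfaffians, and a direct inspection shows that the resulting Pfaffian entries coincide with \eqref{exp:tau_pf}. Note that the ordering $f_{i+1}(t_1)f_{j+1}(t_2)-f_{i+1}(t_2)f_{j+1}(t_1)$ in \eqref{exp:tau_pf} (smaller index paired with smaller time variable) is precisely what absorbs the $(-1)^k$ that was present in Lemma \ref{lem:debr}, so no overall sign remains. Part (3) follows from the factorization
\[
(f_{l+i}(t_j))_{i,j=1,\dots,k}=\bigl(\zeta_r^{l+i}\bigr)_{i,r}\cdot\bigl(b_r(t_j)\bigr)_{r,j},
\]
which forces the rank to be at most $n$, so the determinant vanishes whenever $k>n$.

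Part (2) is the substantive step, and I follow exactly the template of Lemma \ref{lem:debr}. Applying Cauchy--Binet and the Vandermonde determinant to the factorization above rewrites the multiple integral in the form $\sum_{J\in\binom{[1,n]}{k}}\zeta_J^{l+1}\Delta_J\cdot\mathcal{B}_J$, where $\mathcal{B}_J$ is the integral of $\det[b_{j_m}(t_r)]$ over the simplex $t_1<\cdots<t_k<t$. A second application of de Bruijn converts $\mathcal{B}_J$ into the Pfaffian of an antisymmetric matrix whose $(j_m,j_l)$ entry is the explicit double integral $\iint_{t_1<t_2<t}[b_{j_m}(t_1)b_{j_l}(t_2)-b_{j_m}(t_2)b_{j_l}(t_1)]\,dt_1dt_2$. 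Using $b_r(t)=b_r(0)e^{\zeta_r t}$ this entry evaluates in closed form to $\frac{\zeta_{j_l}-\zeta_{j_m}}{\zeta_{j_m}\zeta_{j_l}(\zeta_{j_m}+\zeta_{j_l})}\,b_{j_m}(t)b_{j_l}(t)$. Factoring the scalar $b_J/\zeta_J$-type data out of the Pfaffian and invoking the Schur identities \eqref{id_schur}--\eqref{id_schur_odd} yields $\Delta_J/\Gamma_J$, and multiplying all prefactors produces the claimed $\sum_J \zeta_J^l b_J \Delta_J^2/\Gamma_J$. The drop in $\zeta$-weight from $l+2$ (in Lemma \ref{lem:debr}) to $l$ is the natural reflection of the fact that each integration of the new flow \emph{lowers} rather than raises the $\zeta$-exponent.

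The main obstacle I expect is the careful sign accounting in part (2): the modified Pfaffian-entry convention in \eqref{exp:tau_pf}, the sign arising from the Vandermonde expansion inside Cauchy--Binet, and the sign produced by the Schur identity must cancel so that no $(-1)^{k(k-1)/2}$ survives, in contrast with the formula of Lemma \ref{lem:debr}. The odd-order case requires the single-row de Bruijn entry $(d_0,i)=f_i(t)$ together with the odd Schur identity \eqref{id_schur_odd}, but the structure of the argument is otherwise identical to the even case, so I will write out the even case in full and indicate the odd one as parallel.
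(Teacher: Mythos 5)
Your proposal is correct and follows essentially the same route as the paper's own proof: de Bruijn's formulas \eqref{id_deBr1}--\eqref{id_deBr2} for part (1), the rank bound from the matrix factorization for part (3), and for part (2) the Cauchy--Binet/Vandermonde expansion followed by the explicit evaluation of the double-integral Pfaffian entry and Schur's identity \eqref{id_schur}. Your sign accounting (the entry ordering in \eqref{exp:tau_pf} absorbing the $(-1)^k$ of Lemma \ref{lem:debr}, and the $\zeta$-weight dropping to $\zeta_J^l$ because the positive flow makes each integration contribute $1/\zeta_j$ rather than $\zeta_j$) matches the paper's computation of $\mathcal C_J=\frac{b_J}{\zeta_J}\frac{\Delta_J}{\Gamma_J}$ exactly.
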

\begin{proof}
The argument is almost the same as Lemma \ref{lem:debr}. The first conclusion is nothing but a consequence of applying  de Bruijn' formulae \eqref{id_deBr1}-\eqref{id_deBr2}. The third conclusion immediately follows by noticing the matrix decomposition.

Regarding the second result, we sketch its proof for the even case. The argument for the odd case is similar. 

Without loss of generality, take $l=0$.  By following the computation in the proof of Lemma \ref{lem:debr}, it is not hard to obtain

\begin{align}
&\idotsint \limits_{-\infty<t_1<\cdots<t_{2k}<t} \det [f_{i}(t_j)]_{i,j=1,\cdots,2k}dt_1\cdots dt_{2k}=\sum_{J\in\left(\substack{[1,n]\\2k}\right)}\zeta_J\cdot\Delta_J\cdot\mathcal C_J,\label{CJ1}
\end{align}
where
$$\mathcal C_J=(j_1,\cdots,j_{2k}),\quad \text{with}\quad 
(j_m,j_l)=\iint\limits_{-\infty<t_1<t_2<t}b_{j_m}(t_1)b_{j_l}(t_2)-b_{j_l}(t_1)b_{j_m}(t_2)dt_1dt_2.$$
By using
$$
b_i(t)=b_i(0)e^{\zeta_it},
$$
we can explicitly work out 
$$(j_m,j_l)=\frac{\zeta_{j_l}-\zeta_{j_m}}{\zeta_{j_m}+\zeta_{j_l}}\frac{b_{j_m}b_{j_l}}{\zeta_{j_m}\zeta_{j_l}},$$
which leads to
\[\mathcal C_J=\frac{b_J}{\zeta_J}\cdot (j_1^*,\cdots,j_{2k}^*),\qquad \text{with}\qquad  
(j_m^*,j_l^*)=\frac{\zeta_{j_l}-\zeta_{j_m}}{\zeta_{j_m}+\zeta_{j_l}}.
\]
By use of the Schur's Pfaffian identity \eqref{id_schur}, we get
\begin{equation}
\mathcal C_J=\frac{b_J}{\zeta_J}\frac{\Delta_J}{\Gamma_J}. \label{CJ_pf}
\end{equation}
Combining \eqref{CJ1} and \eqref{CJ_pf}, we eventually confirm the identity.
\end{proof}

From the above lemma, it is clear that
\begin{coro} \label{coro:tau_pfa}
For any $k\in \mathbb Z$, 
\begin{equation*}
\tau_{2k}=(0,\cdots,2k-1),\qquad\tau_{2k-1}=(d_0,0,\cdots,2k-2),
\end{equation*}
where
\begin{equation*}
(d_0,i)=f_i(t),\qquad ( i, j)={\iint\limits_{-\infty<t_1<t_2<t}}\left[ f_{i+1}(t_1)f_{j+1}(t_2)-f_{i+1}(t_2)f_{j+1}(t_1)\right] dt_1 dt_2,
\end{equation*}
with
 \[
f_i(t)=\sum_{j=1}^n(\zeta_j)^ib_j(t),\qquad \qquad 
 0<\zeta_1<\cdots<\zeta_n, \qquad b_j(t)=b_j(0)e^{\zeta_jt}>0.\]
\end{coro}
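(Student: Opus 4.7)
The plan is to obtain Corollary \ref{coro:tau_pfa} as an immediate consequence of Lemma \ref{lem:deb_btoda} together with Definition \ref{def:tauk}, by chaining the two identities in Lemma \ref{lem:deb_btoda} in the case $l=0$.

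First I would rewrite $\tau_k$ using Definition \ref{def:tauk}:
\[
\tau_k = \sum_{J\in\binom{[1,n]}{k}} b_J\,\frac{\Delta_J^2}{\Gamma_J}
       = \sum_{J\in\binom{[1,n]}{k}} \zeta_J^{\,0}\, b_J\,\frac{\Delta_J^2}{\Gamma_J},
\]
which is exactly the right-hand side of part (2) of Lemma \ref{lem:deb_btoda} specialised to $l=0$. Hence for $1\le k\le n$,
\[
\tau_k \;=\; \idotsint\limits_{-\infty<t_1<\cdots<t_k<t}\det\bigl[f_i(t_j)\bigr]_{i,j=1,\ldots,k}\,dt_1\cdots dt_k.
\]

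Next I would invoke part (1) of Lemma \ref{lem:deb_btoda} with $l=0$ to rewrite this multiple integral as a Pfaffian: for even $k=2m$ it equals $(0,1,\ldots,2m-1)$, and for odd $k=2m-1$ it equals $(d_0,0,1,\ldots,2m-2)$. Substituting these into the previous display yields the desired formulas
\[
\tau_{2k} = (0,1,\ldots,2k-1), \qquad \tau_{2k-1} = (d_0,0,1,\ldots,2k-2),
\]
valid on the range $1\le 2k\le n$ and $1\le 2k-1\le n$ respectively.

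Finally I would dispose of the boundary cases $k\le 0$ and $k>n$. The case $k=0$ is the convention $\tau_0=1$ noted in Definition \ref{def:tauk}, which matches the convention that the empty Pfaffian equals $1$. For $k>n$, part (3) of Lemma \ref{lem:deb_btoda} gives that the underlying determinants vanish identically, while Definition \ref{def:tauk} makes $\tau_k=0$ by the empty-sum convention; the corresponding Pfaffians vanish by the rank-$n$ structure of the entries, so both sides agree. For $k<0$, both sides are $0$ by convention. There is no real technical obstacle here: everything is just a matter of bookkeeping, since the nontrivial analytic content (evaluating the iterated integrals via de Bruijn's formula and Schur's Pfaffian identity) has already been absorbed into Lemma \ref{lem:deb_btoda}. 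The only mild subtlety is keeping track of parity and the shift between Wronski-type entries $(d_0,i)=f_i$ and Gram-type entries $(i,j)$, to make sure the even/odd split of $\tau_k$ is correctly matched with the two formulas in part (1) of the lemma.
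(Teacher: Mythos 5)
Your proposal is correct and matches the paper's derivation: the paper presents this corollary as an immediate consequence of Lemma \ref{lem:deb_btoda} (parts (1) and (2) at $l=0$) combined with Definition \ref{def:tauk}, which is exactly the chaining you describe. The boundary bookkeeping for $k\le 0$ and $k>n$ is likewise handled by the stated conventions, so nothing is missing.
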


We can also prove that the $\{\tau_k\}_{k \in \mathbb Z}$ undergo the following evolutions.
\begin{lemma} \label{lem:der_btoda}
For any $k\in \mathbb Z$, the following relations hold,
\begin{align}
\dot \tau_{2k}&=(0,\cdots,2k-2,2k)=(d_1,d_0,0,\cdots,2k-1),\label{der:tau1}\\
\dot\tau_{2k-1}&=(d_0,0,\cdots,2k-3,2k-1)=(d_1,0,\cdots,2k-2),\label{der:tau1_odd}\\
\ddot \tau_{2k}&=(d_1,d_0,0,\cdots,2k-2,2k),\label{der:tau2_even}\\
\ddot \tau_{2k-1}&=(d_1,0,\cdots,2k-3,2k-1),\label{der:tau2}
\end{align}
where $(d_1,i)=f_{i+1},\ (d_1,d_0)=0$, and, $(d_0,i)$ and $( i, j)$ are those in  Corollary \ref{coro:tau_pfa}.
\end{lemma}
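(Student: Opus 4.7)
The plan is to follow the strategy of Lemma \ref{lemma_der_pf}, adapted to the positive-flow setting where $\dot f_i = f_{i+1}$ (rather than $\dot f_i = f_{i-1}$). All four identities reduce, via the standard Wronski-type and Gram-type Pfaffian derivative formulas \eqref{der1}--\eqref{der2_2} of Appendix \ref{app_pf}, to the two elementary entry-level derivatives
\[
\dot{(d_0,i)} = (d_1,i), \qquad \dot{(i,j)} = (d_1,d_0,i,j).
\]

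The first rule is immediate from $\dot f_i = f_{i+1}$. For the second, I would differentiate the defining double integral \eqref{exp:tau_pf} with respect to its upper limit $t$, reducing it to the single integral $\int_{-\infty}^t[f_{i+1}(t_1)f_{j+1}(t) - f_{i+1}(t)f_{j+1}(t_1)]\,dt_1$. Since $\zeta_j > 0$ forces $f_i(-\infty) = 0$, a single integration by parts using $\dot f_i = f_{i+1}$ collapses this to $f_i(t)f_{j+1}(t) - f_{i+1}(t)f_j(t)$, which matches the four-term Pfaffian expansion of $(d_1,d_0,i,j)$ (recalling $(d_1,d_0) = 0$).

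With these rules installed, the Wronski and Gram derivative formulas for Pfaffians yield the two equivalent expressions for $\dot\tau_{2k}$ and $\dot\tau_{2k-1}$. The Gram form prepends the pair $(d_1,d_0)$ and, in the odd case, absorbs the pre-existing dummy $d_0$ (promoting it to $d_1$), giving $(d_1,d_0,0,\ldots,2k-1)$ and $(d_1,0,\ldots,2k-2)$ respectively. The Wronski form promotes the highest numerical index by one, giving $(0,\ldots,2k-2,2k)$ and $(d_0,0,\ldots,2k-3,2k-1)$. For the second derivatives, I would iterate once more by applying the Wronski rule to the Gram forms just obtained: this raises the top index $2k-1$ to $2k$ in the even case, yielding $\ddot\tau_{2k} = (d_1,d_0,0,\ldots,2k-2,2k)$, and analogously produces $\ddot\tau_{2k-1} = (d_1,0,\ldots,2k-3,2k-1)$.

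The principal technical obstacle is the integration-by-parts verification of $\dot{(i,j)} = (d_1,d_0,i,j)$, where one needs the vanishing of $f_i(-\infty)$ to discard the boundary term and must correctly identify the two-term result with the appropriate Pfaffian expansion. Beyond that, the argument is pure bookkeeping with the Pfaffian derivative identities already invoked in the proof of Lemma \ref{lemma_der_pf}, and I expect no essentially new difficulties to arise.
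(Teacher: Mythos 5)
Your proposal is correct and follows essentially the same route as the paper: both reduce the lemma to the entry-level derivatives $\frac{d}{dt}(d_0,i)=(d_1,i)$ and $\frac{d}{dt}(i,j)=(d_1,d_0,i,j)=(i+1,j)+(i,j+1)$ (your explicit verification of the latter via differentiating the double integral and using $f_i(-\infty)=0$ is exactly the computation the paper leaves as ``not hard to obtain''), and then invoke the Wronski-type and Gram-type Pfaffian derivative formulae \eqref{der1}--\eqref{der2_2}. The only cosmetic difference is that for the second derivatives the paper applies the Wronski rule first and the Gram rule second, whereas you do the reverse; the two orders commute and produce the same Pfaffians.
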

\begin{proof}
It is easy to see 
\[
\dot f_i=f_{i+1},
\]
which can be equivalently written as
\[
\frac{d}{dt}(d_0,i) =(d_1,i).
\]
Besides, it is not hard to obtain that
\[
\frac{d}{dt}(i,j)=(d_1,d_0,i,j)=(i+1,j)+(i,j+1).
\]
By employing the derivative formulae in \eqref{der1}-\eqref{der2_2}, we obtain \eqref{der:tau1} and \eqref{der:tau1_odd}.

If we apply Wronski-type for the first derivative and Gram-type derivative for the second, then we get \eqref{der:tau2_even} and \eqref{der:tau2} respectively.
\end{proof}

Now we are ready to state the ODE system that $\{\tau_k\}_{k \in \mathbb Z}$ satisfy.
\begin{theorem} \label{th:ode:tau}
$\{\tau_k\}_{k=0}^n$ satisfy the following ODEs:
\begin{equation}\label{bi:tau}
\ddot \tau_k \tau_k-(\dot\tau_k)^2=\tau_{k-1}\dot\tau_{k+1}-\dot\tau_{k-1}\tau_{k+1},\ \ \ k=0,1,\ldots,n,
\end{equation}
with $\tau_{-1}=\tau_{n+1}=0.$
\end{theorem}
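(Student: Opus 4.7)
The plan is to substitute the Pfaffian expressions for $\tau_k$, $\dot\tau_k$, $\ddot\tau_k$ from Corollary \ref{coro:tau_pfa} and Lemma \ref{lem:der_btoda} into \eqref{bi:tau} and recognize each resulting bilinear relation as an instance of a classical Pfaffian identity of Plücker type, as listed in \eqref{pf1}-\eqref{pf2}. The crucial trick, already built into Lemma \ref{lem:der_btoda}, is that $\dot\tau_k$ admits two distinct Pfaffian representations (Wronski-type and Gram-type). When forming the square $(\dot\tau_k)^2$ on the left-hand side, one representation is used on each factor, which is exactly what produces the cross-terms of the desired Pfaffian identity.

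I would split the argument into the even case $k=2m$ and the odd case $k=2m-1$. For $k=2m$, after substitution the identity \eqref{bi:tau} reduces to
\begin{align*}
&(d_1,d_0,0,\ldots,2m-2,2m)\,(0,1,\ldots,2m-1) - (0,\ldots,2m-2,2m)\,(d_1,d_0,0,\ldots,2m-1) \\
&\qquad= (d_0,0,\ldots,2m-2)\,(d_1,0,\ldots,2m) - (d_1,0,\ldots,2m-2)\,(d_0,0,\ldots,2m).
\end{align*}
Taking $\star=(0,1,\ldots,2m-2)$ (of odd length $2m-1$) as a common base and ``extra'' labels $a_1=d_1$, $a_2=d_0$, $a_3=2m-1$, $a_4=2m$, this is exactly the odd-base Pfaffian Plücker relation
\[
\sum_{j=1}^{4}(-1)^{j}\,(a_1,\ldots,\widehat{a_j},\ldots,a_4,\star)\,(a_j,\star)=0.
\]
The odd case $k=2m-1$ proceeds analogously: forming $(\dot\tau_{2m-1})^2$ with the two available forms yields an identity that matches the even-base Plücker relation
\[
(a_1,a_2,a_3,a_4,\star)(\star)=(a_1,a_2,\star)(a_3,a_4,\star)-(a_1,a_3,\star)(a_2,a_4,\star)+(a_1,a_4,\star)(a_2,a_3,\star)
\]
with $\star=(0,\ldots,2m-3)$ and extras $d_1,d_0,2m-2,2m-1$.

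For the boundary cases, $k=0$ is immediate since $\tau_{-1}=0$ and $\tau_0=1$ is constant, while $k=n$ follows from $\tau_{n+1}=0$ together with the observation that $\tau_n$ is a single term $b_{[1,n]}\Delta_{[1,n]}^2/\Gamma_{[1,n]}$ whose logarithmic derivative $\dot\tau_n/\tau_n=\sum_j\zeta_j$ is constant, giving $\ddot\tau_n\tau_n-(\dot\tau_n)^2=0$. The main obstacle I foresee is not the structural idea but the careful bookkeeping of Pfaffian signs when reordering entries between the two representations of $\dot\tau_k$: one must verify that the Wronski- and Gram-type factors are paired on the two sides of the square in the order that produces the $(-1)^j$ pattern of the Pfaffian identity exactly, and likewise for the cross products $\tau_{k\pm 1}\dot\tau_{k\mp 1}$ on the right-hand side. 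Once the signs are pinned down, the identification with the identities in the appendix is direct.
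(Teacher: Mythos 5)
Your proposal is correct and follows essentially the same route as the paper's proof: substitute the Pfaffian expressions from Corollary \ref{coro:tau_pfa} and Lemma \ref{lem:der_btoda}, pair the Wronski-type and Gram-type forms of $\dot\tau_k$ in the square, and recognize the even and odd cases as the Pfaffian identities \eqref{pf1}--\eqref{pf2}. Your explicit treatment of the boundary cases $k=0$ and $k=n$ is a minor (and welcome) addition beyond what the paper spells out.
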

\begin{proof}
The boundary values $\tau_{-1}, \tau_{n+1}$ immediately follow by Lemma \ref{lem:deb_btoda} and the convention.

Inserting \eqref{der:tau1}-\eqref{der:tau2} into \eqref{bi:tau} in the even case and odd case respectively, we get
\begin{align*}
&(d_1,d_0,0,\cdots,2k-2,2k)(0,\cdots,2k-1)-(d_1,d_0,0,\cdots,2k-1)(0,\cdots,2k-2,2k)\\
=\ &(d_0,0,\cdots,2k-2)(d_1,0,\cdots,2k)-(d_1,0,\cdots,2k-2)(d_0,0,\cdots,2k),\\
&(d_1,0,\cdots,2k-1,2k+1)(d_0,0,\cdots,2k)-(d_0,0,\cdots,2k-1,2k+1)(d_1,0,\cdots,2k)\\
=\ &(0,\cdots,2k-1)(d_1,d_0,0,\cdots,2k+1)-(d_1,d_0,0,\cdots,2k-1)(0,\cdots,2k+1),
\end{align*}
which are nothing but the Pfaffian identities in \eqref{pf1}-\eqref{pf2}. Therefore the conclusion follows.
\end{proof}

The equation \eqref{bi:tau} is a bilinear form of the finite B-Toda lattice. The bilinear form \eqref{bi:tau} has appeared as a special case of famous discrete BKP equation (also called Hirota-Miwa equation) \cite{gilson2003two,hirota2001soliton,miwa1982hirota}. In fact, from our derivation, we have given an explicit construction for the so-called "molecule solution" to the  finite bilinear B-Toda lattice \eqref{bi:tau}. What's more, the molecule solution of the finite nonlinear B-Toda lattice \eqref{eq:btoda} can also be obtained.

\begin{theorem}\label{th:btoda_sol}
The finite B-Toda equation \eqref{eq:btoda} admits the solution 
\begin{equation}
p_k=\frac{\tau_{k+1}\tau_{k-1}}{(\tau_k)^2}, \qquad  q_k=\frac{d}{dt}\log \tau_k. \label{exp:pq}
\end{equation}
Here
\begin{equation*}
\tau_{2k}=(0,\cdots,2k-1),\qquad\tau_{2k-1}=(d_0,0,\cdots,2k-2),
\end{equation*}
where
\begin{equation*}
(d_0,i)=f_i(t),\qquad ( i, j)={\iint\limits_{-\infty<t_1<t_2<t}}\left[ f_{i+1}(t_1)f_{j+1}(t_2)-f_{i+1}(t_2)f_{j+1}(t_1)\right] dt_1 dt_2
\end{equation*}
with
 \[
f_i(t)=\sum_{j=1}^n(\zeta_j)^ib_j(t), \qquad 0<\zeta_1<\cdots<\zeta_n, \qquad b_j(t)=b_j(0)e^{\zeta_jt}>0.
 \]
\end{theorem}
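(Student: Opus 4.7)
The plan is to verify the two nonlinear ODEs by direct substitution of the ansatz $p_k=\tau_{k+1}\tau_{k-1}/\tau_k^2$ and $q_k=\dot\tau_k/\tau_k$, leveraging the bilinear identity \eqref{bi:tau} already proved in Theorem \ref{th:ode:tau} together with the boundary conventions $\tau_0=1$ and $\tau_{-1}=\tau_{n+1}=0$. No further Pfaffian manipulations should be needed at this stage; the whole content of the theorem reduces to logarithmic differentiation plus one appeal to the bilinear equation.

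For the first ODE $\dot p_k=p_k(q_{k+1}-2q_k+q_{k-1})$, I would observe that $\log p_k=\log\tau_{k+1}+\log\tau_{k-1}-2\log\tau_k$, so differentiating yields $\dot p_k/p_k = q_{k+1}+q_{k-1}-2q_k$ immediately. For the second ODE $\dot q_k=(q_{k+1}-q_{k-1})p_k$, I would compute
\[
\dot q_k \;=\; \frac{d}{dt}\frac{\dot\tau_k}{\tau_k} \;=\; \frac{\ddot\tau_k\tau_k-(\dot\tau_k)^2}{\tau_k^2},
\]
invoke the bilinear identity \eqref{bi:tau} to rewrite the numerator as $\tau_{k-1}\dot\tau_{k+1}-\dot\tau_{k-1}\tau_{k+1}$, and on the other side simplify
\[
(q_{k+1}-q_{k-1})p_k \;=\; \left(\frac{\dot\tau_{k+1}}{\tau_{k+1}}-\frac{\dot\tau_{k-1}}{\tau_{k-1}}\right)\frac{\tau_{k+1}\tau_{k-1}}{\tau_k^2} \;=\; \frac{\dot\tau_{k+1}\tau_{k-1}-\dot\tau_{k-1}\tau_{k+1}}{\tau_k^2},
\]
which matches term for term.

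Finally, the boundary constraints $p_0=p_n=q_0=0$ (and therefore also $p_nq_{n+1}=0$) are automatic from the conventions: $\tau_{-1}=0$ forces $p_0=0$, $\tau_{n+1}=0$ forces $p_n=0$, and $\tau_0\equiv 1$ forces $q_0=0$. Since the nonlinear system collapses so cleanly to the bilinear identity, there is essentially no obstacle here; all the genuine work (the de Bruijn and Schur Pfaffian identities, and the Pfaffian derivative formulae) was already expended in proving Lemmas \ref{lem:deb_btoda} and \ref{lem:der_btoda} and Theorem \ref{th:ode:tau}. The only thing worth flagging is that \eqref{bi:tau} must be used at the endpoints $k=0$ and $k=n$ as well, where one of the neighbouring $\tau$'s vanishes, but at those indices both sides of \eqref{bi:tau} reduce consistently under the stated conventions, so the argument goes through uniformly in $k$.
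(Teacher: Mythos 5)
Your proposal is correct and follows essentially the same route as the paper's own proof: the first equation is logarithmic differentiation of \eqref{exp:pq}, the second reduces after clearing $\tau_k^2$ to the bilinear identity \eqref{bi:tau} established in Theorem \ref{th:ode:tau}, and the boundary conditions follow from $\tau_{-1}=\tau_{n+1}=0$, $\tau_0=1$. Your explicit check of the boundary values and the remark that $p_n=0$ handles the $p_nq_{n+1}=0$ condition are slightly more detailed than the paper's treatment but add nothing substantively different.
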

\begin{proof}
Recall that, Lemma \ref{lem:deb_btoda} tells us
$$\tau_0=1,\tau_{-1}=\tau_{n+1}=0,$$
and 
$$\tau_k>0,\qquad k=1,2,\ldots,n,$$
which is compatible with the boundary condition of the finite B-toda lattice \eqref{eq:btoda}.

It is obvious that the first equation in \eqref{eq:btoda} holds if we substitute the expression \eqref{exp:pq} into it. In the following, we will confirm the validity of the second equation, which is equivalent to verifying
\[
\ddot \tau_k \tau_k-(\dot\tau_k)^2=\tau_{k-1}\dot\tau_{k+1}-\dot\tau_{k-1}\tau_{k+1}.
\]
Clearly, this is the conclusion of Theorem \ref{th:ode:tau}. Therefore, the proof is completed.
\end{proof}
At the end of this subsection, we would like to introduce the variables 
\begin{equation}
\sigma_k=\sum_{J\in{\left(\substack{[1,n]\\k}\right)}}\zeta_JB_J\frac{\Delta_J^2}{\Gamma_J},\qquad k\in \mathbb{Z},
\end{equation}
which admit similar properties to $\{\tau_k\}_{k \in \mathbb Z}$. It is not hard to get
\begin{theorem}\label{th:dotsigma}
For any $k\in \mathbb Z$,
\begin{equation*}
\sigma_{2k}=(1,\cdots,2k),\qquad\sigma_{2k-1}=(d_0,1,\cdots,2k-1),
\end{equation*}
with $\sigma_0=1,\sigma_k=0$ for $k<0 \ \text{or}\ \ k>n,$ and
\begin{align*}
\dot \sigma_{2k}&=(1,\cdots,2k-1,2k+1)=(d_1,d_0,1,\cdots,2k),\\
\dot\sigma_{2k-1}&=(d_0,1,\cdots,2k-2,2k)=(d_1,1,\cdots,2k-1),\\
\ddot \sigma_{2k}&=(d_1,d_0,1,\cdots,2k-1,2k+1),\\
\ddot \sigma_{2k-1}&=(d_1,1,\cdots,2k-2,2k),\\
\ddot \sigma_k \sigma_k&-(\dot\sigma_k)^2=\sigma_{k-1}\dot\sigma_{k+1}-\dot\sigma_{k-1}\sigma_{k+1}.
\end{align*}
where the Pfaffian entries are those in Lemma \ref{lem:der_btoda}.
\end{theorem}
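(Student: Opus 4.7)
The plan is to mirror, step by step, the entire treatment developed for $\tau_k$ in Corollary \ref{coro:tau_pfa}, Lemma \ref{lem:der_btoda}, and Theorem \ref{th:ode:tau}. The key observation is that $\sigma_k$ differs from $\tau_k$ only by the insertion of an extra factor $\zeta_J$ in each summand. In view of part (2) of Lemma \ref{lem:deb_btoda}, this simply corresponds to shifting the parameter $l$ from $0$ to $1$, so the same de Bruijn machinery applies verbatim.

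First I would specialize part (2) of Lemma \ref{lem:deb_btoda} to $l=1$, which immediately identifies $(1,2,\ldots,2k)=\sigma_{2k}$ and $(d_0,1,2,\ldots,2k-1)=\sigma_{2k-1}$. The boundary conditions $\sigma_0=1$ (empty Pfaffian) and $\sigma_k=0$ for $k<0$ or $k>n$ follow from part (3) of the same lemma together with the usual conventions on Pfaffians of negative order. Next, for the derivative formulas I would reuse the computations $\dot f_i = f_{i+1}$, equivalently $\tfrac{d}{dt}(d_0,i)=(d_1,i)$ and $\tfrac{d}{dt}(i,j)=(i+1,j)+(i,j+1)=(d_1,d_0,i,j)$, and then apply the Wronski-type derivative formula \eqref{der1} to obtain the first (Wronski-form) expressions for $\dot\sigma_{2k}$ and $\dot\sigma_{2k-1}$, and the Gram-type formula \eqref{der2_2} for the second (Gram-form) expressions. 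The second derivatives $\ddot\sigma_{2k}$ and $\ddot\sigma_{2k-1}$ are then obtained by differentiating the Gram-type first derivatives once more, exactly as in the derivation of \eqref{der:tau2_even} and \eqref{der:tau2}.

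Finally, substituting the four Pfaffian expressions for $\sigma_k,\dot\sigma_k,\ddot\sigma_k$ into the claimed bilinear identity $\ddot\sigma_k\sigma_k-(\dot\sigma_k)^2=\sigma_{k-1}\dot\sigma_{k+1}-\dot\sigma_{k-1}\sigma_{k+1}$ turns it, in the even and odd cases separately, into an instance of the Pfaffian identities \eqref{pf1}-\eqref{pf2} with the two extra indices $d_0, d_1$, exactly as in the proof of Theorem \ref{th:ode:tau}. The only genuine bookkeeping obstacle I foresee is keeping the index shift from $l=0$ to $l=1$ consistent across the Wronski- and Gram-type sides of each derivative formula, and checking the degenerate boundary cases $k=0$ and $k=n$ where $\sigma_{k-1}$ or $\sigma_{k+1}$ vanishes; once the $\tau_k$ arguments are at hand, no new analytical idea is required.
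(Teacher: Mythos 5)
Your proposal is correct and is essentially the paper's own proof, which simply says to repeat the $\tau_k$ argument in a parallel way: the identification via Lemma \ref{lem:deb_btoda} with $l=1$, the Wronski/Gram derivative formulas, and the reduction of the bilinear equation to the Pfaffian identities \eqref{pf1}--\eqref{pf2} are exactly the intended steps. The only slight imprecision is that the second derivatives arise by applying the Gram-type rule to the Wronski-form first derivatives (as in \eqref{der:tau2_even}--\eqref{der:tau2}), not by differentiating the Gram-form expressions, but this does not affect the argument.
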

\begin{proof}
The proof can be achieved by following the steps for $\{\tau_k\}_{k \in \mathbb Z}$ in a parallel way.
\end{proof}
As we will see, $\{\sigma_k\}_{k \in \mathbb Z}$ will be involved in the next subsection.

\subsection{Finite modified B-Toda lattice} In order to interpret  how the positive and negative flows are involved, it is helpful to introduce a modification of the B-Toda lattice \eqref{eq:btoda}, which we call a modified B-Toda lattice. The finite modified B-Toda lattice reads
\begin{align} \label{eq:mbtoda}
&\dot  r_k=\frac{r_k(\dot s_{k-1}+\dot s_k)}{s_{k-1}-s_{k}},\qquad \dot s_k=\frac{r_{k+1}}{r_k}(s_{k-1}-s_{k+1}), \qquad k=1,2,\ldots,n,
\end{align}
with $s_0=r_{n+1}=0$ and $r_{n+1}s_{n+1}\neq 0.$ 

We also have its solution as is described below.
\begin{theorem}\label{th:mbtoda}
The finite modified B-Toda equation \eqref{eq:mbtoda} admits the solution
\begin{align}
r_k=\frac{\tau_{k}}{\tau_{k-1}},\qquad s_k=\frac{\sigma_{k-1}}{\tau_k}. \label{exp:rs}
\end{align}

\end{theorem}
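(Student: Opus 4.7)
The strategy mirrors the proof of Theorem \ref{th:btoda_sol}: substitute the ansatz \eqref{exp:rs} directly into \eqref{eq:mbtoda}, clear denominators, and recognise the resulting identities as Pfaffian identities of Plücker type coming from \eqref{pf1}-\eqref{pf2}. The boundary conditions follow immediately from Lemma \ref{lem:deb_btoda} and Theorem \ref{th:dotsigma}: one has $s_0=\sigma_{-1}/\tau_0=0$, $r_{n+1}=\tau_{n+1}/\tau_n=0$, and $r_{n+1}s_{n+1}=\sigma_n/\tau_n\neq 0$.

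The second equation of \eqref{eq:mbtoda} is the cleaner one. Using $r_{k+1}/r_k=\tau_{k+1}\tau_{k-1}/\tau_k^2$ and multiplying through by $\tau_k^2\tau_{k-1}\tau_{k+1}$ reduces it to the bilinear identity
\begin{equation*}
(\mathrm{B}):\quad \dot\sigma_{k-1}\tau_k-\sigma_{k-1}\dot\tau_k+\sigma_k\tau_{k-1}-\sigma_{k-2}\tau_{k+1}=0.
\end{equation*}
For the first equation, substituting \eqref{exp:rs} and clearing by $\tau_{k-1}^3\tau_k$ yields
\begin{equation*}
(\dot\tau_k\tau_{k-1}-\tau_k\dot\tau_{k-1})(\sigma_{k-2}\tau_k-\sigma_{k-1}\tau_{k-1})=\tau_k^2(\dot\sigma_{k-2}\tau_{k-1}-\sigma_{k-2}\dot\tau_{k-1})+\tau_{k-1}^2(\dot\sigma_{k-1}\tau_k-\sigma_{k-1}\dot\tau_k).
\end{equation*}
Expanding both sides, the terms proportional to $\sigma_{k-1}\dot\tau_k\tau_{k-1}^2$ and to $\sigma_{k-2}\dot\tau_{k-1}\tau_k^2$ cancel in pairs; factoring out the common $\tau_k\tau_{k-1}$ leaves
\begin{equation*}
(\mathrm{A}):\quad \dot\tau_k\sigma_{k-2}+\dot\tau_{k-1}\sigma_{k-1}=\tau_k\dot\sigma_{k-2}+\tau_{k-1}\dot\sigma_{k-1}.
\end{equation*}

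It remains to verify (A) and (B). Treating $k=2m$ and $k=2m+1$ separately, I would rewrite each quantity via the Pfaffian expressions of Lemma \ref{lem:der_btoda} and Theorem \ref{th:dotsigma}. For instance, (B) with $k=2m$, using the Wronski-type derivatives, becomes
\begin{align*}
&(d_0,1,\ldots,2m-2,2m)(0,1,\ldots,2m-1)-(d_0,1,\ldots,2m-1)(0,1,\ldots,2m-2,2m)\\
&\qquad +(1,2,\ldots,2m)(d_0,0,1,\ldots,2m-2)-(1,2,\ldots,2m-2)(d_0,0,1,\ldots,2m)=0,
\end{align*}
which is precisely the Plücker-type Pfaffian identity \eqref{pf1}-\eqref{pf2} applied to the four ``free'' entries $\{d_0,0,2m-1,2m\}$ with common tail $(1,2,\ldots,2m-2)$, up to the (trivial) signs coming from reordering each Pfaffian into canonical form. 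Identity (A), and the odd-$k$ cases of both (A) and (B), follow from analogous Plücker identities with appropriately chosen four free entries and common tail.

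The main obstacle is not conceptual but combinatorial: in each of the four cases (two parities for each of (A) and (B)) one must locate the correct four-element set of free entries and common tail so that \eqref{pf1}-\eqref{pf2} exactly matches the desired identity, and keep careful track of the signs produced when sorting Pfaffian entries into canonical order. Once that identification is made the verification is mechanical, paralleling the Pfaffian derivation used in Theorem \ref{th:ode:tau}.
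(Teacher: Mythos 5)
Your proposal is correct and follows essentially the same route as the paper: both reduce the two equations of \eqref{eq:mbtoda} to the bilinear identities $\dot\sigma_{k-1}\tau_{k-1}-\sigma_{k-1}\dot\tau_{k-1}=\sigma_{k-2}\dot\tau_{k}-\dot\sigma_{k-2}\tau_{k}$ (your (A), rearranged) and $\dot\sigma_{k-1}\tau_k-\sigma_{k-1}\dot\tau_k=\sigma_{k-2}\tau_{k+1}-\sigma_{k}\tau_{k-1}$ (your (B)), and then recognize these, parity by parity, as the Pfaffian identities \eqref{pf1}--\eqref{pf2}. The only cosmetic difference is that the paper reaches (A) via a logarithmic-derivative rearrangement while you clear denominators directly.
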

\begin{proof}
As for the first equation, it is equivalent to prove
\[
s_k\left(\frac{\dot r_k}{r_k}+\frac{\dot s_k}{s_k}\right)=s_{k-1}\left(\frac{\dot r_k}{r_k}-\frac{\dot s_{k-1}}{s_{k-1}}\right).
\]
The above equation can be rewritten as
\[
s_k\frac{d}{dt}{\log(r_ks_k)}=s_{k-1}\frac{d}{dt}(\log\frac{r_{k}}{s_{k-1}}),
\]
which leads to
\[
\dot\sigma_{k-1}\tau_{k-1}-\sigma_{k-1}\dot\tau_{k-1}=\sigma_{k-2}\dot\tau_{k}-\dot\sigma_{k-2}\tau_{k}.
\]
If the Pfaffian expression of each term is plug into these equations, they become Pfaffian identities in \eqref{pf1}-\eqref{pf2} in the odd and even cases, respectively. Thus, they are valid.

Now let's turn to the second equation. 
If we substitute the expressions \eqref{exp:rs} into  the second equation in \eqref{eq:mbtoda}, it is easy to see that it suffices to prove 
\[
\dot \sigma_{k-1}\tau_k-\sigma_{k-1}\dot\tau_k=\sigma_{k-2}\tau_{k+1}-\sigma_{k}\tau_{k-1}.
\]
Consider the odd and even cases respectively. These are nothing but Pfaffian identities in \eqref{pf1}-\eqref{pf2} if $\{\tau_k\}_{k \in \mathbb Z}$ and $\{\sigma_k\}_{k \in \mathbb Z}$ and their derivatives  are expressed in terms of Pfaffians.

\end{proof}

\subsection{Positive and negative flows} Now we are ready to give an interpretation on the relation between Novikov peakon ODEs and the finite modified B-Toda lattice, which is summarized as the following theorem\footnote{In fact, there exists a straightforward correspondence between the Novikov peakons and finite B-Toda lattices, which is presented in \cite[Appendix B]{chang2018degasperis}.}.

\begin{theorem}\label{th:nv_btoda}
Let 
\begin{align*}
& \quad  W_k(0)=V_k(0)U_k(0)-V_{k-1}(0)U_{k+1}(0),\qquad \qquad Z_k(0)=T_k(0)U_k(0)-T_{k+1}(0)U_{k-1}(0),\\
&T_k(0)=\left(\sum_{I\in{\left(\substack{[1,n]\\k}\right)}}\frac{\Delta_I^2}{\zeta_I\Gamma_I}b_I\right)_{t=0}, \ U_k(0)=\left(\sum_{I\in{\left(\substack{[1,n]\\k}\right)}}\frac{\Delta_I^2}{\Gamma_I}b_I\right)_{t=0}, \  V_k(0)=\left(\sum_{I\in{\left(\substack{[1,n]\\k}\right)}}\frac{\Delta_I^2}{\Gamma_I}\zeta_Ib_I\right)_{t=0},
\end{align*}
 with
 \[
 0<\zeta_1(0)<\zeta_2(0)<\cdots<\zeta_n(0),\qquad b_j(0)>0.
 \]
 \begin{enumerate}
 \item Introduce the variables $\{x_k(0),m_k(0)\}_{k=1}^n$ defined by
$$ x_{k'}(0)=\frac{1}{2}\log\frac{Z_k(0)}{W_{k-1}(0)},\qquad m_{k'}(0)=\frac{\sqrt{Z_k(0)W_{k-1}(0)}}{U_k(0)U_{k-1}(0),}$$
where $k'=n+1-k.$ 

If $\{\zeta_j(t),b_j(t)\}_{j=1}^n$ evolve as 
 \[
 \dot \zeta_j=0,\qquad \dot b_j=\frac{b_j}{\zeta},
 \]
then $\{x_k(t),m_k(t)\}_{k=1}^n$  satisfy the Novikov peakon ODEs \eqref{NV_eq:peakon}.
\item 
 Introduce the variables $\{r_k(0),s_k(0)\}_{k=1}^n$ defined by
 \[
r_k(0)=\frac{U_{k}(0)}{U_{k-1}(0)},\qquad s_k(0)=\frac{V_{k-1}(0)}{U_k(0)}. 
\]
 If $\{\zeta_j(t),b_j(t)\}_{j=1}^n$ evolve as 
 \[
 \dot \zeta_j=0,\qquad \dot b_j=\zeta_jb_j,
 \]
 then $\{r_k(t),s_k(t)\}_{k=1}^n$  satisfy the finite modified B-Toda lattice \eqref{eq:mbtoda}.
 \item There exists a mapping from $\{x_k(0),m_k(0)\}_{k=1}^n$ to $\{r_k(0),s_k(0)\}_{k=1}^n$ according to
\begin{align*}
 &r_k(0)=\frac{e^{x_n(0)}}{m_n(0)}\prod_{j=1}^{k-1}\frac{1}{2m_{j'}(0)m_{j'-1}(0)\cosh x_{j'}(0)\cosh x_{j'-1}(0)(\tanh x_{j'}(0)-\tanh x_{j'-1}(0))}, \\
 &s_k(0)=\sum_{j=1}^{k}m_{j'(0)}e^{-x_{j'}(0)}.
\end{align*}
 \end{enumerate}
 \end{theorem}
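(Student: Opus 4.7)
The strategy is to dispatch the three parts one at a time, with the first two essentially reducing to earlier results and the third requiring a direct algebraic verification.

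Part (1) is a restatement of Theorem \ref{th:NV_solution}. With the dynamics $\dot\zeta_j=0,\ \dot b_j=b_j/\zeta_j$, the quantities $T_k(t),U_k(t),V_k(t),W_k(t),Z_k(t)$ are exactly those appearing in the $n$-peakon solution constructed there, and the Pfaffian argument in the previous subsection already shows that the resulting $\{x_k(t),m_k(t)\}$ satisfy the Novikov peakon ODEs, so nothing new is needed. Part (2) follows by matching Pfaffian expressions: under the positive flow $\dot b_j=\zeta_jb_j$, the sums $U_k$ and $V_k$ from Theorem \ref{th:NV_solution} take exactly the form of the tau functions $\tau_k$ and $\sigma_k$ of Definition \ref{def:tauk} and Theorem \ref{th:dotsigma}. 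Therefore $r_k=U_k/U_{k-1}=\tau_k/\tau_{k-1}$ and $s_k=V_{k-1}/U_k=\sigma_{k-1}/\tau_k$, and Theorem \ref{th:mbtoda} says these satisfy the modified B-Toda lattice \eqref{eq:mbtoda}.

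For Part (3), the plan is a direct computation. The formula for $s_k(0)$ comes by reading off from the definitions of $x_{k'}(0)$ and $m_{k'}(0)$ that
\[
m_{j'}(0)\,e^{-x_{j'}(0)}=\frac{W_{j-1}(0)}{U_j(0)U_{j-1}(0)},
\]
so that the summation identity $V_{k-1}/U_k=\sum_{j=1}^{k}W_{j-1}/(U_jU_{j-1})$ of Lemma \ref{lem:sum_wz} immediately gives $s_k(0)=\sum_{j=1}^{k}m_{j'}(0)\,e^{-x_{j'}(0)}$. The formula for $r_k(0)=U_k(0)/U_{k-1}(0)$ is obtained by combining $Z_k+W_{k-1}=2m_{k'}U_kU_{k-1}\cosh x_{k'}$ (which follows from $e^{\pm x_{k'}}=(Z_k/W_{k-1})^{\pm1/2}$ and the formula for $m_{k'}$) with the bijection identity $l_{k'-1}=2U_k^{4}/[(Z_k+W_{k-1})(Z_{k+1}+W_k)]$ recalled in the introduction, yielding
\[
\frac{U_k^{2}}{U_{k-1}U_{k+1}}=2\,l_{k'-1}\,m_{k'}m_{(k+1)'}\cosh x_{k'}\cosh x_{(k+1)'}.
\]
Telescoping the reciprocal over $j=1,\ldots,k-1$ collapses the product to $U_k/(U_1U_{k-1})$ using $U_0=1$, and noting $W_0=V_0U_0-V_{-1}U_1=1$ lets one read off $U_1(0)=e^{x_n(0)}/m_n(0)$ from the $k=1$ case of the defining formulas, which matches the stated prefactor (recall $n=1'$, and that in the product $j'-1=(j+1)'$, so $\tanh x_{j'}-\tanh x_{j'-1}=l_{j'-1}$).

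The main obstacle is the bookkeeping in Part (3), in particular getting the telescoping step to close up cleanly and correctly identifying the prefactor $U_1(0)$ with $e^{x_n(0)}/m_n(0)$; once those are in place, the whole theorem follows from results already established.
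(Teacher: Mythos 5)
Your proposal is correct and follows essentially the same route as the paper: parts (1) and (2) are delegated to Theorem \ref{th:NV_solution} and Theorem \ref{th:mbtoda} (after identifying $U_k,V_k$ under the positive flow with $\tau_k,\sigma_k$), the $s_k(0)$ formula comes from $m_{j'}e^{-x_{j'}}=W_{j-1}/(U_jU_{j-1})$ plus Lemma \ref{lem:sum_wz}, and the $r_k(0)$ formula comes from telescoping $U_{j+1}U_{j-1}/U_j^2=1/\bigl(2g_{j'}g_{j'-1}(y_{j'}-y_{j'-1})\bigr)$ together with $U_1(0)=e^{x_n(0)}/m_n(0)$. The only cosmetic difference is that the paper routes the part (3) computation through the string variables $g_{j'}(0),y_{j'}(0),l_{j'}(0)$ explicitly, which is exactly the identity you invoke via $Z_k+W_{k-1}=2m_{k'}U_kU_{k-1}\cosh x_{k'}$.
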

  \begin{proof}
 The first two immediately follow as they are the subjects of Theorem \ref{th:NV_solution} and Theorem \ref{th:mbtoda}.
 
The rest part is devoted to the proof for the third result.
Introduce the variables 
\[
y_{j'}(0)=\frac{Z_j(0)-W_{j-1}(0)}{Z_j(0)+W_{j-1}(0)},\qquad g_{j'}(0)=\frac{1}{2}\frac{Z_j(0)+W_{j-1}(0)}{U_j(0)U_{j-1}(0)},
\]
and 
\[
l_{j'}(0)=y_{j'}(0)-y_{j'-1}(0)=\frac{2(U_j(0))^4}{(Z_j(0)+W_{j-1}(0))(Z_{j+1}(0)+W_{j}(0))},
\]
based on Theorem \ref{th:NV_solution}.
It is not hard to see that
\[
\frac{1}{2g_{j'}(0)g_{j'-1}(0)(y_{j'}(0)-y_{j'-1}(0))}=\frac{U_{j+1}(0)U_{j-1}(0)}{U_{j}(0)^2},
\]
which gives
\[
\prod_{j=1}^{k-1}\frac{1}{2g_{j'}(0)g_{j'-1}(0)(y_{j'}(0)-y_{j'-1}(0))}=\frac{U_{k}(0)}{U_{k-1}(0)U_1(0)}.
\]
Recall that 
\[
g_j(0)=m_j(0)\cosh x_j(0), \ \ \ y_j(0)=\tanh x_j(0),
\]
and 
\[
m_{n}(0)e^{-x_n(0)}=\frac{1}{U_1(0)}.
\]
Then the conclusion on $r_k(0)$  in terms of $x_k(0)$ and $m_k(0)$ follows.

Furthermore, 
\[
\sum_{j=1}^{k}m_{j'}(0)e^{-x_{j'}(0)}=\sum_{j=1}^{k}\frac{W_{j-1}(0)}{U_{j}(0)U_{j-1}(0)}=\frac{V_{k-1}(0)}{U_k(0)}=s_k(0),
\]
where the second identity follows from Lemma \ref{lem:sum_wz}.

Therefore we complete the proof.
 \end{proof}
 
Theorem \ref{th:nv_btoda} implies that it is reasonable to regard  the Novikov peakon system and the finite modified B-Toda lattices as opposite flows.
 
 \section{Generalizations}\label{sec:gen}
 In this section, we shall give generalizations of the finite (modified) B-Toda lattices and Novikov peakon lattice, consequently Novikov equation together with their solutions.  An arbitrary nonnegative parameter $\alpha$ will be introduced. As special cases, all the corresponding results in the previous sections are covered when $\alpha=0$. 
 
 Our motivation is that there has existed a generalized bilinear B-Toda lattice in \cite{gilson2003two,hirota2001soliton}. Our goal was to  investigate its molecule solution and we eventually succeeded.  Based on the relation between the finite B-Toda lattice and the Novikov peakon lattice, we can also give a generalization of the Novikov peakon lattice, and, subsequently, a generalized Novikov equation. In the following, we shall present our results by proving theorems regarding the generalized equations together with their solutions rather than the derivation of these 
 facts. Some of the proofs are omitted because they can be filled up trivially as analogs to those in the previous sections. 
 \subsection{Generalized finite B-Toda lattices}
 
 \subsubsection{Generalized finite bilinear B-Toda lattice}
 We begin with a family of  $\{\tilde\tau_k\}_{k\in \mathbb Z}$ expressed in terms of Pfaffians, whose Pfaffian entries involve a parameter $\alpha$.  As we will see, $\{\tilde\tau_k\}_{k\in \mathbb Z}$ satisfy the generalized finite bilinear B-Toda lattice \eqref{gbi:tau}.
 \begin{define} \label{def:tilde_tau}
 For any $k\in \mathbb Z$, define
 \begin{equation*}
\tilde \tau_{2k}=(0,\cdots,2k-1),\qquad\tilde \tau_{2k-1}=(d_0,0,\cdots,2k-2),
\end{equation*}
where
\begin{equation*}
(d_0,i)=f_i(t),\qquad ( i, j)=e^{-\alpha t}{\iint\limits_{-\infty<t_1<t_2<t}}e^{\alpha t_2}\left[ f_{i+1}(t_1)f_{j+1}(t_2)-f_{i+1}(t_2)f_{j+1}(t_1)\right] dt_1 dt_2,
\end{equation*}
with
 \[
f_i(t)=\sum_{j=1}^n(\zeta_j)^ib_j(t),
\qquad\qquad 0<\zeta_1<\cdots<\zeta_n, \qquad b_j(t)=b_j(0)e^{\zeta_jt}>0.\]
Note that we use the convention $\tilde \tau_0=1,\tilde\tau_k=0$ for $k<0$.
 \end{define}

The following lemma  tells us that $\tilde\tau_k$ for $1\leq k\leq n$ are positive and $\tilde\tau_{n+1}=0.$ To this end, we will use the minor-summation formula \eqref{id:minor_even}-\eqref{id:minor_odd} due to Ishikawa and Wakayama \cite{ishikawa1995minor}.
\begin{lemma}\label{lem:gdet_btoda}
$\{\tilde\tau_k\}_{k\in \mathbb Z}$ admit the expressions
\begin{align*}
\tilde\tau_k=\left\{
\begin{array}{ll}
\sum\limits_{J\in{\left(\substack{[1,n]\\k}\right)}}b_J\frac{\Delta_J^2}{\tilde\Gamma_J},&1\leq k\leq n,\\
0,&k>n,
\end{array}
\right.
\end{align*}
with
  \begin{equation*}
   b_J=\prod_{j\in J}b_j,
    \qquad
    \Delta_J=\prod_{i,j\in J, \, i<j}(\zeta_j-\zeta_i),\qquad
  \tilde \Gamma_J=\prod_{i,j\in J, \, i<j}(\zeta_j+\zeta_i+\alpha).
  \end{equation*}
\end{lemma}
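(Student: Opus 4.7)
The plan is to parallel the proof of Lemma~\ref{lem:deb_btoda}, the new feature being the weight $e^{\alpha t_2}$ hidden in the Pfaffian entries. Following the hint, I will apply the Ishikawa--Wakayama minor-summation formula \eqref{id:minor_even}--\eqref{id:minor_odd} rather than first rewriting the Pfaffian as a multiple integral via de Bruijn. First I would factor the entries bilinearly: using $f_i(t)=\sum_{l=1}^{n}\zeta_l^{i}b_l(t)$, the skew array $[(i,j)]_{i,j=0}^{2k-1}$ can be written as $A B A^{\!\top}$, where $A$ is the $(2k)\times n$ matrix with $A_{i,m}=\zeta_m^{i+1}$ and $B=[B_{lm}]$ is the $n\times n$ antisymmetric kernel
\[
B_{lm}=e^{-\alpha t}\iint\limits_{-\infty<s<t'<t}e^{\alpha t'}\bigl[b_l(s)b_m(t')-b_m(s)b_l(t')\bigr]\,ds\,dt'.
\]
A direct computation using $b_j(t)=b_j(0)e^{\zeta_j t}$ evaluates this closed form to
\[
B_{lm}=\frac{b_l(t)\,b_m(t)}{\zeta_l\zeta_m}\cdot\frac{\zeta_m-\zeta_l}{\alpha+\zeta_l+\zeta_m}.
\]

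Next, applying the even Ishikawa--Wakayama identity \eqref{id:minor_even} yields
\[
\tilde\tau_{2k}=\mathrm{Pf}(A B A^{\!\top})=\sum_{J\in\binom{[1,n]}{2k}}\det(A_J)\,\mathrm{Pf}(B_J).
\]
The minor $\det(A_J)$ is a shifted Vandermonde equal to $\zeta_J\Delta_J$. For $\mathrm{Pf}(B_J)$, pulling out the scalar $b_{j_l}/\zeta_{j_l}$ from row and column $l$ leaves $\mathrm{Pf}\bigl[(\zeta_{j_m}-\zeta_{j_l})/(\alpha+\zeta_{j_l}+\zeta_{j_m})\bigr]_{l,m=1}^{2k}$. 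The key observation is that the substitution $\tilde\zeta_l=\zeta_l+\alpha/2$ preserves differences and shifts sums by exactly $\alpha$, so this residual quantity reduces to Schur's Pfaffian identity \eqref{id_schur} and evaluates to $\Delta_J/\tilde\Gamma_J$. Combining the pieces, the factors $\zeta_J$ cancel and the sum collapses to $\sum_J b_J\Delta_J^2/\tilde\Gamma_J$, as claimed. The odd case $\tilde\tau_{2k-1}$ is handled in exactly the same way via \eqref{id:minor_odd}, with the extra row of $(d_0,i)=f_i$ entries absorbed into $A$ and the residual Pfaffian reduced by the odd Schur identity \eqref{id_schur_odd}.

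Finally, when $k>n$ the index set $\binom{[1,n]}{k}$ is empty, or equivalently the rank of $A$ does not exceed $n$, which immediately forces $\tilde\tau_k=0$. I expect the only genuine obstacle to be the evaluation of the double integral defining $B_{lm}$ together with the recognition that Schur's identity survives the $\alpha$-shift; both reduce to bookkeeping once the change of variables $\tilde\zeta=\zeta+\alpha/2$ is in hand, so the proof amounts to transplanting the argument of Lemma~\ref{lem:deb_btoda} into the $\alpha$-deformed setting.
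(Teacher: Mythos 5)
Your proposal is correct and follows essentially the same route as the paper: the factorization $Q=RAR^{T}$ with the explicit evaluation of the skew kernel $B_{lm}=\frac{b_l b_m}{\zeta_l\zeta_m}\frac{\zeta_m-\zeta_l}{\alpha+\zeta_l+\zeta_m}$, the Ishikawa--Wakayama minor-summation formulae \eqref{id:minor_even}--\eqref{id:minor_odd}, the Vandermonde minor $\det(R_J)=\zeta_J\Delta_J$, and Schur's Pfaffian identity applied after the shift $\zeta\mapsto\zeta+\alpha/2$ (which the paper uses implicitly) are exactly the ingredients of the paper's argument, including the rank argument for $k>n$.
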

\begin{proof}
We shall first present the detailed proof for the even case. 

Let $Q=(q_{i,j})_{1\leq i,j\leq 2k}$ be a skew-symmetric matrix with the entries
\[
q_{i,j}= (i-1, j-1)=e^{-\alpha t}{\iint\limits_{-\infty<t_1<t_2<t}}e^{\alpha t_2}\left[ f_{i}(t_1)f_{j}(t_2)-f_{i}(t_2)f_{j}(t_1)\right] dt_1 dt_2,
\]
which implies 
\[
Pf(Q)=\tilde\tau_{2k}.
\]
It is not hard to confirm that 
\[
Q=RAR^T,
\]
where 
\begin{align*}
R=
\begin{pmatrix}
\zeta_1&\zeta_2&\cdots&\zeta_{n}\\
\zeta_1^2&\zeta_2^2&\cdots&\zeta_n^2\\
\vdots&\vdots&\ddots&\vdots\\
\zeta_1^{2k}&\zeta_2^{2k}&\cdots&\zeta_n^{2k}
\end{pmatrix}_{2k\times n},
\qquad
A=
\begin{pmatrix}
a_{11}&a_{12}&\cdots&a_{1n}\\
a_{21}&a_{22}&\cdots&a_{2n}\\
\vdots&\vdots&\ddots&\vdots\\
a_{n1}&a_{n2}&\cdots&a_{nn}\\\end{pmatrix}_{n\times n},
\end{align*}
with
\begin{align*}
a_{ij}&=e^{-\alpha t}{\iint\limits_{-\infty<t_1<t_2<t}}e^{\alpha t_2}\left[ b_{i}(t_1)b_{j}(t_2)-b_{i}(t_2)b_{j}(t_1)\right] dt_1 dt_2.
\end{align*}
In fact,  A is a skew-symmetric matrix and its entries can be computed explicitly
\[
a_{ij}=\frac{\zeta_j-\zeta_i}{\zeta_i+\zeta_j+\alpha}\frac{b_i(t)b_j(t)}{\zeta_i\zeta_j}.
\]

If $2k>n$, we immediately get
\[
\tilde\tau_{2k}=Pf(Q)=0
\]
by examining the rank of $Q$.

As for $2k\leq n$, by employing the minor-summation formula \eqref{id:minor_even}, it is not hard to get
\[
\tilde\tau_{2k}=Pf(Q)=\sum_{J\in\left(\substack{[1,n]\\2k}\right)}Pf(A_J^J)\det(R_J).
\]
Here
\[
Pf(A_J^J)=\frac{b_J\Delta_J}{\zeta_J\tilde\Gamma_J},\qquad \det(R_J)=\zeta_J\Delta_J,
\]
the first one of which is obtained by using a similar argument for the identity \eqref{CJ_pf}  with the help of the Schur's Pfaffian identity \eqref{id_schur}.
Thus the proof for the even case is completed.

For the odd case, it is noted that 
\[
(\tilde\tau_{2k-1})^2=\tilde R \tilde A\tilde R^T,
\]
where
\begin{align*}
\tilde R=
\begin{pmatrix}
1&0&0&\cdots&0\\
0&\zeta_1&\zeta_2&\cdots&\zeta_{n}\\
0&\zeta_1^2&\zeta_2^2&\cdots&\zeta_n^2\\
\vdots&\vdots&\vdots&\ddots&\vdots\\
0&\zeta_1^{2k-1}&\zeta_2^{2k-1}&\cdots&\zeta_n^{2k-1}
\end{pmatrix}_{(2k)\times (n+1)},
\qquad
\tilde A=
\begin{pmatrix}
0&\frac{b_1}{\zeta_1}&\frac{b_2}{\zeta_2}&\cdots&\frac{b_n}{\zeta_n}\\
-\frac{b_1}{\zeta_1}&a_{11}&a_{12}&\cdots&a_{1n}\\
-\frac{b_2}{\zeta_2}&a_{21}&a_{22}&\cdots&a_{2n}\\
\vdots&\vdots&\vdots&\ddots&\vdots\\
-\frac{b_n}{\zeta_n}&a_{n1}&a_{n2}&\cdots&a_{nn}\\\end{pmatrix}_{ (n+1)\times (n+1)},
\end{align*}
with
\begin{align*}
a_{ij}&=e^{-\alpha t}{\iint\limits_{-\infty<t_1<t_2<t}}e^{\alpha t_2}\left[ b_{i}(t_1)b_{j}(t_2)-b_{i}(t_2)b_{j}(t_1)\right] dt_1 dt_2.
\end{align*}
Then the proof can be trivially achieved by following the even case and using the minor-summation formula \eqref{id:minor_odd}.
\end{proof}

\begin{remark}
 It is noted that although de Bruijn's formula is employed in the argument for the special case in Section \ref{sec:NV_btoda}, the formula doesn't apply here. This proof also implies an alternative proof for the corresponding conclusion for $\{\tau_k\}_{k \in \mathbb Z}$ in Section \ref{sec:NV_btoda}.
\end{remark}

The following lemma is analogous to Lemma \ref{lem:der_btoda}. 
\begin{lemma} \label{lem:gder_btoda}
For any $k\in \mathbb Z$, the following relations hold,
\begin{align*}
\dot {\tilde\tau}_{2k}&=(0,\cdots,2k-2,2k)=(d_1,d_0,0,\cdots,2k-1)-\alpha k(0,1,\cdots,2k-1),\\
\dot{\tilde\tau}_{2k-1}&=(d_0,0,\cdots,2k-3,2k-1)=(d_1,0,\cdots,2k-2)-\alpha (k-1)(d_0,0,\cdots,2k-2),\\
\ddot {\tilde\tau}_{2k}&=(d_1,d_0,0,\cdots,2k-2,2k)-\alpha k(0,\cdots,2k-2,2k),\\
\ddot {\tilde\tau}_{2k-1}&=(d_1,0,\cdots,2k-3,2k-1)-\alpha (k-1)(d_0,0,\cdots,2k-3,2k-1),
\end{align*}
where $(d_1,i)=f_{i+1},\  (d_1,d_0)=0$, and, $(d_0,i)$ and $(i,j)$ are those in Definition \ref{def:tilde_tau}.
\end{lemma}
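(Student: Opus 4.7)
The strategy mimics the proof of Lemma \ref{lem:der_btoda}, but with an additional $-\alpha$ correction arising from the $e^{-\alpha t}$ prefactor and the $e^{\alpha t_2}$ weight in the definition of $(i,j)$. The plan rests on first deriving two equivalent forms (Gram-type and Wronski-type) of the derivative rule for the individual Pfaffian entries, and then invoking the standard Pfaffian derivative formulas \eqref{der1}--\eqref{der2_2} at the level of $\tilde\tau_{2k}$ and $\tilde\tau_{2k-1}$.

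For the entry-level derivatives, since $\dot f_i=f_{i+1}$, we immediately have $\frac{d}{dt}(d_0,i)=(d_1,i)$. For $(i,j)$, differentiating under the integral sign produces two contributions: differentiating the $e^{-\alpha t}$ prefactor yields $-\alpha(i,j)$, while differentiating the upper limit $t_2=t$ and integrating over $t_1\in(-\infty,t)$ yields $f_{j+1}(t)f_i(t)-f_{i+1}(t)f_j(t)$, which by Pfaffian expansion equals $(d_1,d_0,i,j)$. This establishes the Gram-type rule $\frac{d}{dt}(i,j)=(d_1,d_0,i,j)-\alpha(i,j)$. For the Wronski-type form, I would rewrite the integrand of $(i+1,j)+(i,j+1)$ as $e^{\alpha t_2}[\partial_{t_1}G+\partial_{t_2}G]$ with $G(t_1,t_2)=f_{i+1}(t_1)f_{j+1}(t_2)-f_{i+1}(t_2)f_{j+1}(t_1)$. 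The $\partial_{t_1}G$ piece vanishes upon $t_1$-integration since $G$ is antisymmetric and decays at $-\infty$, while integration by parts in $t_2$ against $e^{\alpha t_2}$ on the remaining piece produces a boundary term equal to $(d_1,d_0,i,j)$ together with an interior term equal to $-\alpha(i,j)$. Hence $\frac{d}{dt}(i,j)=(i+1,j)+(i,j+1)$ as well, so the two forms agree.

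With both entry-level rules in place, the Wronski-type expressions for $\dot{\tilde\tau}_{2k}$ and $\dot{\tilde\tau}_{2k-1}$ follow by the same telescoping mechanism as in Lemma \ref{lem:der_btoda}: interior index shifts create repeated row labels and vanish, leaving only the shift of the maximal index. The Gram-type expressions split into a leading Pfaffian of the same shape as in the $\alpha=0$ case plus a correction obtained by summing the $-\alpha(i,j)$ pieces over all pairs, which is evaluated via the identity $\sum_{l<m}a_{lm}(-1)^{l+m-1}Pf(A_{\hat l\hat m})=k\,Pf(A)$ for a $2k\times 2k$ skew-symmetric matrix $A$. This gives coefficient $-\alpha k$ for $\tilde\tau_{2k}$. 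For $\tilde\tau_{2k-1}$, the pairs $(d_0,i)$ contribute no correction, since $\frac{d}{dt}(d_0,i)$ has no $-\alpha$ term; subtracting the $d_0$-row Pfaffian expansion $Pf(A)$ from the total $k\,Pf(A)$ therefore leaves the reduced coefficient $-\alpha(k-1)$. The second-derivative formulas then follow by applying the Gram-type rule once more to the Wronski-type first-derivative expressions, producing exactly the stated combinations.

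The main obstacle is the integration-by-parts argument reconciling the Wronski-type and Gram-type forms of $\frac{d}{dt}(i,j)$ in the presence of the weight $e^{\alpha t_2}$; this is what genuinely distinguishes the generalized case from Lemma \ref{lem:der_btoda}. Once this equivalence is settled, the remainder of the proof parallels the $\alpha=0$ case with routine bookkeeping of how the $-\alpha$ contributions aggregate at the Pfaffian level.
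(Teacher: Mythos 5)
Your proposal is correct and takes essentially the same route as the paper: both rest on the entry-level identity $\frac{d}{dt}(i,j)=(d_1,d_0,i,j)-\alpha(i,j)=(i+1,j)+(i,j+1)$ followed by the Wronski-type and Gram-type Pfaffian derivative formulas. You merely supply more detail than the paper's terse proof (the integration by parts reconciling the two forms of $\frac{d}{dt}(i,j)$, and the Euler-identity count yielding the coefficients $-\alpha k$ and $-\alpha(k-1)$), and those details check out.
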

\begin{proof}
Note that 
\[
\frac{d}{dt}(i,j)=(d_1,d_0,i,j)-\alpha (i,j)=(i+1,j)+(i,j+1).
\]
The result follows by applying similar argument in the proof of Lemma \ref{lem:der_btoda}.
\end{proof}

Now we are ready to introduce the generalized finite bilinear B-Toda lattice and its solution.
\begin{theorem}\label{gth:b_btoda}
$\{\tilde\tau_k\}_{k=0}^n$ satisfy the generalized finite bilinear B-Toda lattice
\begin{equation}\label{gbi:tau}
\ddot {\tilde\tau}_k \tilde\tau_k-(\dot{\tilde\tau}_k)^2=\dot{\tilde\tau}_{k+1}\tilde\tau_{k-1}-\tilde\tau_{k+1}\dot{\tilde\tau}_{k-1}+\alpha{\tilde\tau}_{k+1}\tilde\tau_{k-1} ,\ \ \ k=0,1,\ldots,n,
\end{equation}
with $\tilde\tau_{-1}=0,\tilde\tau_{n+1}=0.$
\end{theorem}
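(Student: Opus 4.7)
The plan is to mirror the proof of Theorem \ref{th:ode:tau} while carefully tracking the $\alpha$-corrections supplied by Lemma \ref{lem:gder_btoda}. The key structural feature I will exploit is that each first derivative of $\tilde\tau_k$ admits two equivalent Pfaffian expressions---a ``Wronski-type'' and a ``Gram-type'' form---which in the generalized setting differ by an explicit $\alpha$-multiple of $\tilde\tau_k$ itself, and similarly each second derivative equals a Gram-type expression minus an $\alpha$-multiple of the Wronski-type first derivative. The boundary conditions $\tilde\tau_{-1}=0$ holds by convention, and $\tilde\tau_{n+1}=0$ is immediate from Lemma \ref{lem:gdet_btoda}.

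I would treat the even case $k=2m$ first; the odd case is entirely parallel. Abbreviate $A=\tilde\tau_{2m}=(0,\ldots,2m-1)$, $B=(0,\ldots,2m-2,2m)$, $C=(d_1,d_0,0,\ldots,2m-1)$, $D=(d_1,d_0,0,\ldots,2m-2,2m)$, so that Lemma \ref{lem:gder_btoda} yields $\dot{\tilde\tau}_{2m}=B=C-\alpha m A$ and $\ddot{\tilde\tau}_{2m}=D-\alpha m B$. Similarly let $E=\tilde\tau_{2m-1}$, $F=\tilde\tau_{2m+1}$, with Wronski forms $G,I$ for their first derivatives and Gram-type companions $H,J$, linked by $G=H-\alpha(m-1)E$ and $I=J-\alpha m F$. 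Substituting into the left-hand side of \eqref{gbi:tau}, writing one copy of $\dot{\tilde\tau}_{2m}$ as $B$ and the other as $C-\alpha m A$, gives
\[
\ddot{\tilde\tau}_{2m}\tilde\tau_{2m}-(\dot{\tilde\tau}_{2m})^{2}=(D-\alpha m B)A-B(C-\alpha m A)=DA-BC,
\]
while for the right-hand side,
\[
\dot{\tilde\tau}_{2m+1}\tilde\tau_{2m-1}-\tilde\tau_{2m+1}\dot{\tilde\tau}_{2m-1}+\alpha\tilde\tau_{2m+1}\tilde\tau_{2m-1}=(J-\alpha m F)E-F(H-\alpha(m-1)E)+\alpha FE=JE-FH,
\]
because the $\alpha$-contributions telescope as $-\alpha m+\alpha(m-1)+\alpha=0$.

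What remains is the pure Pfaffian identity $DA-BC=JE-FH$, which is precisely the one invoked in the proof of Theorem \ref{th:ode:tau} for the even case and which is an instance of the Plücker-type Pfaffian identities \eqref{pf1}-\eqref{pf2} applied to the index collection $\{d_1,d_0,0,1,\ldots,2m\}$. The odd case $k=2m-1$ is handled by the same mechanism: using the analogous abbreviations for $\tilde\tau_{2m-1}$ and $\tilde\tau_{2m},\tilde\tau_{2m-2}$ and their derivatives, one checks that both sides again reduce, via the same telescoping $-\alpha m+\alpha(m-1)+\alpha=0$, to the odd-case Pfaffian identity already verified inside the proof of Theorem \ref{th:ode:tau}.

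The principal obstacle is bookkeeping: one must choose, for each occurrence of a derivative, the correct form---Wronski or Gram---so that the $\alpha$-terms cancel cleanly rather than accumulate. The crucial conceptual point is that the new term $\alpha\tilde\tau_{k+1}\tilde\tau_{k-1}$ on the right-hand side of \eqref{gbi:tau} is exactly the amount required to absorb the asymmetry between the $\alpha m$ offset attached to $\dot{\tilde\tau}_{k+1}$ and the $\alpha(m-1)$ offset attached to $\dot{\tilde\tau}_{k-1}$. Once this cancellation is performed, the verification collapses onto the $\alpha=0$ Pfaffian identities already established, and no new Pfaffian manipulation is required.
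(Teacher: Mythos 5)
Your proposal is correct and follows essentially the same route as the paper, whose proof of this theorem is simply the instruction to repeat the argument of Theorem \ref{th:ode:tau} using Lemmas \ref{lem:gdet_btoda} and \ref{lem:gder_btoda}; your explicit bookkeeping (choosing the Wronski form for one factor of $(\dot{\tilde\tau}_k)^2$ and the Gram form for the other, and the cancellation $-\alpha m+\alpha(m-1)+\alpha=0$ absorbing the extra term $\alpha\tilde\tau_{k+1}\tilde\tau_{k-1}$) is exactly the computation the paper leaves implicit, and it correctly reduces both cases to the same Pfaffian identities \eqref{pf1}--\eqref{pf2} used in the $\alpha=0$ case.
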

\begin{proof}
Following the way in the proof of Theorem \ref{th:ode:tau}, we can prove this theorem by use of the conclusions in Lemma \ref{lem:gdet_btoda} and  Lemma \ref{lem:gder_btoda}.\end{proof}

\subsubsection{Generalized finite nonlinear B-Toda lattice}
Our generalized finite nonlinear B-Toda lattice reads
\begin{equation}\label{eq:gnbtoda}
\dot{\tilde p}_k=\tilde p_k(\tilde q_{k+1}-2\tilde q_{k}+\tilde q_{k-1}), \qquad  \dot{\tilde q}_k=(\tilde q_{k+1}-\tilde q_{k-1})\tilde p_k+\alpha \tilde p_k,\qquad k=1,2,\ldots,n,
\end{equation}
with the boundary $\tilde p_0=\tilde p_n=\tilde q_0=\tilde p_n\tilde q_{n+1}=0$. We get the following theorem.
 \begin{theorem}
The generalized finite nonlinear B-Toda lattice \eqref{eq:gnbtoda} admits the following solution:
\begin{equation*}
\tilde p_k=\frac{\tilde\tau_{k+1}\tilde\tau_{k-1}}{(\tilde\tau_k)^2}, \qquad  \tilde q_k=\frac{d}{dt}\log \tilde\tau_k. \label{gexp:pq}
\end{equation*}
\end{theorem}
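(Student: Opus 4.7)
My plan is to follow the same route used in the proof of Theorem \ref{th:btoda_sol}, substituting the tau-function expressions into the nonlinear system \eqref{eq:gnbtoda} and reducing it to the generalized bilinear B-Toda lattice \eqref{gbi:tau} that was already established in Theorem \ref{gth:b_btoda}. So the proof is essentially a compatibility check plus a logarithmic derivative computation, with the new $\alpha$-term absorbed by the extra $\alpha\tilde\tau_{k+1}\tilde\tau_{k-1}$ that appears in the generalized bilinear form.

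First, I would record the boundary behaviour. By Lemma \ref{lem:gdet_btoda} we have $\tilde\tau_0=1$, $\tilde\tau_k>0$ for $1\le k\le n$, and $\tilde\tau_{n+1}=0$. Substituting into the proposed formulas gives $\tilde p_0=\tilde p_n=0$, $\tilde q_0=0$, and $\tilde p_n\tilde q_{n+1}=0$, matching the boundary conditions of \eqref{eq:gnbtoda}. Next, the first equation is automatic: since $\log \tilde p_k=\log\tilde\tau_{k+1}+\log\tilde\tau_{k-1}-2\log\tilde\tau_k$, differentiating yields $\dot{\tilde p}_k/\tilde p_k=\tilde q_{k+1}-2\tilde q_k+\tilde q_{k-1}$ with no use of any identity.

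For the second equation I would compute
\begin{align*}
\dot{\tilde q}_k &= \frac{\ddot{\tilde\tau}_k\tilde\tau_k-(\dot{\tilde\tau}_k)^2}{\tilde\tau_k^2},\\
(\tilde q_{k+1}-\tilde q_{k-1})\tilde p_k+\alpha\tilde p_k &= \frac{\dot{\tilde\tau}_{k+1}\tilde\tau_{k-1}-\tilde\tau_{k+1}\dot{\tilde\tau}_{k-1}+\alpha\tilde\tau_{k+1}\tilde\tau_{k-1}}{\tilde\tau_k^2}.
\end{align*}
Equating numerators reduces the second equation of \eqref{eq:gnbtoda} exactly to the generalized bilinear identity \eqref{gbi:tau}, which is the content of Theorem \ref{gth:b_btoda}. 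Hence the proposed $(\tilde p_k,\tilde q_k)$ solve the system.

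There is really no obstacle here: the parameter $\alpha$ enters linearly and matches the extra term appearing in \eqref{gbi:tau}, so no new Pfaffian calculation is required beyond what Theorem \ref{gth:b_btoda} already provides. The only thing worth flagging is the need to verify that $\tilde\tau_k\neq 0$ for $1\le k\le n$ throughout the evolution, so that $\tilde p_k$ and $\tilde q_k$ are well defined; this is guaranteed by the positivity statement in Lemma \ref{lem:gdet_btoda} together with the assumption $b_j(0)>0$ and $\zeta_j>0$ (which keeps each $b_j(t)=b_j(0)e^{\zeta_j t}$ positive for all $t$).
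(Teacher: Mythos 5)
Your proposal is correct and follows essentially the same route as the paper: the paper's proof simply mirrors the proof of Theorem \ref{th:btoda_sol}, checking the boundary values via Lemma \ref{lem:gdet_btoda}, noting the first equation is automatic from the logarithmic-derivative form, and reducing the second equation to the generalized bilinear identity \eqref{gbi:tau} established in Theorem \ref{gth:b_btoda}. Your extra remark on the positivity of $\tilde\tau_k$ ensuring well-definedness is a sensible, harmless addition.
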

\begin{proof}
The proof can be achieved by following the proof of Theorem \ref{th:btoda_sol} and using Lemma \ref{lem:gder_btoda}, Theorem \ref{gth:b_btoda}.
\end{proof}

\subsubsection{Generalized finite modified B-Toda lattice}

We also derive a generalized finite modified B-Toda lattice:
\begin{align} \label{eq:gmbtoda}
&\dot{\tilde  r}_k=\frac{\tilde r_k(\dot{\tilde s}_{k-1}+\dot {\tilde s}_k-\alpha \tilde s_{k-1})}{\tilde s_{k-1}-\tilde s_{k}},\qquad \dot{\tilde s}_k=\frac{\tilde r_{k+1}}{\tilde r_k}(\tilde s_{k-1}-\tilde s_{k+1}), \qquad k=1,2,\ldots,n,
\end{align}
with $\tilde s_0=\tilde r_{n+1}=0$ and $\tilde r_{n+1}\tilde s_{n+1}\neq 0$. 
Its solution is described below.
\begin{theorem}\label{th:gmbtoda}
The generalized finite modified B-Toda equation \eqref{eq:gmbtoda} admits the solution
\begin{align}
\tilde r_k=\frac{\tilde \tau_{k}}{\tilde \tau_{k-1}},\qquad \tilde s_k=\frac{\tilde\sigma_{k-1}}{\tilde\tau_k},\label{exp:grs}
\end{align}
where $\{\tilde\sigma_k\}_{k \in \mathbb Z}$ are defined as 
\[
\tilde\sigma_{2k}=(1,\cdots,2k),\qquad \tilde\sigma_{2k-1}=(d_0,1,\cdots,2k-1),
\]
with the same Pfaffian entries as those for $\tilde\tau_k.$
\end{theorem}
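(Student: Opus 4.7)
The plan is to mirror the proof of Theorem~\ref{th:mbtoda}: substitute the ansatz \eqref{exp:grs} into each of the two equations of \eqref{eq:gmbtoda}, reduce each to a bilinear identity in $\tilde\tau_k,\tilde\sigma_k$ and their first derivatives, and then verify those identities by means of Lemma~\ref{lem:gder_btoda} together with the Pfaffian identities \eqref{pf1}--\eqref{pf2}. A preliminary step is to record, for the family $\{\tilde\sigma_k\}$, the direct analogs of Lemma~\ref{lem:gder_btoda}, namely the Wronski-type and Gram-type forms
\begin{align*}
\dot{\tilde\sigma}_{2k}&=(1,\ldots,2k-1,2k+1)=(d_1,d_0,1,\ldots,2k)-\alpha k\,\tilde\sigma_{2k},\\
\dot{\tilde\sigma}_{2k-1}&=(d_0,1,\ldots,2k-2,2k)=(d_1,1,\ldots,2k-1)-\alpha(k-1)\,\tilde\sigma_{2k-1};
\end{align*}
these follow by repeating the proof of Lemma~\ref{lem:gder_btoda} verbatim, the only inputs being the entry derivatives $\tfrac{d}{dt}(i,j)=(d_1,d_0,i,j)-\alpha(i,j)=(i+1,j)+(i,j+1)$ and $\tfrac{d}{dt}(d_0,i)=(d_1,i)$.

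For the second equation of \eqref{eq:gmbtoda}, a direct substitution of \eqref{exp:grs} and clearing of denominators produces the $\alpha$-free bilinear identity
\[
\dot{\tilde\sigma}_{k-1}\tilde\tau_k-\tilde\sigma_{k-1}\dot{\tilde\tau}_k=\tilde\sigma_{k-2}\tilde\tau_{k+1}-\tilde\sigma_k\tilde\tau_{k-1},
\]
identical in shape to the one appearing in the proof of Theorem~\ref{th:mbtoda}. Rewriting every derivative in Wronski-type converts this, separately for $k$ even and $k$ odd, into a standard four-term Pfaffian identity from \eqref{pf1}--\eqref{pf2}; the parameter $\alpha$ never enters, so this part is essentially routine.

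The first equation is the substantive one: after substitution and simplification it collapses to
\[
\dot{\tilde\sigma}_{k-1}\tilde\tau_{k-1}-\tilde\sigma_{k-1}\dot{\tilde\tau}_{k-1}=\dot{\tilde\tau}_{k}\tilde\sigma_{k-2}-\tilde\tau_{k}\dot{\tilde\sigma}_{k-2}+\alpha\,\tilde\tau_{k}\tilde\sigma_{k-2},
\]
so a genuine $\alpha$ term must be absorbed. The key trick is to choose the two representations of the derivatives asymmetrically: Wronski-type on the left-hand side, Gram-type on the right-hand side. Each Gram-type expression then deposits, by the preparatory step, an extra piece of the form $-\alpha(\cdot)\,\tilde\tau_{\cdot}$ or $-\alpha(\cdot)\,\tilde\sigma_{\cdot}$. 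A short count shows that, both for $k=2m$ and for $k=2m+1$, the three contributions proportional to $\alpha\,\tilde\tau_k\tilde\sigma_{k-2}$ --- two from the Gram-type derivatives and the explicit one on the right --- appear with coefficients $-m+(m-1)+1=0$ and therefore cancel. What remains is once more a four-term Pfaffian relation of type \eqref{pf1}--\eqref{pf2}, which closes the argument. The main obstacle I anticipate is exactly this arithmetic: verifying in both parities that the Gram-type weights precisely neutralize the leftover $\alpha\,\tilde\tau_k\tilde\sigma_{k-2}$; once that bookkeeping is done, no new tool beyond the Pfaffian identities already used in Section~\ref{sec:NV_btoda} is required.
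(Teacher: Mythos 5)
Your proposal is correct and follows essentially the same route as the paper: reduce each equation of \eqref{eq:gmbtoda} to the two bilinear identities in $\tilde\tau_k,\tilde\sigma_k$, record the $\tilde\sigma$-analogue of Lemma~\ref{lem:gder_btoda}, and close with the Pfaffian identities \eqref{pf1}--\eqref{pf2}; your explicit $-m+(m-1)+1=0$ bookkeeping for the $\alpha$-terms (valid in both parities) is exactly the cancellation the paper leaves implicit. Note that your sign convention $\dot{\tilde\tau}_{k}\tilde\sigma_{k-2}-\tilde\tau_{k}\dot{\tilde\sigma}_{k-2}+\alpha\tilde\tau_{k}\tilde\sigma_{k-2}$ on the right-hand side of the first bilinear identity is the one consistent with \eqref{eq:gmbtoda} and with the $\alpha=0$ case in Theorem~\ref{th:mbtoda}.
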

\begin{proof}
The proof can be completed by a similar argument to that for Theorem \ref{th:mbtoda}.

There are several items which need attention. Notice that $\{\tilde\sigma_k\}_{k \in \mathbb Z}$ have the same structure as $\{\tilde\tau_k\}_{k \in \mathbb Z}$. In fact,
\[
\tilde\sigma_{n+1}=0
\]
holds, and
\begin{align*}
\dot{\tilde\sigma}_{2k}&=(1,\cdots,2k-1,2k+1)=(d_0,d_1,1,\cdots,2k)-\alpha k(1,\cdots,2k ),\\
\dot{\tilde\sigma}_{2k-1}&=(d_0,1,\cdots,2k-2,2k)=(d_1,1,\cdots,2k-1)-\alpha (k-1)(d_0,1,\cdots,2k-1).
\end{align*}
When one substitutes the expressions \eqref{exp:grs} into \eqref{eq:gmbtoda}, the two equations in \eqref{eq:gmbtoda} are respectively equivalent to
\begin{align*}
\dot{\tilde\sigma}_{k-1}\tilde\tau_{k-1}-\tilde\sigma_{k-1}\dot{\tilde\tau}_{k-1}&=\dot{\tilde\sigma}_{k-2}\tau_{k}-\tilde\sigma_{k-2}\dot{\tilde\tau}_{k}+\alpha\tilde\sigma_{k-2}{\tilde\tau}_{k},\\
\dot{\tilde\sigma}_{k-1}\tilde\tau_k-\tilde\sigma_{k-1}\dot{\tilde\tau}_k&=\tilde\sigma_{k-2}\tilde\tau_{k+1}-\tilde\sigma_{k}\tilde\tau_{k-1},
\end{align*}
which become Pfaffian identities by inserting the Pfaffian expression for each term. 
\end{proof}

 \subsection{Generalized Novikov equation}
 \subsubsection{Generalized Novikov peakon ODEs}
\begin{define}\label{def:tilde_tuv}
For any $k\in \mathbb Z$, define the variables
\begin{align*}
&\tilde T_{2k}=(-1,0,\cdots,2k-2),  &&\tilde T_{2k-1}=(d_0,-1,0,\cdots,2k-3),\\
&\tilde U_{2k}=(0,1,\cdots,2k-1), &&\tilde U_{2k-1}=(d_0,0,1,\cdots,2k-2),\\
&\tilde V_{2k}=(1,2,\cdots,2k),&&\tilde V_{2k-1}=(d_0,1,2,\cdots,2k-1),
\end{align*}
where the Pfaffian entries are defined as
\[
(d_0,i)=f_i(t),\qquad (i,j)=e^{-\alpha t}{\iint\limits_{-\infty<t_1<t_2<t}}e^{\alpha t_2}\left[ f_{i-1}(t_2)f_{j-1}(t_1)-f_{i-1}(t_1)f_{j-1}(t_2)\right] dt_1 dt_2,
\]
with
 \[
f_i(t)=\sum_{j=1}^n(\zeta_j)^ib_j(t),
\qquad\qquad 0<\zeta_1<\cdots<\zeta_n, \qquad b_j(t)=b_j(0)e^{\zeta_jt}>0.\]
Again, we use the convention that the Pfaffian for order $0$ is 1, and those for negative order are 0.
\end{define}

It is noted that $\tilde T_k,\tilde U_k,\tilde V_k$ possess the same properties due to the same structures. We will give some properties of the variables 
$$\tilde P_{2k}^l\triangleq(l,l+1,\cdots,l+2k-1),\qquad \tilde P_{2k-1}^l\triangleq(d_0,l,l+1,\cdots,l+2k-2),$$
which give $\tilde T_k,\tilde U_k,\tilde V_k$ when $l=-1,0,1$ respectively.

First we have
\begin{lemma}\label{lem:gdet_nv}
For $l\in \mathbb Z$,  the following relations hold
\begin{align}
\tilde P_k^l=\left\{
\begin{array}{cl}
\sum\limits_{J\in{\left(\substack{[1,n]\\k}\right)}}\zeta_J^lb_J\frac{\Delta_J^2}{\hat\Gamma_J},&1\leq k\leq n,\\
0,&k>n,
\end{array}
\right.
\end{align}
with
  \begin{equation*}
   b_J=\prod_{j\in J}b_j,
    \qquad
    \zeta_J^l=\prod_{j\in J}\zeta_j^l,
    \qquad
    \Delta_J=\prod_{i,j\in J, \, i<j}(\zeta_j-\zeta_i),\qquad
  \hat\Gamma_J=\prod_{i,j\in J, \, i<j}(\zeta_j+\zeta_i+\alpha\zeta_i\zeta_j).
  \end{equation*}
\end{lemma}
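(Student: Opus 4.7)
The plan is to follow the strategy of Lemma \ref{lem:gdet_btoda}, treating the even and odd cases in parallel. The case $k>n$ is the simplest: substituting $f_i(t)=\sum_j \zeta_j^i b_j(t)$ shows that every underlying matrix built from the $f_{l+i}(t_j)$ factors through an $n$-dimensional space, so its rank is at most $n$ and the corresponding Pfaffian must vanish.

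For $1\le k\le n$ and the even case $\tilde P_{2k}^l=(l,\ldots,l+2k-1)$, I would write $\tilde P_{2k}^l=\mathrm{Pf}(Q)$ with $Q_{ab}=(l+a-1,l+b-1)$, expand $f_{p-1}$ and $f_{q-1}$ inside the integral, and carry out the two iterated time integrations in closed form using the exponential form of $b_j(t)$ together with the regularizing factor $e^{\alpha t_2}$. This produces a factorization $Q=RAR^T$, where $R$ is the rectangular $2k\times n$ Vandermonde-type matrix with entries of the form $\zeta_j^{l+a-1}$, and $A$ is the skew-symmetric $n\times n$ matrix whose $(i,j)$ entry, after extracting the $b_i b_j$ factors, reduces to
\[
B_{ij}\;=\;\frac{\zeta_j-\zeta_i}{\zeta_i+\zeta_j+\alpha\zeta_i\zeta_j}.
\]
The presence of $\alpha\zeta_i\zeta_j$ in the denominator is exactly what produces $\hat\Gamma_J$. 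I would then invoke the Ishikawa--Wakayama minor-summation formula \eqref{id:minor_even},
\[
\mathrm{Pf}(Q)\;=\;\sum_{J\in\binom{[1,n]}{2k}}\mathrm{Pf}(A_J^J)\,\det(R_J),
\]
evaluate $\det(R_J)=\zeta_J^l\,\Delta_J$ by the Vandermonde identity, and reduce the task to computing $\mathrm{Pf}(B_J^J)$.

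To evaluate $\mathrm{Pf}(B_J^J)$ I would perform the change of variables $\eta_i=1/\zeta_i$, under which
\[
\frac{\zeta_j-\zeta_i}{\zeta_i+\zeta_j+\alpha\zeta_i\zeta_j}\;=\;\frac{\eta_i-\eta_j}{\eta_i+\eta_j+\alpha},
\]
followed by the shift $\mu_i=\eta_i+\alpha/2$, which clears the $\alpha$ and places the entries in the classical Schur form $(\mu_i-\mu_j)/(\mu_i+\mu_j)$. The Schur Pfaffian identity \eqref{id_schur} then gives a closed-form product, and translating back through the two substitutions reinstates the denominator $\zeta_i+\zeta_j+\alpha\zeta_i\zeta_j$ and yields the factor $\Delta_J/\hat\Gamma_J$. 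Combining this with $\det(R_J)=\zeta_J^l\Delta_J$ and the extracted $b_J$ produces the claimed formula $\tilde P_{2k}^l=\sum_J \zeta_J^l b_J \Delta_J^2/\hat\Gamma_J$. The odd case $\tilde P_{2k-1}^l=(d_0,l,\ldots,l+2k-3)$ would be handled by the standard device $(\tilde P_{2k-1}^l)^2=\tilde R\tilde A\tilde R^T$ with an augmented row and column encoding the entries $(d_0,i)=f_i(t)$, then the odd-case minor-summation formula \eqref{id:minor_odd} together with the same Schur-type evaluation.

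The main obstacle I anticipate is the Schur-type evaluation of $\mathrm{Pf}(B_J^J)$: the denominator $\zeta_i+\zeta_j+\alpha\zeta_i\zeta_j$ is not of the classical Schur form $x_i+x_j$, so identifying the composite rational substitution ($\eta_i=1/\zeta_i$ followed by $\mu_i=\eta_i+\alpha/2$) that reduces it to \eqref{id_schur} is the crucial computational step and genuinely uses the structure of the kernel. The remaining sign bookkeeping --- the Vandermonde sign in $\det(R_J)$, the $(-1)^k$ from negating the skew kernel, and the parity of $\binom{2k}{2}$ picked up when reshuffling $\Delta_J$ --- is routine but needs to be tracked carefully so that the stated positive formula is recovered without a spurious sign.
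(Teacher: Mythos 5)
Your proposal is correct and follows essentially the same route as the paper, whose proof of this lemma simply defers to the argument of Lemma \ref{lem:gdet_btoda}: the factorization $Q=RAR^T$, the Ishikawa--Wakayama minor-summation formulae \eqref{id:minor_even}--\eqref{id:minor_odd} with $\det(R_J)=\zeta_J^l\Delta_J$, and the reduction of the kernel $(\zeta_j-\zeta_i)/(\zeta_i+\zeta_j+\alpha\zeta_i\zeta_j)$ to Schur form \eqref{id_schur} via $\zeta\mapsto 1/\zeta$ followed by the shift by $\alpha/2$ are exactly the intended steps, and your sign bookkeeping works out (the product $\prod_{i<j\in J}(\zeta_j-\zeta_i)/(\zeta_i+\zeta_j+\alpha\zeta_i\zeta_j)=\Delta_J/\hat\Gamma_J$ comes out positive with no residual $(-1)^k$). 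One caveat worth recording: your asserted kernel $B_{ij}$ is what the iterated integration yields only if $b_j(t)=b_j(0)e^{t/\zeta_j}$ (the evolution forced by Lemma \ref{lemma_der_gnv} and by consistency with the $\alpha=0$ case), whereas Definition \ref{def:tilde_tuv} literally prints $b_j(t)=b_j(0)e^{\zeta_j t}$, which would give denominators $\zeta_i\zeta_j(\zeta_i+\zeta_j+\alpha)$ and falsify the stated formula --- this is a typo in the paper rather than a gap in your argument.
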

\begin{proof}
The result follows from the similar argument in the proof of Lemma \ref{lem:gdet_btoda}.
\end{proof}

Observing the derivative formula 
\[\frac{d}{dt}(i,j)=(d_0,d_{-1},i,j)-\alpha (i,j)=(i-1,j)+(i,j-1),\]
one can derive the following lemma, which is a generalization of Lemma \ref{lemma_der_pf}.
\begin{lemma}\label{lemma_der_gnv}
For any $l,k\in \mathbb Z$, 
\begin{align*}
\dot{\tilde P}_{2k}^l&=(l-1,l+1,l+2,\cdots,l+2k-1)\\
&=(d_0,d_{-1},l,l+1,\cdots,l+2k-1)-\alpha k (l,l+1,l+2,\cdots,l+2k-1),\\
\dot{\tilde P}_{2k+1}^l&=(d_0,l-1,l+1,l+2,\cdots,l+2k)\\
&=(d_{-1},l,l+1,\cdots,l+2k)-\alpha k(d_0,l,l+1,l+2,\cdots,l+2k),
\end{align*}
 where $(d_{-1},i)=f_{i-1},\ (d_0,d_{-1})=0$, and, $(d_0,i)$ and $(i,j)$ are those in Definition \ref{def:tilde_tuv}.
 \end{lemma}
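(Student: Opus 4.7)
The plan is to lift the derivative formulas from individual Pfaffian entries to the full Pfaffians using the Wronski-type and Gram-type differentiation formulas \eqref{der1}-\eqref{der2_2} of Appendix \ref{app_pf}, in the same spirit as Lemma \ref{lemma_der_pf} and Lemma \ref{lem:gder_btoda}. The only genuinely new ingredient compared with the unperturbed case is the extra term $-\alpha(i,j)$ produced by the prefactor $e^{-\alpha t}$ sitting in front of the entry $(i,j)$.

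First I would establish the entry-level derivative
\[
\frac{d}{dt}(i,j) = (d_0, d_{-1}, i, j) - \alpha(i,j) = (i-1, j) + (i, j-1),
\]
which is already recorded in the statement of the lemma. Writing $(i,j) = e^{-\alpha t}A_{ij}(t)$ with $A_{ij}$ denoting the inner iterated integral, the Leibniz rule splits the derivative into one term $-\alpha(i,j)$ from the prefactor and one term from $\dot A_{ij}$; the latter is evaluated by differentiating the upper limit of the iterated integral and using $\dot f_k$, together with the fact that all antiderivatives vanish at $-\infty$ because $b_j(t)$ decays there. The Gram-type form $(d_0,d_{-1},i,j) = f_{i-1}f_j - f_i f_{j-1}$ comes from the $2$-Pfaffian expansion using $(d_0,i)=f_i$, $(d_{-1},i)=f_{i-1}$, and the Wronski-type form $(i-1,j)+(i,j-1)$ follows after one rewrites the resulting products of $f$'s as iterated integrals in the two allowed orders; the identity between the two forms is \eqref{pf1}-\eqref{pf2} applied at the $2 \times 2$ level.

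With the entry-level rule in hand, the Wronski-type derivative of $\tilde P_{2k}^l = (l,l+1,\ldots,l+2k-1)$ is a sum of Pfaffians in which each index is, in turn, decremented by one. All but one of these Pfaffians vanishes because of duplicate indices (shifting $l+j$ down to $l+j-1$ collides with the previous index), the sole survivor being $(l-1,l+1,l+2,\ldots,l+2k-1)$, which yields the first asserted equality. The Gram-type derivative produces the new Pfaffian $(d_0,d_{-1},l,l+1,\ldots,l+2k-1)$ coming from the $(d_0,d_{-1},i,j)$ piece of the entry derivative, plus a multiple of $\tilde P_{2k}^l$ coming from the $-\alpha(i,j)$ piece. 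The multiplicity is exactly $k$ because a $2k \times 2k$ Pfaffian is homogeneous of degree $k$ in its matrix entries: scaling every entry by $e^{-\alpha t}$ scales the Pfaffian by $e^{-\alpha k t}$, and differentiating this scalar prefactor produces the correction $-\alpha k\,\tilde P_{2k}^l$. The odd case $\tilde P_{2k+1}^l$ is handled in exactly the same way, with the additional column $(d_0,\cdot)$ contributing through $\frac{d}{dt}(d_0,i) = (d_{-1},i)$; the counting of $e^{-\alpha t}$ factors again produces $k$ rather than $k+1$, since in each matching exactly one pair uses the $d_0$ row (and $(d_0,i) = f_i$ carries no $\alpha$), while the remaining $k$ pairs supply one factor of $e^{-\alpha t}$ each.

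The main bookkeeping obstacle I expect is precisely the counting of the factor $k$ in the $-\alpha k$ correction: one must be careful to attribute the $e^{-\alpha t}$ prefactor only to the $(i,j)$-type entries and not to the $(d_0,i)$-type entries, and to verify that in both parities the homogeneity degree of the full Pfaffian in the $(i,j)$-type entries is exactly $k$. Once that is done, equating the Wronski-type and the Gram-type outputs for $\dot{\tilde P}_{2k}^l$, and analogously for $\dot{\tilde P}_{2k+1}^l$, yields the asserted identities and completes the proof.
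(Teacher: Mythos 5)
Your proposal is correct and follows essentially the same route as the paper, which proves this lemma only by recording the entry-level formula $\frac{d}{dt}(i,j)=(d_0,d_{-1},i,j)-\alpha(i,j)=(i-1,j)+(i,j-1)$ and then invoking the Wronski-type and Gram-type lifting formulas as in Lemmas \ref{lemma_der_pf} and \ref{lem:gder_btoda}. Your explicit homogeneity count for the $-\alpha k$ coefficient (each perfect matching contains exactly $k$ entries of type $(i,j)$, each carrying one factor $e^{-\alpha t}$, while the $(d_0,\cdot)$ entries carry none) is exactly the bookkeeping the paper leaves implicit, so the argument is complete.
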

Moreover, we can easily obtain the following generalization regarding to the evolution of $\tilde T_k,\tilde U_k,\tilde V_k$ according to Pfaffian identities.
\begin{lemma} \label{lem:gder:tuv}
For $0\leq k\leq n$, the following relations hold,
\begin{align*}
&\dot {\tilde U}_{k+1} {\tilde U}_{k}- {\tilde U}_{k+1} \dot{\tilde U}_{k}={\tilde T}_{k+1}{\tilde V}_{k}-{\tilde V}_{k-1}{\tilde T}_{k+2},\\
&\dot {\tilde U}_k {\tilde V}_{k}- {\tilde U}_k \dot{\tilde V}_{k}=\dot {\tilde U}_{k+1} {\tilde V}_{k-1}- {\tilde U}_{k+1} \dot{\tilde V}_{k-1}+\alpha {\tilde U}_{k+1} {\tilde V}_{k-1},\\
&\dot {\tilde T}_k {\tilde U}_{k}- {\tilde T}_k \dot{\tilde U}_{k}=\dot {\tilde T}_{k+1} {\tilde U}_{k-1}- {\tilde T}_{k+1} \dot{\tilde U}_{k-1}+\alpha {\tilde V}_{k+1} {\tilde T}_{k-1}.
\end{align*}
\end{lemma}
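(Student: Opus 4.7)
The plan is to follow the template used in the proof of Corollary \ref{coro_NV_der_uvt}: each of the three claimed identities should reduce, after substitution of Pfaffian expressions via Definition \ref{def:tilde_tuv} together with a careful choice of either the Wronski-type or Gram-type form of the time derivatives supplied by Lemma \ref{lemma_der_gnv}, to a single instance of one of the standard Hirota--Jacobi Pfaffian identities \eqref{pf1}--\eqref{pf2}. I would carry out the even case $k=2m$ in detail and then point out that the odd case $k=2m-1$ is strictly parallel, with all Pfaffian orders shifted by one.

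For the first identity, I would apply the Wronski-type formula of Lemma \ref{lemma_der_gnv} to both $\dot{\tilde U}_{k+1}$ and $\dot{\tilde U}_k$. Since the Wronski form carries no $-\alpha k$ correction, no $\alpha$-contribution is generated on either side; the two sides then become bilinear combinations of Pfaffians sharing a common body and differing only by their four outer letters, and the claim becomes an instance of the Hirota--Jacobi expansion identity.

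For the second identity I would instead use the Gram-type form throughout. On the left, $\tilde U_k$ and $\tilde V_k$ are Pfaffians of the same order $2m$, so the two $-\alpha m$ corrections coming from $\dot{\tilde U}_k$ and $\dot{\tilde V}_k$ cancel against each other. On the right, $\tilde U_{k+1}$ and $\tilde V_{k-1}$ are of different orders ($2m+1$ and $2m-1$), contributing corrections $-\alpha m\,\tilde U_{k+1}\tilde V_{k-1}$ and $+\alpha(m-1)\,\tilde U_{k+1}\tilde V_{k-1}$ respectively; combining these with the explicit $+\alpha\,\tilde U_{k+1}\tilde V_{k-1}$ in the statement yields the net coefficient $-\alpha m+\alpha(m-1)+\alpha=0$. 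What remains is a pure Pfaffian relation on the body $\{1,\ldots,2m-1\}$ with outer letters $\{d_0,d_{-1},0,2m\}$, and this is again a direct instance of \eqref{pf1}--\eqref{pf2}. The third identity is handled by the same mechanism after the substitution $(\tilde U,\tilde V)\to(\tilde T,\tilde U)$, with the $\alpha$-term on the right read with the analogous indices; the same arithmetic cancellation reduces it to a Pfaffian identity.

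The main difficulty is the careful combinatorial bookkeeping of the Gram-type $-\alpha k$ corrections: three Pfaffian orders ($2m-1$, $2m$, $2m+1$) appear across the two sides, each contributing a different $\alpha$-weight, and one must verify that the single $+\alpha$ term in each identity is precisely what is needed to balance the asymmetry between the corrections that appear on the two sides. Once this balance is confirmed, the residual pure-Pfaffian identity follows routinely from \eqref{pf1}--\eqref{pf2}, and the same bookkeeping must then be repeated for the odd case to close the argument.
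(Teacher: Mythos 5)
Your proposal is correct and is essentially the argument the paper intends: the paper states this lemma without proof, asserting only that it follows from Pfaffian identities in parallel with Corollary \ref{coro_NV_der_uvt}, and your scheme (Wronski-type derivatives for the first relation so that no $\alpha$-corrections appear, Gram-type for the other two with the bookkeeping $-\alpha m+\alpha(m-1)+\alpha=0$, reduction of the residue to \eqref{pf1}--\eqref{pf2} on the common body with four outer letters) is precisely what that reduction requires. Your choice to read the third $\alpha$-term ``with the analogous indices'' is the right one: the Gram-type corrections force the term $\alpha\,\tilde T_{k+1}\tilde U_{k-1}$, so the printed $\alpha\,\tilde V_{k+1}\tilde T_{k-1}$ appears to be a typo in the statement rather than an obstacle to your argument.
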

If we introduce the variables 
\begin{align*}
& \tilde W_k=\tilde V_k\tilde U_k-\tilde V_{k-1}\tilde U_{k+1},\qquad \qquad \tilde Z_k=\tilde T_k\tilde U_k-\tilde T_{k+1}\tilde U_{k-1},
\end{align*}
it is obvious that we have the following result according to Lemma \ref{lem:sum_wz}.
\begin{lemma}
For $0\leq k\leq n$, the following nonlinear identities hold,
\begin{align*}
\sum_{j=1}^{k}\frac{\tilde W_{j-1}}{\tilde U_j\tilde U_{j-1}}=\frac{\tilde V_{k-1}}{\tilde U_{k}},\qquad \sum_{j=k+1}^{n}\frac{\tilde Z_{j}}{\tilde U_j\tilde U_{j-1}}=\frac{\tilde T_{k+1}}{\tilde U_{k}}.
\end{align*}
\end{lemma}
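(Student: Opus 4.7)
The plan is to observe that this identity is a purely algebraic consequence of the definitions of $\tilde W_k$ and $\tilde Z_k$, together with the boundary conventions for $\tilde U_k, \tilde V_k, \tilde T_k$, and does not involve the parameter $\alpha$ at all. In particular, the telescoping argument used to prove Lemma \ref{lem:sum_wz} carries over verbatim, because $\alpha$ only enters through the time evolution of the Pfaffian entries and so affects derivatives (as in Lemma \ref{lem:gder:tuv}), not the defining relations $\tilde W_k=\tilde V_k\tilde U_k-\tilde V_{k-1}\tilde U_{k+1}$ and $\tilde Z_k=\tilde T_k\tilde U_k-\tilde T_{k+1}\tilde U_{k-1}$ themselves.

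Concretely, for the first identity I would substitute the expansion of $\tilde W_{j-1}$ into the left-hand side to get
\[
\sum_{j=1}^{k}\frac{\tilde W_{j-1}}{\tilde U_j\tilde U_{j-1}}
=\sum_{j=1}^{k}\frac{\tilde V_{j-1}\tilde U_{j-1}-\tilde V_{j-2}\tilde U_{j}}{\tilde U_j\tilde U_{j-1}}
=\sum_{j=1}^{k}\left(\frac{\tilde V_{j-1}}{\tilde U_{j}}-\frac{\tilde V_{j-2}}{\tilde U_{j-1}}\right),
\]
which telescopes to $\tilde V_{k-1}/\tilde U_k$ once I use the initial boundary data $\tilde V_{-1}=0$ and $\tilde U_0=1$ coming from the Pfaffian conventions in Definition \ref{def:tilde_tuv}. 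For the second identity, I would analogously substitute $\tilde Z_j=\tilde T_j\tilde U_j-\tilde T_{j+1}\tilde U_{j-1}$ and telescope the sum from $j=k+1$ up to $j=n$, using that $\tilde T_{n+1}=0$ (since $\tilde T_{n+1}$ has Pfaffian order exceeding what Lemma \ref{lem:gdet_nv} allows to be nonzero).

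Since the computation is algebraic and the $\alpha$-dependence is entirely hidden inside the quantities $\tilde T_k, \tilde U_k, \tilde V_k$ without affecting any cancellation, there is no genuine obstacle. The only thing one must double-check is that the boundary conventions $\tilde V_{-1}=0,\ \tilde U_0=1,\ \tilde T_{n+1}=0$ remain valid in the generalized setting; this follows immediately from Definition \ref{def:tilde_tuv} together with Lemma \ref{lem:gdet_nv}, which confirms that $\tilde P^l_k=0$ whenever $k>n$.
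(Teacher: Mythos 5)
Your proposal is correct and matches the paper's approach: the paper simply notes that the generalized identities follow from Lemma \ref{lem:sum_wz}, whose telescoping proof (using $\tilde V_{-1}=0$, $\tilde U_0=1$, $\tilde T_{n+1}=0$) carries over verbatim since the defining relations for $\tilde W_k$ and $\tilde Z_k$ are unchanged and $\alpha$ plays no role in this purely algebraic computation. Your additional check that $\tilde T_{n+1}=0$ via Lemma \ref{lem:gdet_nv} is exactly the right justification.
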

By use of Lemma \ref{lem:gder:tuv}, we immediately have
\begin{coro}
For $0\leq k\leq n$, the following relations hold,
\begin{align*}
\dot {\tilde W}_k=2(&\dot {\tilde U}_k {\tilde V}_k-\dot {\tilde U}_{k+1} {\tilde V}_{k-1})-\alpha  {\tilde U}_{k+1} {\tilde V}_{k-1},\\
\dot {\tilde Z}_k=2(&{\tilde T}_k\dot {\tilde U}_k-{\tilde T}_{k+1}\dot {\tilde U}_{k-1})+\alpha {\tilde T}_{k+1} {\tilde U}_{k-1}.
\end{align*}
\end{coro}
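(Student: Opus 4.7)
The plan is to derive both formulas as immediate consequences of Lemma \ref{lem:gder:tuv} by applying the product rule to the definitions of $\tilde W_k$ and $\tilde Z_k$ and then regrouping. First I would write
$$\dot{\tilde W}_k = \dot{\tilde V}_k \tilde U_k + \tilde V_k \dot{\tilde U}_k - \dot{\tilde V}_{k-1} \tilde U_{k+1} - \tilde V_{k-1} \dot{\tilde U}_{k+1},$$
and then split this by adding and subtracting $\dot{\tilde U}_k\tilde V_k$ and $\dot{\tilde U}_{k+1}\tilde V_{k-1}$ to produce the target leading term $2(\dot{\tilde U}_k\tilde V_k - \dot{\tilde U}_{k+1}\tilde V_{k-1})$ plus the bracketed remainder
$$-\bigl[(\dot{\tilde U}_k \tilde V_k - \tilde U_k \dot{\tilde V}_k) - (\dot{\tilde U}_{k+1} \tilde V_{k-1} - \tilde U_{k+1} \dot{\tilde V}_{k-1})\bigr].$$
At this point the second identity in Lemma \ref{lem:gder:tuv} immediately tells me that this remainder equals $-\alpha\,\tilde U_{k+1}\tilde V_{k-1}$, yielding the first claimed formula.

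The second formula is handled in parallel: I differentiate $\tilde Z_k=\tilde T_k\tilde U_k-\tilde T_{k+1}\tilde U_{k-1}$, then regroup so that the term $2(\tilde T_k\dot{\tilde U}_k-\tilde T_{k+1}\dot{\tilde U}_{k-1})$ appears and the leftover is precisely
$$(\dot{\tilde T}_k\tilde U_k-\tilde T_k\dot{\tilde U}_k)-(\dot{\tilde T}_{k+1}\tilde U_{k-1}-\tilde T_{k+1}\dot{\tilde U}_{k-1}),$$
which by the third identity of Lemma \ref{lem:gder:tuv} collapses to $\alpha\,\tilde T_{k+1}\tilde U_{k-1}$.

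There is no serious obstacle: the proof is purely algebraic rearrangement together with two invocations of the previously established Pfaffian identities. The only point that requires care is bookkeeping of the $\alpha$ contributions, which arise from the explicit $e^{-\alpha t}$ prefactor appearing in the Pfaffian entries (Definition \ref{def:tilde_tuv}) and which must land on exactly the right index combination on the right-hand side. As in the $\alpha=0$ case treated in Corollary \ref{coro_NV_der_uvt}, one may if desired verify the even and odd index cases separately, though the Pfaffian bookkeeping is uniform and the two cases proceed identically.
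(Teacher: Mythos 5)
Your proof is correct and matches the paper's (implicit) argument: differentiate the defining expressions for $\tilde W_k$ and $\tilde Z_k$ by the product rule and absorb the leftover via the second and third identities of Lemma \ref{lem:gder:tuv}, exactly as in the $\alpha=0$ case of Corollary \ref{coro_NV_der_uvt}. Note only that your invocation of the third identity presumes its $\alpha$-term reads $\alpha\,\tilde T_{k+1}\tilde U_{k-1}$ rather than the printed $\alpha\,\tilde V_{k+1}\tilde T_{k-1}$ --- the latter appears to be a typo, since both the index-shift analogy with the second identity and the right-hand side of the corollary itself require the former.
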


Now we are ready to present the following generalized result regarding the peakon ODEs.
\begin{theorem}\label{th:gnv_ode_sol}
If we let
\[
{\tilde x}_{k'}=\frac{1}{2}\log\frac{{\tilde Z}_k}{{\tilde W}_{k-1}},\qquad {\tilde m}_{k'}=\frac{\sqrt{{\tilde Z}_k{\tilde W}_{k-1}}}{{\tilde U}_k{\tilde U}_{k-1}}, \qquad k=1,2,\ldots,n,
\]
where $k'=n+1-k$,
then $\{{\tilde x}_k,{\tilde m}_k\}_{k=1}^n$ satisfy
\begin{align*}
\dot {\tilde x}_{k'}&=\left(\sum_{j=1}^{k-1}{\tilde m}_{j'}e^{{\tilde x}_{k'}-{\tilde x}_{j'}}+\sum_{j=k}^n{\tilde m}_{j'}e^{{\tilde x}_{j'}-{\tilde x}_{k'}}\right)^2+\frac{\alpha}{2 {\tilde m}_k'}\left(\sum_{j=1}^{k-1}{\tilde m}_{j'}e^{{\tilde x}_{k'}-{\tilde x}_{j'}}+\sum_{j=k}^n{\tilde m}_{j'}e^{{\tilde x}_{j'}-{\tilde x}_{k'}}\right)-\frac{\alpha}{2} ,\\
\dot {\tilde m}_{k'}
&={\tilde m}_{k'}\left(\sum_{j=1}^{k-1}{\tilde m}_{j'}e^{{\tilde x}_{k'}-{\tilde x}_{j'}}+\sum_{j=k}^n{\tilde m}_{j'}e^{{\tilde x}_{j'}-{\tilde x}_{k'}}\right)\left(\sum_{j=k+1}^n{\tilde m}_{j'}e^{{\tilde x}_{j'}-{\tilde x}_{k'}}-\sum_{j=1}^{k-1}{\tilde m}_{j'}e^{{\tilde x}_{k'}-{\tilde x}_{j'}}\right)\\
&\ \ \ +\frac{\alpha}{2}\left(\sum_{j=k+1}^n{\tilde m}_{j'}e^{{\tilde x}_{j'}-{\tilde x}_{k'}}-\sum_{j=1}^{k-1}{\tilde m}_{j'}e^{{\tilde x}_{k'}-{\tilde x}_{j'}}\right).
\end{align*}
\end{theorem}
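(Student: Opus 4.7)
The plan is to mimic closely the alternative proof of Theorem~\ref{th:NV_solution}, with the modifications needed to keep track of the $\alpha$-corrections that propagate through every derivative identity. After substituting
\[
\tilde x_{k'}=\tfrac{1}{2}\log(\tilde Z_k/\tilde W_{k-1}),\qquad \tilde m_{k'}=\sqrt{\tilde Z_k \tilde W_{k-1}}/(\tilde U_k \tilde U_{k-1}),
\]
the two ODEs become, after multiplying through by the appropriate Pfaffian products, rational equalities in $\tilde T_k,\tilde U_k,\tilde V_k$ and their derivatives, which I would then reduce to Pfaffian identities using Lemma~\ref{lem:gder:tuv} and the telescoping lemma for $\tilde W_{k-1}$ and $\tilde Z_k$.

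First, I would use the (unchanged) telescoping identities to simplify the two sums $\sum_{j=1}^{k-1}\tilde m_{j'}e^{\tilde x_{k'}-\tilde x_{j'}}$ and $\sum_{j=k}^{n}\tilde m_{j'}e^{\tilde x_{j'}-\tilde x_{k'}}$; as in the $\alpha=0$ case, one verifies that their sum equals $(\tilde T_k\tilde V_{k-1}-\tilde V_{k-2}\tilde T_{k+1})/\sqrt{\tilde Z_k\tilde W_{k-1}}$. For the $\dot{\tilde x}_{k'}$-equation I would clear denominators and substitute
\[
\dot{\tilde Z}_k=2(\tilde T_k\dot{\tilde U}_k-\tilde T_{k+1}\dot{\tilde U}_{k-1})+\alpha\,\tilde T_{k+1}\tilde U_{k-1},\qquad
\dot{\tilde W}_{k-1}=2(\dot{\tilde U}_{k-1}\tilde V_{k-1}-\dot{\tilde U}_k\tilde V_{k-2})-\alpha\,\tilde U_k\tilde V_{k-2};
\]
the $\alpha^0$ part reproduces the calculation from the proof of Theorem~\ref{th:NV_solution}, contributing $(\tilde T_k\tilde V_{k-1}-\tilde V_{k-2}\tilde T_{k+1})^2$ via the first identity of Lemma~\ref{lem:gder:tuv}, while the $\alpha^1$ part should match the explicit correction $\frac{\alpha}{2\tilde m_{k'}}(\cdot)-\frac{\alpha}{2}$ once expressed back in terms of $\tilde Z_k,\tilde W_{k-1},\tilde U_k,\tilde U_{k-1}$.

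For the $\dot{\tilde m}_{k'}$-equation I would follow the same $L$ versus $R$ strategy as in the original proof: write $L$ as $\frac{1}{2}(\dot{\tilde Z}_k/\tilde Z_k+\dot{\tilde W}_{k-1}/\tilde W_{k-1})-(\dot{\tilde U}_k\tilde U_{k-1}+\tilde U_k\dot{\tilde U}_{k-1})/(\tilde U_k\tilde U_{k-1})$, plus the new explicit $\alpha$-term, and $R$ as the product of the two sums plus the $\alpha$-correction; factor $\tilde T_k\tilde V_{k-1}-\tilde T_{k+1}\tilde V_{k-2}$ out of $R$ using the telescoping identity as in the $\alpha=0$ case, and collect so that $L-R$ splits into an $\alpha=0$ part that vanishes by the first identity in Lemma~\ref{lem:gder:tuv}, and an $\alpha$-part that vanishes by the second (or third) identity in the same lemma.

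The main obstacle will be the $\alpha$-bookkeeping. The Pfaffian identities underlying the proof do not feel the deformation, but the derivative formulae of Lemma~\ref{lem:gder:tuv} and the corollary for $\dot{\tilde W}_k,\dot{\tilde Z}_k$ do, with signs that differ between $\tilde W$ and $\tilde Z$; at the same time the RHS of each ODE now carries an explicit $\alpha$-term. I expect the delicate point to be showing that these three sources of $\alpha$ line up: the $\alpha$-terms produced by $\dot{\tilde W}_{k-1}$ and $\dot{\tilde Z}_k$ must, after using $\tilde W_{k-1}=\tilde V_{k-1}\tilde U_{k-1}-\tilde V_{k-2}\tilde U_k$ and $\tilde Z_k=\tilde T_k\tilde U_k-\tilde T_{k+1}\tilde U_{k-1}$ and the telescoping lemma, cancel precisely the explicit $\alpha$-terms $\frac{\alpha}{2\tilde m_{k'}}(\cdot)-\frac{\alpha}{2}$ and $\frac{\alpha}{2}(\cdot)$ on the RHS of the two ODEs respectively. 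Once this matching is confirmed, the remaining $\alpha^0$ identity is the one already established in the proof of Theorem~\ref{th:NV_solution} and the verification is complete.
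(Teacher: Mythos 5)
Your proposal is correct and follows essentially the same route as the paper, whose proof of this theorem is simply the instruction to repeat the argument for \eqref{NV_x_ode}--\eqref{NV_m_ode} using the generalized lemmas of that section (Lemma \ref{lem:gder:tuv}, the corollary for $\dot{\tilde W}_k,\dot{\tilde Z}_k$, and the unchanged telescoping identities). Your $\alpha$-bookkeeping is consistent with those lemmas; for instance, the $\alpha^1$ contribution $\alpha(\tilde T_{k+1}\tilde U_{k-1}\tilde W_{k-1}+\tilde U_k\tilde V_{k-2}\tilde Z_k)$ to $\dot{\tilde Z}_k\tilde W_{k-1}-\dot{\tilde W}_{k-1}\tilde Z_k$ does equal $\alpha\bigl(\tilde U_k\tilde U_{k-1}(\tilde T_k\tilde V_{k-1}-\tilde T_{k+1}\tilde V_{k-2})-\tilde Z_k\tilde W_{k-1}\bigr)$, exactly matching the explicit $\frac{\alpha}{2\tilde m_{k'}}(\cdot)-\frac{\alpha}{2}$ terms.
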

\begin{proof}
Comparing with the proofs of \eqref{NV_x_ode} and \eqref{NV_m_ode}, one can fill out the proof by using the lemmas in this section.
\end{proof}
\subsubsection{Generalized Novikov equation}

The generalized Novikov equation we propose reads:
\begin{subequations}\label{eq:gNV}
\begin{align}
&{\tilde m}={\tilde u}-{\tilde u}_{xx},\\
&{\tilde m}_t+({\tilde u}^2{\tilde m})_x+{\tilde u}{\tilde u}_x{\tilde m}+\frac{1}{2}\alpha ({\tilde u}{\tilde w}{\tilde m})_x+\frac{1}{2}\alpha {\tilde u}_x{\tilde w}{\tilde m}-\frac{1}{2}\alpha {\tilde m}_x=0, \\
&({\tilde w}{\tilde m})_t+({\tilde w}{\tilde u}^2{\tilde m})_x+\frac{1}{2}\alpha ({\tilde u}{\tilde w}^2{\tilde m})_x-\frac{1}{2}\alpha ({\tilde w}{\tilde m})_x=0.
\end{align}
\end{subequations}
with the arbitrary nonnegative number $\alpha$.
Obviously, it reduces to the Novikov equation \eqref{eq:NV} when $\alpha=0$ and $\tilde w=0$. 

Assume that 
\[
{\tilde u}=\sum_{k=1}^n {\tilde m}_k(t)e^{-|x-{\tilde x}_k(t)|},\qquad \tilde w=\sum_{k=1}^n\frac{1}{{\tilde m}_k(t)}\mathbf 1_{\{{\tilde x}_k(t)\}}^\epsilon(x),
\]
where the indicator function $\mathbf 1_{\{{\tilde x}_k(t)\}}^\epsilon(x)$ implies
\[\mathbf 1_{\{{\tilde x}_k(t)\}}^\epsilon(x)=\left\{\begin{array}{ll}
1,\qquad x\in({\tilde x}_k-\epsilon,{\tilde x}_k+\epsilon),\\
0,\qquad \text{elsewhere},
\end{array}
\right.
\]
and $\epsilon$ is chosen to be a small enough number so that the intervals $({\tilde x}_k-\epsilon,{\tilde x}_k+\epsilon)$ are nonintersecting with each other.

 Obviously, it follows from the first equation in \eqref{eq:gNV} that
$${\tilde m}dx=2\sum_{k=1}^n\tilde m_k\delta(x-{\tilde x}_k)dx,$$

By use of the distribution calculus \cite{beals2000multipeakons,hone2009explicit}, we obtain the following ODE system from the last two equations in \eqref{eq:gNV}:
\begin{align*}
\dot {\tilde x}_{k}={\tilde u}({\tilde x}_{k})^2+\frac{\alpha}{2}\frac{{\tilde u}({\tilde x}_k)}{{\tilde m}_k}-\frac{1}{2}\alpha,\qquad \dot {\tilde m}_{k}&=-\tilde m_{k}{\tilde u}({\tilde x}_{k})\langle {\tilde u}_x({\tilde x}_{k})\rangle-\frac{1}{2}\alpha\langle {\tilde u}_x({\tilde x}_{k})\rangle,
\end{align*} 
which can be written as the equations in Theorem \ref{th:gnv_ode_sol} when ${\tilde x}_1<{\tilde x}_2<\cdots<{\tilde x}_n$. This implies that we have obtained one (at least local) distribution solution  for the generalized Novikov equation \eqref{eq:gNV}.

We end this section by summarizing the above statement. 
 \begin{theorem}
 The generalized Novikov equation \eqref{eq:gNV} admits the (at least local) distribution solution
\begin{equation*}
{\tilde u}=\sum_{k=1}^n {\tilde m}_k(t)e^{-|x-{\tilde x}_k(t)|},\qquad \tilde w=\sum_{k=1}^n\frac{1}{{\tilde m}_k(t)}\mathbf 1_{\{{\tilde x}_k(t)\}}^\epsilon(x),
\end{equation*}
where
\begin{equation*}
{\tilde x}_{k'}=\frac{1}{2}\log\frac{{\tilde Z}_k}{{\tilde W}_{k-1}},\qquad {\tilde m}_{k'}=\frac{\sqrt{{\tilde Z}_k{\tilde W}_{k-1}}}{{\tilde U}_k{\tilde U}_{k-1}}, \qquad k=1,2,\ldots,n,\end{equation*}
with $k'=n+1-k.$ Here
\begin{align*}
& \tilde W_k=\tilde V_k\tilde U_k-\tilde V_{k-1}\tilde U_{k+1},\qquad \tilde Z_k=\tilde T_k\tilde U_k-\tilde T_{k+1}\tilde U_{k-1},
\end{align*}
with
\begin{align*}
&\tilde T_{2k}=(-1,0,\cdots,2k-2),  &&\tilde T_{2k-1}=(d_0,-1,0,\cdots,2k-3),\\
&\tilde U_{2k}=(0,1,\cdots,2k-1), &&\tilde U_{2k-1}=(d_0,0,1,\cdots,2k-2),\\
&\tilde V_{2k}=(1,2,\cdots,2k),&&\tilde V_{2k-1}=(d_0,1,2,\cdots,2k-1).
\end{align*}
The Pfaffian entries are defined as
\[
(d_0,i)=f_i(t),\qquad (i,j)=e^{-\alpha t}{\iint\limits_{-\infty<t_1<t_2<t}}e^{\alpha t_2}\left[ f_{i-1}(t_2)f_{j-1}(t_1)-f_{i-1}(t_1)f_{j-1}(t_2)\right] dt_1 dt_2,
\]
with
\begin{equation*}
f_i(t)=\sum_{j=1}^n(\zeta_j)^ib_j(t),\qquad\qquad 0<\zeta_1<\cdots<\zeta_n, \qquad b_j(t)=b_j(0)e^{\zeta_jt}>0.\]

\end{theorem}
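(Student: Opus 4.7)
The plan is to reduce the PDE system \eqref{eq:gNV} to the peakon ODE system of Theorem \ref{th:gnv_ode_sol} via a standard but careful distribution calculus, and then invoke that theorem directly for the Pfaffian formulas. First, inserting the ansatz into the first equation of \eqref{eq:gNV} and using $(1-\partial_x^2)e^{-|x|}=2\delta(x)$ gives
\[
\tilde m\,dx = 2\sum_{k=1}^n \tilde m_k(t)\,\delta(x-\tilde x_k(t))\,dx,
\]
so $\tilde m$ is a discrete measure supported at the peakon positions. The choice of the coefficient $1/\tilde m_k(t)$ in $\tilde w$ is crucial: because $\epsilon$ is taken small enough that the indicator supports are pairwise disjoint and each contains exactly one peak, the distributional product is $\tilde w\tilde m = 2\sum_{k}\delta(x-\tilde x_k)$, independently of $\tilde m_k$. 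Similarly $\tilde u\tilde w\tilde m$ evaluates $\tilde u$ at the peakon positions, while $\tilde u\tilde w^2\tilde m$ collapses to $\sum_k 2\tilde u(\tilde x_k)/\tilde m_k\cdot\delta(x-\tilde x_k)$.

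Next, I would substitute the ansatz into the second equation of \eqref{eq:gNV}, matching the $\delta(x-\tilde x_k)$ coefficients (which yields the $\dot{\tilde m}_k$ equation together with a transport contribution from $\partial_t\delta(x-\tilde x_k) = -\dot{\tilde x}_k\delta'(x-\tilde x_k)$) and the $\delta'(x-\tilde x_k)$ coefficients (which yields the $\dot{\tilde x}_k$ equation). Using the standard peakon computation $\tilde u^2\tilde m \leftrightarrow \langle \tilde u^2\rangle(\tilde x_k)$ and $\tilde u\tilde u_x\tilde m \leftrightarrow \tilde u(\tilde x_k)\langle \tilde u_x\rangle(\tilde x_k)$ at each peak, together with the evaluations above involving $\tilde w$, produces exactly
\[
\dot{\tilde x}_k = \tilde u(\tilde x_k)^2 + \frac{\alpha}{2}\frac{\tilde u(\tilde x_k)}{\tilde m_k}-\frac{\alpha}{2},\qquad
\dot{\tilde m}_k = -\tilde m_k \tilde u(\tilde x_k)\langle \tilde u_x\rangle(\tilde x_k) - \frac{\alpha}{2}\langle \tilde u_x\rangle(\tilde x_k).
\]
The third equation in \eqref{eq:gNV} will then be checked to be a consequence of the first two: by the identity $\tilde w\tilde m = 2\sum\delta(x-\tilde x_k)$, that equation reduces (after expanding $\partial_t\delta(x-\tilde x_k)$) to the same $\dot{\tilde x}_k$ relation just obtained, so it adds no new constraint.

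Finally, assuming the ordering $\tilde x_1<\cdots<\tilde x_n$ and expanding
\[
\tilde u(\tilde x_{k'}) = \sum_{j=1}^{k-1}\tilde m_{j'}e^{\tilde x_{k'}-\tilde x_{j'}} + \sum_{j=k}^{n}\tilde m_{j'}e^{\tilde x_{j'}-\tilde x_{k'}},\quad \langle\tilde u_x\rangle(\tilde x_{k'}) = \sum_{j=k+1}^{n}\tilde m_{j'}e^{\tilde x_{j'}-\tilde x_{k'}} - \sum_{j=1}^{k-1}\tilde m_{j'}e^{\tilde x_{k'}-\tilde x_{j'}},
\]
the ODE system above becomes, term by term, exactly the system stated in Theorem \ref{th:gnv_ode_sol}. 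That theorem already shows the Pfaffian prescriptions $\tilde x_{k'} = \tfrac12\log(\tilde Z_k/\tilde W_{k-1})$ and $\tilde m_{k'} = \sqrt{\tilde Z_k\tilde W_{k-1}}/(\tilde U_k\tilde U_{k-1})$ solve the ODEs, so combining the two delivers the desired distributional solution of \eqref{eq:gNV}. The main obstacle will be the consistency check of the third PDE: one must verify that the distributional identity obtained after expanding $(\tilde w\tilde m)_t$ (which involves time-differentiating the moving indicator $\mathbf 1^\epsilon_{\{\tilde x_k(t)\}}$) is compatible with, and not stronger than, the peakon ODEs derived from the second PDE. This compatibility hinges precisely on the normalization $1/\tilde m_k$ in the definition of $\tilde w$, and is the one genuinely new bookkeeping step beyond the classical Novikov peakon reduction.
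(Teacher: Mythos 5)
Your proposal is correct and follows essentially the same route as the paper: reduce the PDE system \eqref{eq:gNV} by distribution calculus to the peakon ODE system (the paper states this reduction, citing \cite{beals2000multipeakons,hone2009explicit}, arriving at exactly the $\dot{\tilde x}_k$ and $\dot{\tilde m}_k$ equations you derive), and then invoke Theorem \ref{th:gnv_ode_sol} for the Pfaffian formulas. Your additional observations — that the normalization $1/\tilde m_k$ makes $\tilde w\tilde m=2\sum_k\delta(x-\tilde x_k)$ so that the third equation only reproduces the $\dot{\tilde x}_k$ relation and imposes no new constraint — are accurate and merely spell out details the paper leaves implicit.
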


 \section{Conclusion and discussions}\label{sec:con}
By applying Pfaffians to the analysis of the Novikov peakons, we verified that the Novikov peakon ODEs are no other than  Pfaffian identities. Motivated by the relation between the CH peakon ODEs and the Toda lattice, we gave a connection between the Novikov peakon ODEs and the finite B-Toda lattice. Furthermore, we proposed new generalizations 
 using the Pffafian technique  as our main tool. 

The DP equation is also an intriguing equation among the CH type equations.  Even though we succeeded in expressing its multipeakons in terms of Pfaffians, it seems to us  that the use of Pfaffians is not a natural tool to deal with the DP case. Recall that there are reciprocal links between the Novikov equation and a negative flow in the SK hierachy (BKP type), while the DP equation corresponds to a negative flow in the KK hierachy (CKP type). The study of the CKP type is a bit different from that for the BKP type. We address the corresponding result elsewhere \cite{chang2018degasperis}.
 
 Finally, we would like to remark that the research on the peakons and (bi)orthogonal polynomials is also an interesting topic. The CH peakons can be expressed by ordinary orthogonal polynomials \cite{beals2000multipeakons}. Symmetric orthogonal polynomials play a substantial role in the study of interlacing peakons of a two component modified CH equation \cite{chang2016multipeakons} . The DP peakons \cite{lundmark2005degasperis} are associated with Cauchy bimoment determinants, which induce the research on Cauchy biorthogonal polynomials and subsequent applications to random matrix models \cite{bertola2009cauchy,bertola2010cauchy}. The NV, GX peakons \cite{hone2009explicit,lundmark2014inverse} also correspond to Cauchy biorthogonal polynomials with different measures. In the present paper, the Pfaffian has been applied to the Novikov peakons. Actually, the skew-orthogonal polynomials first introduced in the theory of random matrices are expressed in terms of Pfaffians \cite{adler2000classical,adler1999pfaff,deift2000orthogonal,dyson1972class,mehta2004random}. There should be some underlying connections among these items, which deserve further study\footnote{We remind the readers that one piece of work \cite{chang2019partial} has been done while the current paper was under review. It turns out that the related random matrix model is the Bures ensemble. And a family of so-called partial-skew-orthogonal polynomials is introduced to act as a wave vector for an isospectral problem of the B-Toda lattice.}.
 
 \section{Acknowledgements}
We are grateful to Professor Jacek Szmigielski for his valuable comments on the generalized Novikov equation and the improvement of the presentation. We also thank a reviewer for posing some interesting questions for further study. This work was supported in part by the National Natural Science Foundation of China. X.C. was supported under the grant nos 11701550, 11731014. X.H. was supported under the grant nos 11331008, 11571358. J.Z. was supported under the grant no 11271362. 

 \begin{appendix}

\section{On the Pfaffian}\label{app_pf} 
 Determinants are well known, however, most people know little about Pfaffians \cite{Caieniello1973,hirota2004direct}.  This appendix contains some useful materials on Pfaffians. 

\textit{Pfaffian} was introduced by Cayley in 1852, who named it after Johann Friedrich Pfaff.  A Pfaffian is closely related to the determinant of an antisymmetric matrix. Let $A=(a_{i,j})$ be an $M\times M$ antisymmetric matrix leading to
\begin{equation*}
\det(A)=|(a_{i,j})|_{M\times M},\ \ a_{i,j}=-a_{j,i},\ 1\leq i,j\leq M.
\end{equation*}
 It is obvious that $\det(A)=0$ when $M$ is odd. And if $M$ is even, then $\det(A)$ is a square of a Pfaffian.
For example,
\begin{enumerate}[(1)]
\item For $M=2$, we have $\det(A)=(a_{1,2})^2=(Pf(1,2))^2$, and  $Pf(A)=Pf(1,2)=a_{1,2}$.
\item For $M=4$, we have $\det(A)=(a_{1,2}a_{3,4}-a_{1,3}a_{2,4}+a_{1,4}a_{2,3})^2=(Pf(1,2,3,4))^2$, and  $Pf(A)=Pf(1,2,3,4)=a_{1,2}a_{3,4}-a_{1,3}a_{2,4}+a_{1,4}a_{2,3}$.
\end{enumerate}

\subsection{Definition and properties}
We shall begin with the following definition in order to be self-contained even though there are some alternative definitions.
 \begin{define}\label{def:pfa}
Given the Pfaffian entries $Pf(i,j)$ satisfying $Pf(i,j)=-Pf(j,i)$, an N-order Pfaffian $ Pf(1,2,\cdots,2N)$ can be defined by 
\begin{equation}
Pf(1,2,\cdots,2N)=\sum_{P}(-1)^PPf(i_1,i_2)Pf(i_3,i_4)\cdots Pf(i_{2N-1},i_{2N}).
\end{equation}
 The summation means the sum over all possible combinations of pairs selected from $\{1,2,\cdots,2N\}$ satisfying
\begin{align*}
&i_{2l-1}<i_{2l+1},\qquad i_{2l-1}<i_{2l}.
\end{align*}
The factor $(-1)^P$ takes the value $+1 (-1)$ if the sequence $i_1, i_2, . . . , i_{2n}$ is an even (odd) permutation of $1, 2, . . . , 2n.$ Usually, there are the conventions that the Pfaffian of order $0$ is 1 and that for negative order is $0$.
\end{define}

Under the definition above, the relation between a Pfaffian and the determinant of the corresponding antisymmetric matrix was first rigorously proved by Muir \cite{muir1882treatise} in 1882.
 \begin{theorem}[Muir \cite{muir1882treatise}]
The square of an N-order Pfaffian $Pf(1,2,\cdots,2N)$ under Definition \ref{def:pfa}  is the determinant of the corresponding antisymmetric matrix:
\begin{equation*}
Pf(1,2,\cdots,2N)^2=|(a_{i,j})|_{N\times N},\ \ \ 1\leq i,j\leq 2N,
\end{equation*}
where $a_{i,j}=Pf(i,j).$
\end{theorem}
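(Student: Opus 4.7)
The plan is to give a combinatorial proof of $Pf(A)^2=\det(A)$ by expanding both sides and setting up a sign-preserving correspondence between their summands, following the classical argument going back to Cayley. This is the most transparent route because the two sides are sums indexed by combinatorially different objects (ordered pairs of perfect matchings on the left, permutations on the right), and a canonical map between them makes the identity visible without heavy algebraic manipulation.

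First, I would unfold the two sides using Definition \ref{def:pfa} on one side and the Leibniz formula on the other. Squaring the Pfaffian gives
\[
Pf(A)^2 = \sum_{M,M'} \mathrm{sgn}(M)\,\mathrm{sgn}(M') \prod_{\{i,j\}\in M} a_{ij} \prod_{\{k,l\}\in M'} a_{kl},
\]
where $(M,M')$ ranges over ordered pairs of perfect matchings of $\{1,\ldots,2N\}$. On the other side,
\[
\det(A) = \sum_{\sigma\in S_{2N}} \mathrm{sgn}(\sigma)\prod_{i=1}^{2N} a_{i,\sigma(i)}.
\]
Because $A$ is skew-symmetric with zero diagonal, only fixed-point-free $\sigma$ contribute, and a standard pairing argument (pairing $\sigma$ with the permutation obtained by reversing one odd cycle, using $\mathrm{sgn}$ and $a_{ij}=-a_{ji}$) shows that every $\sigma$ with at least one odd cycle contributes zero. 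Only permutations whose cycles are all of even length survive.

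Next, I would construct the bijection. For any ordered pair $(M,M')$ the superposition $M\cup M'$ is a multigraph in which every vertex has degree $2$, and so decomposes into vertex-disjoint closed walks: even cycles whose edges alternate between $M$ and $M'$, together with ``digons'' when the same pair occurs in both $M$ and $M'$. To each such decomposition I would associate the permutation $\sigma$ of $\{1,\ldots,2N\}$ whose cycles traverse the components of $M\cup M'$ in a chosen orientation; conversely, given an even-cycle $\sigma$, the $(M,M')$ mapping to it are obtained by choosing, on each cycle, an alternating 2-coloring of the edges and a starting vertex. Summing all contributions over $(M,M')$ with a common image $\sigma$, the multiplicities, signs from $\mathrm{sgn}(M)\mathrm{sgn}(M')$, and skew-symmetry conversions $a_{ji}=-a_{ij}$ conspire to give exactly $\mathrm{sgn}(\sigma)\prod_i a_{i,\sigma(i)}$.

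The main obstacle will be the sign bookkeeping in this cycle-by-cycle reckoning: for each cycle of length $2k$ in $M\cup M'$, one must verify that the product of the matching signs, the orientation choices, and the sign changes produced by reversing edges matches precisely the sign of the corresponding $2k$-cycle in $\sigma$. An alternative route, which trades this combinatorics for algebra, is induction on $N$ using the row-expansion formula $Pf(A)=\sum_{j=2}^{2N}(-1)^{j}a_{1j}Pf(A_{\hat{1}\hat{j}})$ (immediate from Definition \ref{def:pfa}), squaring it, and comparing with the Laplace expansion of $\det(A)$ along row $1$ followed by column $1$ of each cofactor; the diagonal terms $j=k$ are handled directly by the inductive hypothesis applied to the principal skew-symmetric minors $A_{\hat{1}\hat{j}}$, while the off-diagonal cross terms $Pf(A_{\hat{1}\hat{j}})Pf(A_{\hat{1}\hat{k}})$ can be matched to the corresponding cofactors using a determinantal identity derived from the $(N-1)$-case applied to $A_{\hat{1}\hat{j}\hat{k}\hat{l}}$.
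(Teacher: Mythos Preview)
The paper does not give a proof of this theorem at all: it is stated as a cited result attributed to Muir, with no argument supplied. So there is no ``paper's own proof'' to compare against.

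Your outline follows the classical Cayley--style combinatorial argument and is essentially correct. The expansion of $\det(A)$ and the reduction to fixed-point-free permutations with only even cycles is standard and fine. The bijection between ordered pairs of perfect matchings and such permutations via the superposition graph $M\cup M'$ is the right idea. The honest gap, which you already flag, is the sign verification: you need to show, for a single even cycle of length $2k$, that summing over the $2$ alternating $2$-colorings and the orientation choices yields exactly the cycle's contribution $(-1)^{2k-1}\prod a_{i,\sigma(i)}$ after converting $a_{ji}=-a_{ij}$ where needed. This is where most written accounts spend their effort, and you should not leave it as ``conspire to give''---write out one cycle carefully and then invoke multiplicativity over disjoint cycles.

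Your alternative inductive route via the row expansion of the Pfaffian and Laplace expansion of the determinant also works, but the cross-term step (matching $Pf(A_{\hat 1\hat j})Pf(A_{\hat 1\hat k})$ to the appropriate cofactor) is not as immediate as you suggest; it typically requires an auxiliary identity of the form $Pf(A_{\hat 1\hat j})Pf(A_{\hat 1\hat k})-Pf(A_{\hat 1\hat l})Pf(\dots)=\dots$ or an appeal to the adjugate relation for skew-symmetric matrices. If you pursue that route, be prepared to prove that auxiliary identity separately.
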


\begin{remark}\label{rem:pf1}
For a given antisymmetric matrix $A$ with the entries $a_{i,j}$, if we let $Pf(i,j)=a_{i,j}$, a Pfaffian will be obtained from the Definition \ref{def:pfa}.  The notation $Pf(A)$ is usually used to express the Pfaffian for the given antisymmetric matrix $A$. In this case, we have 
$$Pf(A)=Pf(1,2,\cdots,2N),$$
where  $Pf(i,j)=a_{i,j}$. 

So far, we have used two notations for Pfaffians. In fact, there are other notations. From now on, for our convenience, we will use a simple notation $(1,2,\cdots,2N)$ with the entries $(i,j)$ if there is no confusion caused. 
\end{remark}
\begin{remark}\label{rem:pf2}
On many occasions,  there appear not only numbers such as $1,2,..$, but also some characters $a,b,...$ (or $1^*,2^*...$) in the expressions of the Pfaffian. It will be clear if one get the corresponding antisymmetric matrix in his mind. For instance,  
\[(a_1,a_2,1,2,\cdots,2N-2)=Pf(B),\] 
where $B=(b_{i,j})$ is the antisymmetric matrix of order $2N\times 2N$ with the entries defined by
\[ b_{ij}=\left\{\begin{array}{ll}
(a_1,a_2),& i=1,\ j=2,\\
(a_1,j-2), & i=1,\ 3\leq j\leq 2N,\\
(a_2,j-2),& i=2,\ 3\leq j\leq 2N,\\
(i-2,j-2),& 3\leq i, j\leq 2N.
\end{array}\right.
\]
Another example is 
\[(1,2,\cdots,N,N^*,\cdots,2^*,1^*)=Pf(C),\] 
where
$C=(c_{i,j})$ is the antisymmetric matrix of order $2N\times 2N$ with the entries defined by
\[ c_{ij}=\left\{\begin{array}{ll}
(i,j),& 1\leq i,j\leq N,\\
(i,(2N+1-j)^*), & 1\leq i\leq N,\ N+1\leq j\leq 2N,\\
((2N+1-i)^*,(2N+1-j)^*),& \ N+1\leq i,j\leq 2N.
\end{array}\right.
\]
\end{remark}

A Pfaffian has some basic properties similar to those for a determinant.
\begin{prop}
\begin{enumerate}
\item Multiplication of a row and a column by a constant is equivalent to multiplication of the Pfaffian by the same constant.
That is, 
$$(1,2\cdots,c\cdot i,\cdots,2N)=c\cdot(1,2\cdots,i,\cdots,2N).$$
\item Simultaneous interchange of the two different rows and corresponding columns changes the sign of the Pfaffian.
That is, 
$$(1,2\cdots,i,\cdots,j,\cdots,2N)=-(1,2\cdots, j,\cdots,i,\cdots,2N).$$
\item A multiple of a row and corresponding column added to another row and corresponding column does not change the value of the Pfaffian.
That is, 
$$(1,2\cdots,i+c\cdot j,\cdots,j,\cdots,2N)=(1,2\cdots, i,\cdots,j,\cdots,2N).$$
\item 
A Pfaffian admits similar Laplace expansion, that is, for a fixed $i$, $1\leq i\leq 2N$,
\begin{eqnarray*}
(1,2,\cdots 2N)&=&\sum_{1\leq j\leq 2N,j\neq i}(-1)^{i+j-1}(i,j)(1,\cdots,\hat{i},\cdots,\hat{j},\cdots,2N),
\end{eqnarray*}
where $\hat{j}$ denotes that the index $j$ is omitted. 
If we choose $i$ as $1$ or $2N$, we have
\begin{eqnarray*}
(1,2,\cdots 2N)
&=&\sum_{j=2}^{2N}(-1)^j(1,j)(2,3,\cdots,\hat{j},\cdots,2N)\\
&=&\sum_{j=1}^{2N-1}(-1)^{j+1}(1,2,\cdots,\hat{j},\cdots,2N-1)(j,2N).
\end{eqnarray*}
Furthermore, if $(a_0,b_0)=0,$ we have the expansion
\begin{equation*}
(a_0,b_0,1,2,\cdots,2N)=\sum\limits_{1\leq j<k\leq 2N}(a_0,b_0,j,k)(1,2,\cdots,\hat j,\cdots,\hat k,\cdots,2N).
\end{equation*}

\end{enumerate} 
\end{prop}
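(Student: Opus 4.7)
The plan is to derive all four stated properties directly from Definition \ref{def:pfa}, which presents $(1,2,\ldots,2N)$ as a signed sum over perfect matchings of $\{1,\ldots,2N\}$, each summand being a product of $N$ Pfaffian entries $(i_{2l-1},i_{2l})$. The structural observation that will be used throughout is that in every such matching, each index appears in exactly one factor. Properties (1)--(3) are then almost immediate, and the bulk of the work will be the sign bookkeeping for the Laplace expansion (4).

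For (1), scaling the row and column indexed by $i$ by $c$ multiplies every entry $(i,k)$ and $(k,i)$ by $c$. Since each perfect matching pairs $i$ with exactly one other element, every summand acquires precisely one factor of $c$, and the Pfaffian scales by $c$. For (2), I would invoke the identification $(\,\cdot\,)=\mathrm{Pf}(A)$ of Remark \ref{rem:pf1}: simultaneously permuting rows and columns of $A$ by a transposition $P$ replaces $A$ by $PAP^T$, and the standard identity $\mathrm{Pf}(PAP^T)=\det(P)\,\mathrm{Pf}(A)=-\,\mathrm{Pf}(A)$ gives the sign change. (Alternatively, one can produce an explicit sign-reversing bijection on the set of matchings.) Property (3) is then a formal consequence of (1) and (2): multilinearity of the Pfaffian in the $i$th row/column pair (again immediate from the definition, since each matching uses index $i$ exactly once) reduces the claim to showing that a Pfaffian with two identical rows/columns vanishes, and this follows from (2), which forces such a Pfaffian to equal its own negative.

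The main obstacle is the Laplace expansion (4), where the signs must be tracked carefully. I would fix $i$ and partition the set of perfect matchings of $\{1,\ldots,2N\}$ according to the partner $j$ of $i$. For each $j\neq i$, the matchings pairing $i$ with $j$ are in bijection with perfect matchings $M'$ of the complementary set $\{1,\ldots,\hat{i},\ldots,\hat{j},\ldots,2N\}$, and the corresponding product of entries is $(i,j)$ times the matching product of $M'$. The delicate step is to verify that the signature of the original matching $M=\{(i,j)\}\cup M'$, computed relative to the standard ordering $1,2,\ldots,2N$, factors as $(-1)^{i+j-1}$ times the signature of $M'$ relative to the induced ordering on $\{1,\ldots,\hat{i},\ldots,\hat{j},\ldots,2N\}$; this is an adjacent-transposition count ($i-1$ swaps to bring $i$ to the front, then $j-2$ swaps to bring $j$ to second position when $j>i$, with the $j<i$ case handled symmetrically and the antisymmetry $(i,j)=-(j,i)$ absorbing any discrepancy). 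Summing over $j$ yields the claimed general expansion; specializing $i=1$ or $i=2N$ gives the two displayed formulas.

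For the final identity with $(a_0,b_0)=0$, I would apply the general expansion to the Pfaffian $(a_0,b_0,1,2,\ldots,2N)$ along the row indexed by $a_0$: the term in which $a_0$ pairs with $b_0$ drops out by hypothesis, leaving a sum indexed by $k\in\{1,\ldots,2N\}$ that pairs $a_0$ with $k$. A second application of Laplace expansion, this time to each of the resulting sub-Pfaffians along the row of $b_0$, regroups terms into pairs $(j,k)$ with $j<k$, and a short combinatorial check shows the signs from the two expansions combine to produce exactly the factor $(a_0,b_0,j,k)$ multiplied by the complementary Pfaffian $(1,\ldots,\hat{j},\ldots,\hat{k},\ldots,2N)$. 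This yields the stated formula and completes the proof.
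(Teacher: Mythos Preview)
Your proof sketch is correct and would constitute a valid argument. However, there is nothing to compare against: the paper does not prove this Property at all. It is stated in the appendix as a list of basic, well-known facts about Pfaffians (``A Pfaffian has some basic properties similar to those for a determinant''), with a reference to Hirota's book \cite{hirota2004direct} for background, and no proof is supplied. So your derivation from Definition~\ref{def:pfa} is not merely a different route---it is the only route on offer. The arguments you outline (each index appears in exactly one pair of every matching for (1) and multilinearity; $\mathrm{Pf}(PAP^T)=\det(P)\,\mathrm{Pf}(A)$ for (2); combining these for (3); partitioning matchings by the partner of a fixed index and counting adjacent transpositions for the Laplace expansion (4); iterating the expansion for the final $(a_0,b_0)$ identity) are all standard and sound.
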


\begin{remark}
\textit{Pfaffian} is a more general algebraic tool than \textit{determinant}. In fact, every N-order determinant can be expressed by a Pfaffian. Given a determinant
\begin{equation*}
B=|(b_{i,j})|_{N\times N},
\end{equation*}
then it can be expressed by a Pfaffian
 \begin{equation*}
B=(1,2,\cdots,N,N^*,\cdots,2^*,1^*),
\end{equation*}
where the Pfaffian entries $(i,j)$, $(i^*,j^*)$, $(i,j^*)$ are defined by
\begin{equation*}
(i,j)=(i^*,j^*)=0,\ \ (i,j^*)=b_{i,j},\ 1\leq i,j\leq N.
\end{equation*}
Conversely, it fails.
\end{remark}

\subsection{Formulae related to determinants and Pfaffians}
There are many interesting connections between some determinants and Pfaffians.

\begin{enumerate}[I.]
\item The result below is due to
de Bruijn \cite{de1955some}.
For the even case,
\begin{align}
\idotsint\limits_{-\infty<\tau_1<\cdots<\tau_{2N}<t} \det[\varphi_i(\tau_j)]_{i,j=1,\cdots,2N}d\tau_1\cdots d\tau_{2N}=(1,2,\cdots, 2N), \label{id_deBr1}
\end{align}
where
$$
(i,j)={\iint\limits_{-\infty<\sigma<\tau<t}}\left[ \varphi_i(\sigma)\varphi_j(\tau)-\varphi_j(\sigma)\varphi_i(\tau)\right] d\sigma d\tau.
$$
For the odd case,
\begin{align}
\idotsint\limits_{-\infty<\tau_1<\cdots<\tau_{2N+1}<t} \det[\varphi_i(\tau_j)]_{i,j=1,\cdots,2N+1}d\tau_1\cdots d\tau_{2N+1}=(d_0,1,2,\cdots, 2N+1), \label{id_deBr2}
\end{align}
where
$$
(d_0,i)=\int\limits_{-\infty<\sigma<t}\varphi_i(\sigma)d\sigma, \qquad (i,j)={\iint\limits_{-\infty<\sigma<\tau<t}}\left[ \varphi_i(\sigma)\varphi_j(\tau)-\varphi_j(\sigma)\varphi_i(\tau)\right] d\sigma d\tau.
$$

\item The Cauchy-Binet formula is a well-known result on minor-summation for a determinant. There also exist minor-summation formulae for a Pfaffian, which was given by Ishikawa and Wakayama \cite[Theorem 1]{ishikawa1995minor}. We restate it as follows:

For convenience, we will employ the notation $Pf(A)$ with a skew-symmetric matrix $A$ for a Pfaffian. For any matrix $B$ and some valid ascending index sets $I,J$, $B_I^J$ denotes the sub-matrix formed from the remaining rows $I$ and columns $J$ of $B$.

Let $k$ and $N$ be integers such that $k\leq N$. Let $R = (r_{il})_{1\leq k,1\leq l\leq N}$ be a $k\times N$ rectangular matrix.

If $k$ is even and $A = (a_{ij})_{1\leq i,j\leq N}$ is a skew-symmetric matrix of size $N$, then 
\begin{equation}
\sum_{I\in\left(\substack{[1,N]\\k}\right)}Pf(A_I^I)\det(R_I)=Pf(Q),\label{id:minor_even}
\end{equation}
where $Q=(q_{ij})_{1\leq i,j\leq k}=RAR^T$. More exactly, the entries of Q are given by
\[
q_{i,j}=\left\{\begin{array}{ll}
0,&i=j,\\
\sum\limits_{l_1=1}^n\sum\limits_{l_2=1}^nr_{il_1}a_{l_1l_2}r_{jl_2},&i\neq j.
\end{array}\right.
\]

If $k$ is odd and $\tilde A = (\tilde a_{ij})_{0\leq i,j\leq N}$ is a skew-symmetric matrix of size $N+1$, then 
\begin{equation}
\sum_{I\in\left(\substack{[1,N]\\k}\right)}Pf(\tilde A_{0,I}^{0,I})\det(R_I)=Pf(\tilde Q),\label{id:minor_odd}
\end{equation}
where  $\tilde Q=(\tilde q_{ij})_{0\leq i,j\leq k}$ is a skew-symmetric matrix of size $k+1$ with the entries
\[
\tilde q_{ij}=\left\{\begin{array}{ll}
0,&i=j,\\
\sum\limits_{l=1}^nr_{jl}\tilde a_{0l}&i=0,\  1\leq j\leq k,\\
\sum\limits_{l=1}^nr_{il}\tilde a_{l0} &j=0,\  1\leq i\leq k,\\
\sum\limits_{l_1=1}^n\sum\limits_{l_2=1}^nr_{il_1}\tilde a_{l_1l_2}r_{jl_2},&i\neq j, \ 1\leq i,j\leq n.

\end{array}\right.
\]
\begin{remark}
When $k=N$, \eqref{id:minor_even} reduces to the well-known formula
\[
Pf(A)\det(R)=Pf(RAR^T).
\]
\end{remark}

\begin{remark}
\eqref{id:minor_even} also implies the Cauchy-Binet formula for determinants.
Please see \cite[Corollary  2.2]{ishikawa1996applications} for more details.
\end{remark}


\end{enumerate}

\subsection{Formulae for special Pfaffians}
 The following formulae play important roles in our paper. 
\begin{enumerate}[I.]

\item
 A central role is played by Schur's Pfaffian identity\cite{ishikawa2006generalizations,schur1911uber} in enumerative combinatorics and representation theory, which reads
\begin{equation}
(1,2,\cdots, 2N)=\prod_{1\leq i<j\leq 2N}\frac{s_i-s_j}{s_i+s_j}, \qquad \text{where} \qquad (i,j)=\frac{s_i-s_j}{s_i+s_j},\label{id_schur}
\end{equation}
for even case. In the odd case, if $(a_0,i)=1$, we have 
\begin{equation}
(a_0,1,2,\cdots, 2N-1)=\prod_{1\leq i<j\leq 2N-1}\frac{s_i-s_j}{s_i+s_j}, \qquad \text{where} \qquad (i,j)=\frac{s_i-s_j}{s_i+s_j}.\label{id_schur_odd}
\end{equation}

\item
Some derivative formulae for some special Pfaffians \cite{hirota2004direct} hold.
\begin{enumerate}[(1)]
\item If the $x$-derivative of a Pfaffian entry $(i,j)$ satisfy $\frac{\partial}{\partial x}(i,j)=(i+1,j)+(i,j+1)$, then 
\begin{align}
\frac{\partial}{\partial t}(i_1,\cdots,i_{2N})&=\sum_{k=1}^{2N}(i_1,i_2,\cdots,i_k+1,\cdots,i_{2N}),\label{der1_gen}
\end{align}
which gives as a special case
\begin{align}
\frac{\partial}{\partial x}(1,2,\cdots,2N)&=(1,2,\cdots,2N-1,2N+1) .\label{der1}
\end{align}
If $\frac{\partial}{\partial x}(a_0,i)=(a_0,i+1),$ then
\begin{align}
\frac{\partial}{\partial x}(a_0,1,2,\cdots,2N-1)&=(a_0,1,2,\cdots,2N-2,2N) .\label{der1_odd}
\end{align}

Similarly, if $\frac{\partial}{\partial x}(i,j)=(i-1,j)+(i,j-1)$, then 
\begin{align}
\frac{\partial}{\partial t}(i_1,\cdots,i_{2N})&=\sum_{k=1}^{2N}(i_1,i_2,\cdots,i_k-1,\cdots,i_{2N}),\label{der1_2_gen}
\end{align}
which gives as a special case
\begin{align}
\frac{\partial}{\partial x}(1,2,\cdots,2N)&=(0,2,3,\cdots,2N) .\label{der1_2}
\end{align}
If $\frac{\partial}{\partial x}(a_0,i)=(a_0,i-1),$ then
\begin{align}
\frac{\partial}{\partial x}(a_0,1,2,\cdots,2N-1)&=(a_0,0,2,\cdots,2N-1) .\label{der2_odd}
\end{align}

\item If  $\frac{\partial}{\partial x} (i,j)=(a_0,b_0,i,j)$ and $(a_0,b_0)=0$, then
    \begin{align}
        \frac{\partial}{\partial x} (i_1,i_2,\cdots,i_{2N})=(a_0,b_0,i_1,i_2,\cdots,i_{2N}),\label{der2_1_gen}
    \end{align}
which gives as a special case
\begin{align}
    \frac{\partial}{\partial x} (1,2,\cdots,2N)=(a_0,b_0,1,2,\cdots,2N).\label{der2_1}
    \end{align}
    If $\frac{\partial}{\partial x} (i,j)=(a_0,b_0,i,j)$ , $\frac{\partial}{\partial x} (a_0,j)=(b_0,j)$, then
     \begin{align}
        \frac{\partial}{\partial x} (a_0,i_1,i_2,\cdots,i_{2N-1})=(b_0,i_1,i_2,\cdots,i_{2N-1}),\label{der2_2_gen}
    \end{align}
which gives as a special case
      \begin{equation}
    \frac{\partial}{\partial x} (a_0,1,2,\cdots,2N-1)=(b_0,1,2,\cdots,2N-1).\label{der2_2}
    \end{equation}
\end{enumerate}

\begin{remark}\label{rem:wg}
The first type formula can be named as the Wronski-type formula  since it is similar to the Wronskian determinant. The second type is referred as Gram-type formula. See \cite{hirota2004direct} for more details.
\end{remark}
\begin{remark}
Usually, these derivative formulae for the general cases are derived by induction. In the main body, we will use the induction to develop some extra derivative formulae.
\end{remark}
\end{enumerate}

\subsection{Bilinear identities on Pfaffians}
As is known, there exist many identities in the theory of determinants. As a more generalized mathematical tool than determinants, Pfaffians also allow various kinds of identities. We will give several common Pfaffian identities in the following, which can be found in \cite{hirota2004direct}.
\begin{align*}
&(a_1,a_2,\cdots,a_{2k},1,\cdots,2N)(1,2,\cdots,2N)\nonumber\\
=&\sum_{j=2}^{2k}(-1)^j(a_1,a_j,1,\cdots,2N)(a_2,a_3,\cdots,\hat{a_j},\cdots,a_{2k},1,\cdots,2N),\\
&(a_1,a_2,\cdots,a_{2k-1},1,\cdots,2N-1)(1,2,\cdots,2N)\nonumber\\
=&\sum_{j=1}^{2k-1}(-1)^{j-1}(a_j,1,\cdots,2N-1)(a_1,a_2,\cdots,\hat{a_j},\cdots,a_{2k-1},1,\cdots,2N).
\end{align*}
In the particular case of $k=2$, the above identities respectively become
\begin{align}
&(a_1,a_2,a_3,a_4,1,\cdots,2N)(1,2,\cdots,2N)\nonumber\\
=&\sum_{j=2}^{4}(-1)^j(a_1,a_j,1,\cdots,2N)(a_2,\hat{a}_j,a_{4},1,\cdots,2N),\label{pf1}\\
&(a_1,a_2,a_3,1,\cdots,2N-1)(1,2,\cdots,2N)\nonumber\\
=&\sum_{j=1}^{3}(-1)^{j-1}(a_j,1,\cdots,2N-1)(a_1,\hat{a}_j,a_{3},1,\cdots,2N),\label{pf2}
\end{align}
which are much more practical. 

Let us mention that Jacobi identity and Pl\"{u}cker relation \cite{aitken1959determinants,hirota2004direct} may be viewed as the particular cases of the above Pfaffian identities.


 \end{appendix}
    
\small
\bibliographystyle{abbrv}

\def\cydot{\leavevmode\raise.4ex\hbox{.}}
  \def\cydot{\leavevmode\raise.4ex\hbox{.}} \def\cprime{$'$}

\end{document}